\documentclass[11pt,oneside,letterpaper]{article}
\usepackage{amssymb}
\usepackage{amsmath}
\usepackage{amsthm}
\usepackage[dvips]{graphicx}
\graphicspath{{./fig/}}
\usepackage{setspace}
\usepackage{amsfonts}
\usepackage{fancyhdr}
\usepackage{xcolor}
\usepackage{graphicx}
\usepackage{rotating}
\usepackage{comment}
\usepackage{color}
\usepackage{cite}
\usepackage{braket}
\usepackage{overpic}

\usepackage[export]{adjustbox}
\usepackage{cancel}

\definecolor{darkgreen}{rgb}{0,0.5,0}
\definecolor{darkblue}{rgb}{0,0,0.6}
\definecolor{purple}{rgb}{0.4,.2,0.7}
\definecolor{pastelone}{rgb}{0.78,0.50,0.45}
\definecolor{pasteltwo}{rgb}{0.42,0.62,0.82}
\definecolor{pastelthree}{rgb}{0.48,0.70,0.48}

\makeatletter
\newcommand{\namedlabel}[2]{\phantomsection\def\@currentlabel{#2}\label{#1}}
\makeatother

\usepackage[normalem]{ulem}
\usepackage[shortlabels]{enumitem}
\usepackage[colorlinks=true,citecolor=darkgreen,linkcolor=black,urlcolor=purple]{hyperref}
\usepackage{sidecap}
\usepackage{pdfsync}

\allowdisplaybreaks  

\numberwithin{equation}{section}

\newtheorem{theorem}{Theorem}
\newtheorem{proposition}[theorem]{Proposition}
\newtheorem{lemma}[theorem]{Lemma}
\newtheorem{corollary}[theorem]{Corollary}
\newtheorem{conjecture}[theorem]{Conjecture}
\theoremstyle{definition}
\newtheorem{definition}[theorem]{Definition}
\newtheorem{example}[theorem]{Example}
\theoremstyle{remark}
\newtheorem{remark}[theorem]{Remark}

\DeclareMathOperator{\tr}{Tr}
\DeclareMathOperator{\id}{id}
\DeclareMathOperator{\Sym}{Sym}

\begin{document}
\onehalfspacing

\begin{center}

~
\vskip5mm

{\LARGE \bf  {Multi-entropy in random tensor networks}}

\vskip10mm

Miao Hu{${}^{1}$}, Simon Lin{${}^{2}$}, Ion Nechita{${}^{1}$},
\vskip5mm

{${\ }^{1}$}{\it Laboratoire de Physique Th\'eorique, Universit\'e de Toulouse, CNRS, UPS, France}\\
\vskip5mm
{${\ }^{2}$}{\it  New York University Abu Dhabi, Abu Dhabi, P.O. Box 129188, United Arab Emirates}
\vskip5mm

\href{mailto:miao.hu@irsamc.ups-tlse.fr}{\texttt{miao.hu@irsamc.ups-tlse.fr}},
\href{mailto:simonlin@nyu.edu}{\texttt{simonlin@nyu.edu}},
\href{mailto:ion.nechita@univ-tlse3.fr}{\texttt{ion.nechita@univ-tlse3.fr}}

\end{center}

\vspace{4mm}

\begin{abstract}
\noindent
We study the evaluation of R\'enyi multi-entropies $S^{(q)}_n$ in Random Tensor Network (RTN) states in the large bond-dimension limit. For the case of R\'enyi index $n=2$ and arbitrary number of parties $q$, we prove that that multi-entropies are determined by minimal multiway cuts through the network. When the minimal multiway cut is degenerate, we characterize the full minimizer set via compatible families of minimal cuts and give a criterion for all minimizers to come from ordinary cut partitions. For $n=2$, this gives a natural generalization of the minimal cut description of bipartite entanglement to multipartite systems with arbitrarily many parties. For the case of integer $n>2$, we show that the minimal multiway cut conjecture is in general \emph{not true} by providing explicit counter examples for both the single random tensor and for the network built from isometric tilings. We discuss the implication for our results on the multipartite entanglement structures in RTN and holography.
\vspace{1cm}
\begin{equation*}
        \includegraphics[width=0.4\linewidth]{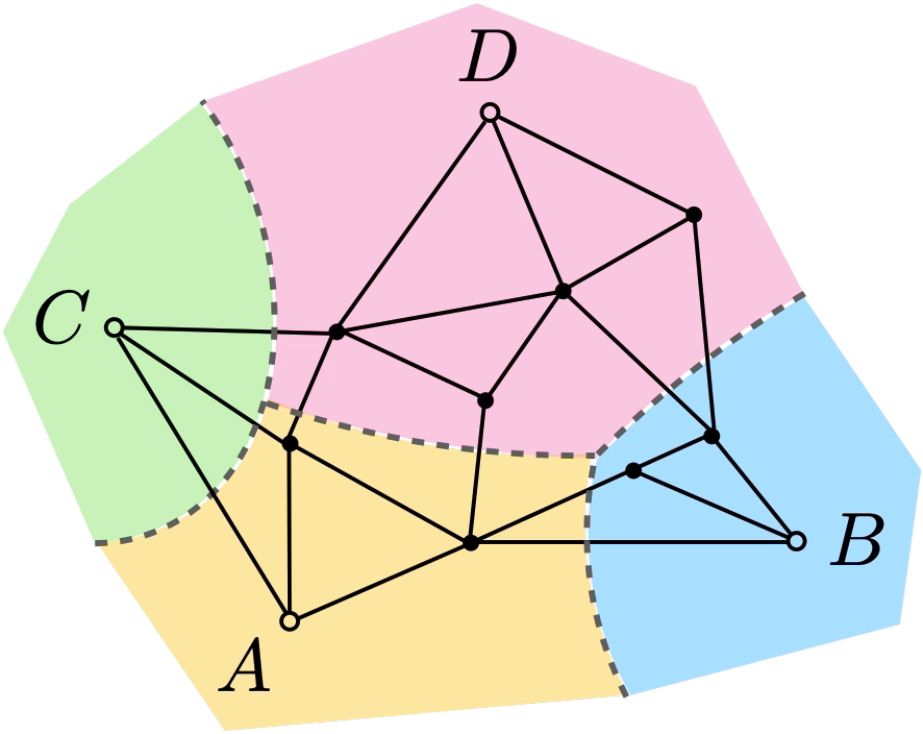}
\end{equation*}
 \end{abstract}

\pagebreak
\pagestyle{plain}

\setcounter{tocdepth}{2}
{}
\vfill
\tableofcontents

\newpage

\section{Introduction and summary of results}

It is now well understood that quantum entanglement plays a pivotal role in the emergence of spacetime in holography and quantum gravity. One of the most prominent example is the Ryu--Takayanagi (RT) formula \cite{Ryu:2006bv,Ryu:2006ef}, relating the area of minimal surfaces in the bulk to entanglement entropy of the boundary theory. The RT formula marked the beginning of the ``geometry from entanglement'' paradigm, where one expresses the entanglement data of the boundary CFT in terms of various  geometric objects in the bulk\cite{Hubeny:2007xt,Dong:2016fnf,Takayanagi:2017knl,Umemoto:2018jpc,Bao:2019zqc,Dutta:2019gen,Dong:2021clv,Nakata:2020luh}.

Random tensor networks (RTNs) \cite{Collins:2013ubu,Hayden:2016cfa} model this behavior by contracting randomly sampled tensors according to a graph $G=(E,V)$ that plays the role of the underlying bulk geometry. This defines a state on a boundary Hilbert space associated with a selected set of boundary vertices $\partial\subset V$. The RT formula then arises as the area of a minimal cut across $G$ that divides the boundary into two disconnected components. See Figure \ref{fig:TN_cuts_intro}a. Although RTNs have their limitations when it comes to time dependent or covariant physics \cite{Dolev:2021ofc,Faist:2019ahr,Engelhardt:2023bpv}\footnote{See also \cite{Apel:2021tnn,Kohler:2018kqk,Balasubramanian:2025rcr} for some attempts to circumvent this issue.}, it remains a good toy model for demonstrating the ``geometry from entanglement'' paradigm and is believed to accurately capture the physics of fixed--area states \cite{Akers:2018fow,Dong:2018seb} in holography.

While most of our understanding of the holographic principle in this area has insofar come from the entanglement entropy, the entanglement entropy by itself is a rather coarse measure of entanglement in the sense that it only captures the amount of \emph{bipartite} entanglement between two different subsystems. Moreover, in recent years, there has been increasingly compelling evidence suggesting that multi-partite entanglement is more common in holographic states than previously believed \cite{Akers:2019gcv,Harper:2021uuq,Li:2025nxv}, and that it may play important roles in the holographic dictionary \cite{Iizuka:2025bcc}.

\begin{figure}[th]
  \centering
  \includegraphics[width=.8\linewidth]{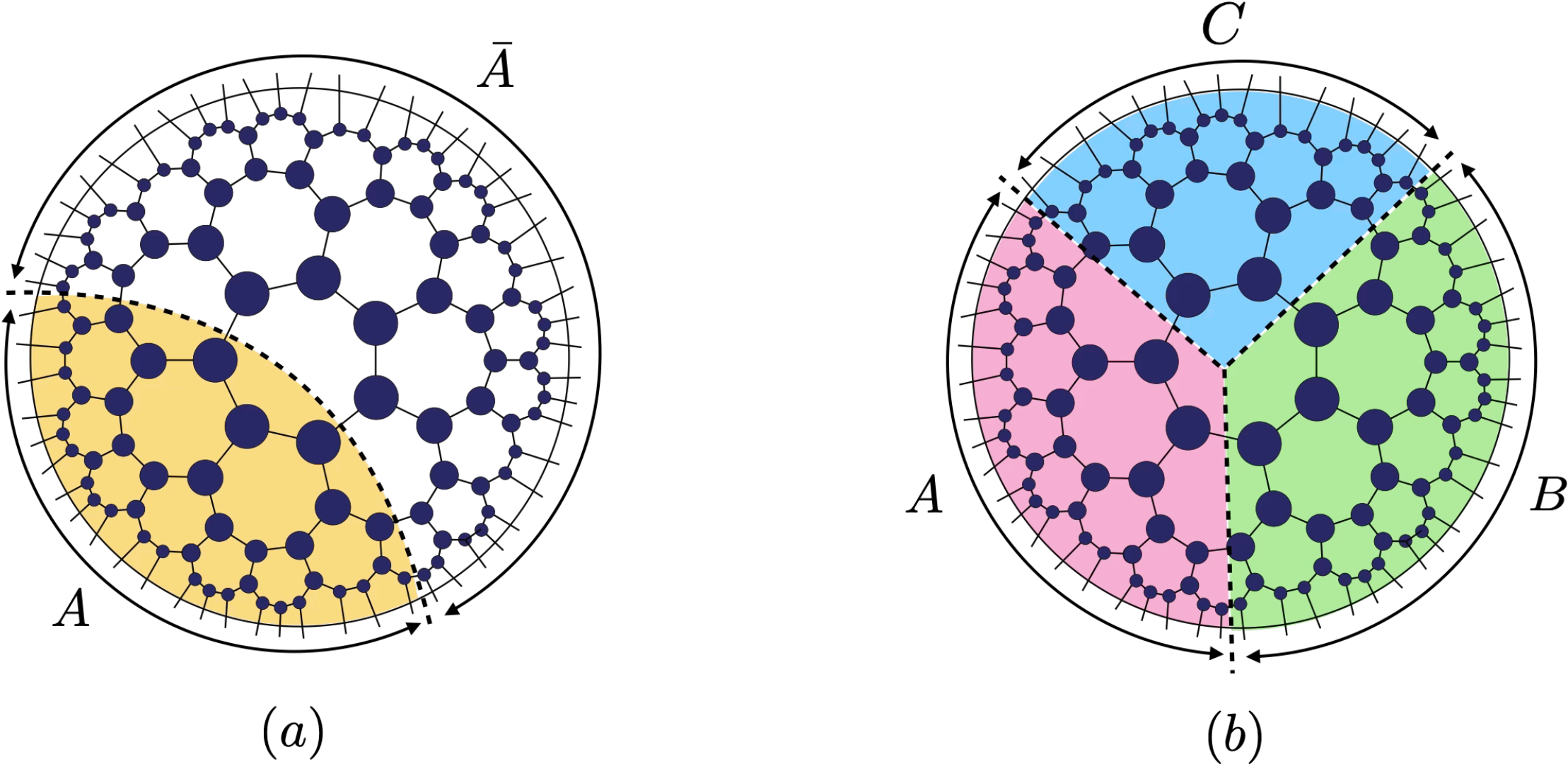}
  \caption{We tile the hyperbolic disk isometrically using a graph. Each dark blue node represents a random tensor and the connecting edges indicate tensor contractions. The random tensor network defines a state on the Hilbert space supported on the boundary, represented here by dangling legs. (a) The geometric dual of the entanglement entropy: The entropy of region $A$ is proportional to the area of the minimal surface (black dashed) that divides $A$ and its complement $\bar{A}$. (b) The proposed geometric dual for the multi-entropy. Here we take $q=3$ as an example. The tripartite multi-entropy between region $A:B:C$ is proportional to the area of the minimal multiway cut (also black dashed) that divides the three boundary regions $A,B$ and $C$.}
  \label{fig:TN_cuts_intro}
\end{figure}

To better study multi-partite entanglement in holography, it is important that we look at suitable quantifiers of multi-partite entanglement. A natural candidate to consider is the \emph{multi-entropy} $S^{(q)}$, introduced in Ref. \cite{Gadde:2022cqi} as a natural generalization of the entanglement entropy to $q$-partite systems. It has been shown that $S^{(q)}$ is sensitive to multi-partite entanglement. Furthermore, by taking suitable linear combinations of different multi-entropies, one can construct measures that act as signals for genuinely multi-partite entangled states, i.e. quantities that vanish if the state can be factorized as products of lower partite entangled states \cite{Iizuka:2025ioc,Iizuka:2025caq,Gadde:2026msg}.\footnote{See \cite{Harper:2025uui,Berthiere:2025toi,Yuan:2025dgx,Ju:2025eyn,Anegawa:2025prn,Iizuka:2025elr,Iizuka:2025pqq,Chen:2026xtx,DelZotto:2026fpw,Iizuka:2026ahd,Iizuka:2026qqg} for a selection of recent developments on the multi-entropy.}

One immediate question is whether multi-entropy satisfies some generalized version of RT formula for holographic states. The standard argument for the RT formula, \'a la Lewkowycz--Maldacena \cite{Lewkowycz:2013nqa}, involves the replica trick for the R\'enyi entropy $S_n$ and a suitable analytic continuation $n\to 1$ of the corresponding bulk geometry. Applying the same procedure to the R\'enyi version of the multi-entropy, one finds that $S^{(q)}$ admits a natural bulk geometric dual as \emph{minimal multiway cuts} \cite{Gadde:2022cqi}: The minimal area of the ``soap film'' that is anchored on the interface between boundary regions. See Figure \ref{fig:TN_cuts_intro}b for an illustration.

One critical assumption in the derivation of Ref. \cite{Lewkowycz:2013nqa} is that the dominant bulk saddle for the R\'enyi entropies must be \emph{replica symmetric}. While this is indeed true for the case of the (R\'enyi) entanglement entropy, its validity for (R\'enyi) multi-entropy is doubtful. Indeed, Ref. \cite{Penington:2022dhr} gave a counter example for $(q,n)=(3,3)$ where the dominant saddle is given by a handle-body and not replica symmetric. For arbitrary $q$, Ref. \cite{Gadde:2024taa} demonstrated the existence of a replica-symmetric saddle for $n=2$ but not for higher $n$ in general, thus putting a big question mark on the analytic continuation required to obtain the generalized RT formula.

The situation is slightly more optimistic if one instead considers RTN states (or fixed-area states), since the bulk geometry for these states are more or less ``fixed'' so one can exclude bulk handle-body solutions. As such, it has been conjectured that the multi-entropy in RTN states are given by the area of minimal multiway cuts \cite{Iizuka:2024pzm,Carrozza:2026qcf}. In particular, the case $(q,n)=(3,2)$ has been proven in Ref. \cite{Penington:2022dhr} for RTNs with a unique minimum multiway cut.\footnote{The $(q,n)=(3,2)$ R\'enyi multi-entropy is also closely related to the $(m,n)=(2,2)$ R\'enyi reflected entropy \cite{Akers:2024pgq} and the $n=2$ computable cross norm \cite{Milekhin:2022zsy}, where similar results has been obtained.}
However, fully determining the multi-entropy still requires summing over all possible replica permutations, and it is not clear that one can recover the multiway cut at the end of the day. There is limited evidence supporting the multiway cut conjecture. For example, Ref. \cite{Iizuka:2024pzm} performed exact searches for low $(q,n)$ and found that the conjecture is valid for the cases examined.\footnote{See also \cite{Akella:2026bci} for related discussion on the effect of replica-symmetry breaking of multi-entropy on holographic RTNs.} A full systematic treatment remains absent for general cases.

In this paper, we solve the problem of determining R\'enyi multi-entropies in various RTN states. Working in the limit of large bond dimensions, we establish the following results:
\begin{itemize}
\item For the R\'enyi index $n=2$ and arbitrary number of parties $q$, we prove the multiway cut conjecture for multi-entropy. More specifically, we show that the $q$-partite R\'enyi multi-entropy $S^{(q)}_2$ is proportional to the area of minimal multiway cuts across the network.

\item For the R\'enyi index $n=2$ and arbitrary number of parties $q$, we show that there exists a unique minimal saddle given that the minimal mutliway cut of the network is unique. For the case where the minimal cut is not unique, we give criteria where the minimal saddles must satisfy. We explicitly count the number of minimal saddles for some simple classes of networks.

\item For higher integer R\'enyi index $n>2$, we show that the multiway cut conjecture is \emph{false} by giving an explicit counter example for each $(q,n)$ pair. Our counter example consists of a $q$-partite single random tensor, where we explicitly demonstrate that there exists a more dominant saddle than the one predicted by minimal multiway cut.\footnote{With the sole exception of $(q,n)=(3,3)$, where our counter example gives the same estimate as the replica-symmetric one.} We argue that the existence of the counter example in the single random tensor setting implies the existence of a whole class of replica-symmetry-broken saddles for RTNs defined on isometric tilings of a metric space for large enough $n$.

\end{itemize}

This paper is organized as follows:
In Section \ref{sec:prelim} we present a brief review of materials necessary for proving our main results.
In Section \ref{sec:n=2} we give a proof for the multiway cut conjecture for $n=2$ R\'enyi multi-entropies. We first present the proof with a for  a single random tensor before moving on to deal with RTN states defined on arbitrary graphs.
In Section \ref{sec:minimizer} we further develop our results for the $n=2$ case, focusing on the networks with degenerate minimizers. We provide a list of criteria a minimizer must satisfy. This allows us to explicitly count the number of minimizers for some simple class of RTNs.
In Section \ref{sec:n>2} we deal with the higher $n>2$ R\'enyi multi-entropies. We present a counter-example to the multiway cut conjecture in a simple one tensor network and discuss its implication on more general RTNs.
We end this paper with discussion and future directions in Section \ref{sec:discussion}.

\section{Preliminaries}
\label{sec:prelim}
We begin with a brief review of materials necessary for proving our main results.

\subsection{Multi-entropy}
Consider a quantum state $\ket{\psi}$ in a finite dimensional Hilbert space $\mathcal{H}$. Suppose that $\mathcal{H}$ can be written as the tensor product of $q$ different subspaces
\begin{equation}
  \mathcal{H} = \mathcal{H}_1 \otimes \mathcal{H}_2 \otimes \cdots \otimes \mathcal{H}_q.
\end{equation}
We can decompose $\ket{\psi}$ using
\begin{equation}
  \label{eq:psi_decomp}
  \ket{\psi} = \sum_{i_1}\sum_{i_2}\cdots \sum_{i_q} \psi_{i_1i_2\cdots i_q}\ket{e^{i_1}_1} \ket{e^{i_2}_2} \cdots \ket{e^{i_q}_q},
\end{equation}
where $\ket{e^{i}_a}$ represents a set of basis vectors in $\mathcal{H}_a$.
The coefficient $\psi_{i_1 \cdots i_q}$ can be viewed as a rank-$q$ tensor that transforms contravariantly under the action of local unitaries (LU). In contrast, the complex conjugate $\bar{\psi}^{j_1\cdots j_q}$ are the coefficient of the dual vector $\bra{\psi}$, which transforms covariantly under LU. One can build quantities known as \emph{multi-invariants} \cite{Gadde:2024taa} that is LU-invariant by fully contracting an equal number of $\psi$ and $\bar{\psi}$ along their common indices.
Let $N$ be the number of $\psi$ copies (or equivalently, the number of $\bar{\psi}$ copies). We can construct any contraction pattern by specifying how each individual subspaces are contracted, using an element in the permutation group $\Sym_N$ (often called the \emph{replica symmetry group}). There are $q$ subsystems, and thus any multi-invariant is uniquely specified by $q$ elements $g_1,\cdots g_q\in \Sym_N$ (often called the \emph{twist operators}), as shown in the following definition:
\begin{definition}[multi-invariants]
  \label{def:multi-invariants}
  Let $\ket{\psi}$ be a $q$-partite pure state with decomposition of \eqref{eq:psi_decomp} and $g_1,\cdots g_q\in \Sym_N$ be a list of permutations in the replica symmetry group. The multi-invariant $\mathcal{Z}$ associated to the list $\{g_a\}$ is defined as
  \begin{equation}
    \mathcal{Z}(g_1,\cdots g_q) = \Big(\psi_{i^{(1)}_1\cdots i^{(1)}_q}\cdots\psi_{i^{(N)}_1\cdots i^{(N)}_q}\Big) \Big(\bar{\psi}^{i^{(g_1\cdot1)}_1\cdots i^{(g_q\cdot1)}_q}\cdots\bar{\psi}^{i^{(g_1\cdot N)}_1\cdots i^{(g_q\cdot N)}_q}\Big),
  \end{equation}
  where repeated indices are summed over.
  Or equivalently, let $\rho_{i_1\cdots i_q}^{j_1\cdots j_q}=\psi_{i_1\cdots i_q}\bar{\psi}^{j_1\cdots j_q}$ be the coefficients for the density matrix $\rho=\ket{\psi}\bra{\psi}$, we have
  \begin{equation}
    \mathcal{Z}(g_1,\cdots g_q) = \rho_{i^{(1)}_1\cdots i^{(1)}_q}^{i^{(g_1\cdot 1)}_1\cdots i^{(g_q\cdot1 )}_q}\cdots\rho_{i^{(N)}_1\cdots i^{(N)}_q}^{i^{(g_1\cdot N)}_1\cdots i^{(g_q\cdot N )}_q}.
  \end{equation}
\end{definition}
Definition~\ref{def:multi-invariants} is a bit unwieldy to work with, so let us look at some examples.
\begin{example}[R\'enyi entropy]
Let $\ket{\psi}\in \mathcal{H}_A\otimes \mathcal{H}_B$ be a bipartite state. The bipartite R\'enyi entropy is $S_n(A)=\frac{1}{1-n}\ln (\tr \rho^n_A)$ where $\rho_A=\tr_B\ket{\psi}\bra{\psi}$ is the reduced density matrix. The moment invariant $\tr \rho^n_A$ is an LU-invariant with twist operators
\begin{equation}
  g_A = \tau_n = (12\cdots n), \quad g_B = \id.
\end{equation}
The replica group for the moment is $\Sym_n$.
\end{example}
\begin{example}[Entanglement negativity \cite{Vidal:2002zz}]
  \label{ex:negativity}
  Let $\ket{\psi}\in \mathcal{H}_A \otimes \mathcal{H}_B \otimes \mathcal{H}_C$ be a tripartite state and $\rho_{AB}=\tr_C \ket{\psi}\bra{\psi}$ be the reduced density matrix. The R\'enyi negativity is defined as (the logarithm of) the moments of the partial transpose $\rho^{T_B}_{AB}$. The twist operators for the R\'enyi negativity is
  \begin{equation}
    g_A = \tau_n = (12\cdots n),\quad g_B = \tau^{-1}_n = (n\cdots 21), \quad g_C=\id.
  \end{equation}
  The replica group for the partial transpose moment is also $\Sym_n$.
\end{example}
\begin{example}[Reflected entropy \cite{Dutta:2019gen}]
  \label{ex:S_R}
  Let $\ket{\psi}$ and $\rho_{AB}$ be the same as the previous example. For integer $n$ and even integer $m$, the R\'enyi reflected entropy is the R\'enyi entropy $S_n(AA^*)$ for $\ket{\rho_{AB}^{m/2}}$, viewed as a vector in the doubled Hilbert space $\mathcal{H}_{AB}\otimes H_{A^*B^*}$ via the usual state-operator correspondence. The twist operators for the R\'enyi reflected entropy is
  \begin{align}
  \begin{split}
    g_A &= (1\cdots m)(m+1\cdots2m)\cdots (nm-m+1\cdots mn),\\
    g_B &= (\tfrac{m}{2}+1\cdots \tfrac{3m}{2})(\tfrac{3m}{2}+1\cdots \tfrac{5m}{2})\cdots(nm-\tfrac{m}{2}+1\cdots \tfrac{m}{2}), \\ g_C&=\id.
  \end{split}
  \end{align}
  The replica group for R\'enyi reflected entropy is $\Sym_{mn}$.
\end{example}

Before proceeding, let us introduce a diagrammatic notation for better visualization of the multi-invariants. Notice that $\mathcal{Z}$ is invariant when we multiply all the twist operators by an arbitrary permutation $g_a\to hg_a$. We can use this freedom to fix one of them (say $g_q$) to be the identity permutation. In terms of the density matrix $\rho=\ket{\psi}\bra{\psi}$, this corresponds to taking the partial trace over the Hilbert space $\mathcal{H}_q$.
We denote the reduced density matrix $\rho_{i_1\cdots i_{q-1}}^{j_1\cdots j_{q-1}}$ after tracing out $\mathcal{H}_q$ as a vertex with $2(q-1)$ legs. Each leg corresponds to one index $i_a$ or $j_a$. The multi-variant $\mathcal{Z}$ is constructed by first laying out $N$ copies of vertices and then contracting the legs according to the order given by the twist operators. See Fig.~\ref{fig:contract} for an illustration.
\begin{figure}[t]
  \centering
  \includegraphics[width=.75\linewidth]{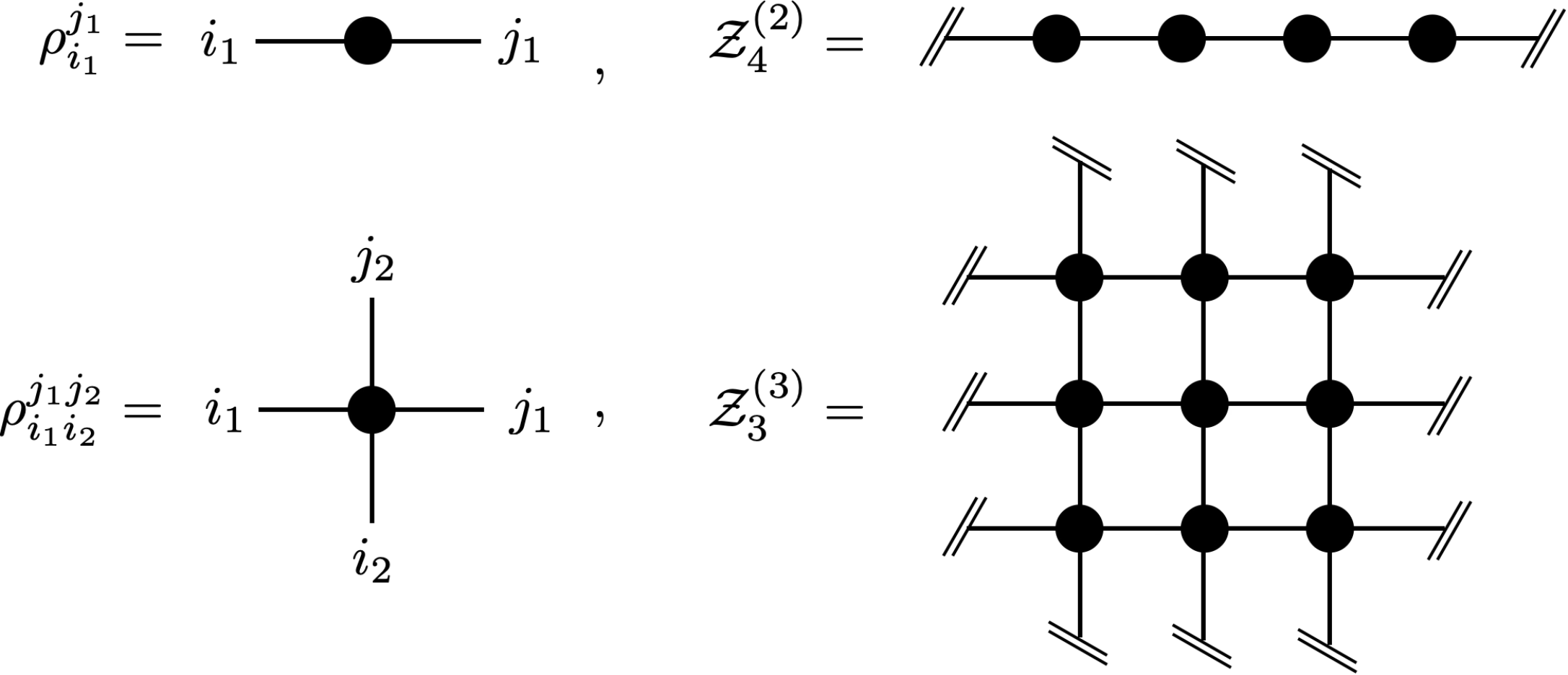}
  \caption{The diagrammatic notation for the reduced density matrix $\rho^{j_1\cdots j_q}_{i_1\cdots i_q}$, and the contraction pattern of multi-invariant $\mathcal{Z}^{(q)}_n$ used in defining the multi-entropy.}
  \label{fig:contract}
\end{figure}

The R\'enyi multi-entropy is a natural generalization for the bipartite R\'enyi entropy to higher partite systems \cite{Gadde:2022cqi}. The twist operators acts as translations on a periodic hypercube lattice of $(q-1)$ dimensions.
\begin{definition}[R\'enyi multi-entropy]
\label{def:mutli_ent}
  Let
  \begin{equation}
     X=\{0,\cdots,n-1\}^{q-1}=\mathbb{Z}_n^{q-1},\quad |X|=n^{q-1}
  \end{equation}
  be the set of $(q-1)$-dimensional integer lattice points on a hypercube of length $n$. The \emph{R\'enyi multi-entropy} $S^{(q)}_n$ is defined in terms of the multi-invariant $\mathcal{Z}(g_1,\cdots,g_q)$ as
  \begin{equation}
    S^{(q)}_n(R_1:\cdots :R_q) := \frac{1}{1-n}\frac{1}{n^{q-2}} \ln \mathcal{Z}(g_1,\cdots,g_q),
  \end{equation}
  where the twist operators $g_i$ act on $X$ by translation along the $i$-th axis, i.e.
  \begin{align}
    &g_i \cdot (x_1, \cdots ,x_i,\cdots,x_{q-1}) = (x_1, \cdots ,x_i+1,\cdots,x_{q-1}),\quad  i<q, \\
    &g_q = \id,
  \end{align}
   where the arithmetic operations on $\mathbb{Z}_n$ are understood to be up to modulo $n$.
 \end{definition}
 The replica symmetry group for the R\'enyi multi-entropy is $\Sym(X)=\Sym_{n^{q-1}}$. The contraction patterns of $S^{(q)}_n$ for some low $(q,n)$ are illustrated in Fig.~\ref{fig:contract}. For $q=2$, it reduces back to the usual bipartite R\'enyi entropies $S_n$.

R\'enyi multi-entropies enjoy the following properties \cite{Gadde:2022cqi,Gadde:2023zni}:
 \begin{itemize}
 \item (LU invariance). $S^{(q)}_n(R_1:\cdots:R_q)$ is invariant under LU operations.
 \item (Symmetry). $S^{(q)}_n(R_1:\cdots:R_q)$ is invariant under permutations of the subsystems $R_i\to R_{\sigma\cdot i}$ for any $\sigma\in S_q$.
 \item (Additivity). $S^{(q)}_n(R_1:\cdots:R_q)$ is additive under tensor products: Let $\ket{\psi}\in H_{A_1\cdots A_q}$ and $\ket{\phi}\in H_{B_1\cdots B_q}$ be two $q$-partite states, then
   \begin{equation}
     S^{(q)}_n(A_1B_1:\cdots:A_qB_q)_{\ket{\psi}\otimes \ket{\phi}} = S^{(q)}_n(A_1:\cdots:A_q)_{\ket{\psi}} + S^{(q)}_n(B_1:\cdots:B_q)_{\ket{\phi}}.
   \end{equation}
 \end{itemize}
 Just as bipartite R\'enyi entropy can be used to detect bipartite entanglement, R\'enyi multi-entropies can serve as signals for multi-partite entanglement. In particular, one can form linear combinations of $S^{(q)}_n$ called \emph{Genuine multi-entropy} that precisely detects genuine $q$-partite entanglement, which are states that cannot be written as tensor products of lower-partite entangled states. See Ref. \cite{Iizuka:2025ioc,Iizuka:2025caq} for more details on how to construct Genuine multi-entropies.

\subsection{Random tensor networks}
\begin{definition}[Random tensor network states \cite{Hayden:2016cfa}]
  Let $G=(V,E)$ be a graph with edge weight $\chi:E\to \mathbb{N}$. For each vertex $v\in V$ we assign a Hilbert space
  \begin{equation}
    \mathcal{H}^v = \bigotimes_{e\in E(v)} \mathcal{H}^{(v)}_e,\quad \mathcal{H}^{(v)}_e = \mathbb{C}^{\chi(e)},
  \end{equation}
  where $E(v)=\{(u,v):u\in N(v)\}$ is the set of edges emanating from $v$.
  We mark a subset $\partial\subset V$. We refer to $\partial$ as the set of \emph{boundary (terminal) vertices} and $V\setminus\partial$ as the set of \emph{bulk (interior) vertices}.
  The random tensor network assigns $G$ a pure state $\ket{\psi}_\partial$ on its boundary vertex Hilbert space $\mathcal{H}_\partial=\bigotimes_{\mathit{v}\in \partial} \mathcal{H}_v$ in the following manner: For each bulk vertex we sample a random vector $T(v)\in \mathcal{H}_v$ according to the Haar measure, and construct
  \begin{equation}
    \ket{\psi}_\partial \propto \left(\bigotimes_{v\in V\setminus \partial} \bra{T(v)} \right) \left(\bigotimes_{e\in E} \ket{\Psi_e}\right),
  \end{equation}
  where $\ket{\Psi_e}\in \mathcal{H}^{(u)}_e\otimes \mathcal{H}^{(v)}_e$ is a maximally entangled state between $\mathcal{H}^{(u)}_e$ and $\mathcal{H}^{(v)}_e$ for an edge $e=(u,v)$.
\end{definition}
In this paper we will work in the limit of \emph{large bond dimensions}, i.e. we take $\chi(e)\to \infty$ uniformly across all $e\in E$. For convenience we define the rescaled weight
\begin{equation}
  w(e) = \frac{\ln \chi(e)}{\ln \chi}
\end{equation}
which we hold finite as we take $\chi\to \infty$.

We are interested in the quantification of multi-partite entanglement in random tensor network states. To do this we divide the boundary vertices into $q$ disjoint parties $\partial = R_1 \sqcup R_2 \sqcup \cdots \sqcup R_q$, and denote the resulting $q$-partite state $\ket{\psi_{12\cdots q}}$. We want to calculate the R\'enyi multi-entropies for this state. A remarkable result in Haar-random tensor networks is that the evaluation of multi-invariants for $\psi$ can be mapped to an equivalent problem of evaluating the partition function on a associated ``Ising model'' on $G$ \cite{Hayden:2016cfa}.
\begin{align}
\label{eq:Sq_RTN}
  S^{(q)}_n(R_1:\cdots:R_q) &= \frac{1}{1-n}\frac{1}{n^{q-2}}\ln \frac{Z^{(q)}_n}{(Z^{(q)}_1)^{n^{q-1}}},\\
  Z^{(q)}_n &= \sum_{g} \exp\left(-\sum_{e=(u,v)\in E} d(g(u),g(v))\ln \chi(e)\right),
\end{align}
where the map $g:V\to \Sym(X)$ assigns an element of the replica symmetry group $\Sym(X)$ to each vertex. The ``interaction'' for this Ising model is given by the Cayley distance on $\Sym(X)$:
\begin{equation}
  d(g,h) =  |X| - \#(gh^{-1}),
\end{equation}
where $\#(g)$ counts the number of cycles of $g$ (including unit cycles).
The Cayley distance measures the minimal number of transpositions one needs to apply to change $g$ to $h$.\footnote{$d(g,h)$ is a distance function on $\Sym_N$, meaning that it satisfies: 1. $d(g,h)=d(h,g)$ (symmetry), 2. $d(g,h)\ge0$ and is zero iff $g=h$ (semi-positivity), and 3. $d(g,h)+d(h,k)\ge d(g,k)$ (triangle inequality).} The boundary condition for this Ising model is fixed by the action of corresponding twist operators on the subregions. In our case we have
\begin{equation}
  g(v) = g_i, \quad v\in R_i
\end{equation}
with $g_i$ given in Definition \ref{def:mutli_ent}.

In the large bond-dimension limit $\chi\to \infty$, one can obtain a good approximation for the partition function by saddle point approximation, i.e. finding the configuration $g$ that maximizes the exponential and dropping all the other terms:
\begin{align}
\label{eq:Z(g)}
  \log Z^{(q)}_n &\approx \max_g\left(-\sum_{e=(u,v)\in E} d(g(u),g(v))\ln \chi(e)\right) \\
  &= -(\ln\chi) \min_g\left(F(g)\right),
\end{align}
where $g$ is now the configuration that minimizes the ``free energy''
\begin{equation}
\label{eq:F(g)}
  F(g) = \sum_{e=\{x,y\}\in E} d(g(x),g(y)) w(e).
\end{equation}
This expression is the starting point of our main results in this paper. In Sec.~\ref{sec:n=2}, we will prove that for $n=2$ and arbitrary $q$, the minimizer of \eqref{eq:F(g)} is given by the solution of the multiway cut problem on the graph $G$, which in turn allows one to determine the multi-entropy of the RTN state $\ket{\psi}$ through \eqref{eq:Z(g)} and \eqref{eq:Sq_RTN}. We review multiway cut problem the following subsection.

\subsection{Multiway cuts and minimal surfaces}
Given any network $(G,w)$ and set of boundary permutations $g_i$, finding the configuration $g(v)$ that minimizes the free energy \eqref{eq:F(g)} is a highly non-trivial task.
There is however a simple solution that can be easily constructed: one simply divides $G$ into $q$ different ``domains'' $\Gamma_i$, each containing the corresponding boundary region $R_i$. One then assigns $g(v)=g_i$ for all the vertices $v$ inside the domain $\Gamma_i$. The contribution to the free energy for this configuration thus solely comes from the interface between these regions. We then pick the domain configurations $\{\Gamma_i\}$ that minimizes the total interface area. This produces a candidate $g$ for our problem.
The problem of finding the domains with minimal area is known as the \emph{multiway cut problem}, which we now define:

\begin{definition}[multiway partitions and cuts]
  Let $G=(V,E)$ be a graph with edge weight $w:E\to \mathbb{R}_+$, and $R_1,\cdots,R_q\subset V$ be a set of boundary vertices.
  A \emph{multiway partition} is a partition $V=\Gamma_1\sqcup\cdots\sqcup \Gamma_q$ such that $R_i\subseteq\Gamma_i$ for all $i=1,\cdots,q$.
  A \emph{multiway cut (or cut set)}, denoted $\mu(\{\Gamma_k\})$, is the set of edges that lies on the boundaries between any two subsets, i.e.
  \begin{equation}
    \mu (\{\Gamma_k\}) = \{ (u,v)\in E : u\in \Gamma_i \text{ and } v\in \Gamma_j,\quad i\ne j  \}.
  \end{equation}
  The \emph{area} of a multiway cut, denoted $\mathcal{A}(\{\Gamma_k\})$, is the weighted sum of $\mu(\{\Gamma_k\})$ over the edge weight $w$:
  \begin{equation}
    \mathcal{A}(\{\Gamma_k\}) = \sum_{e\in \mu (\{\Gamma_k\})} w(e).
  \end{equation}
\end{definition}
\begin{definition}[minimal multiway cut]
  The \emph{minimal multiway cut} is a multiway cut that minimizes the cut area. Stated differently, the minimal multiway cut is the minimal number of edges (weighted by $w$) one has to remove in order to separate each pair of boundary regions $R_k$.
  We denote the the area of the minimal cut as
  \begin{equation}
    \mathcal{A}(R_1:\cdots:R_q) = \min_{\Gamma_1\sqcup\cdots\sqcup\Gamma_q=V} \mathcal{A}(\{\Gamma_k\}).
  \end{equation}
\end{definition}
See Figure \ref{fig:TN_cuts} for an example.

\begin{figure}[t]
    \centering
    \includegraphics[width=0.35\linewidth]{fig/TN_cuts.pdf}
    \caption{An example of a quadripartite minimal cut. We represent the internal vertices of $G$ with solid nodes and the terminal vertices with hollow nodes. The minimal multiway partition are shown using colored patches and the multiway cut are shown as the dashed line.}
    \label{fig:TN_cuts}
\end{figure}

Obviously, any such configuration constructed from a minimal multiway partition is not guaranteed to be a true global minimizer for $F(g)$. A remarkable result of RTN is that in the bipartite setting $\partial = A\sqcup B$, the minimal cut \emph{is} a true minimizer: the configuration
\begin{equation}
    g(v)=g_i, \quad v\in \Gamma_i, \quad i=A,B
\end{equation}
can be shown to minimize the free energy $F$.\footnote{Note that there can be other minimizers when the minimal partition is not unique.} The proof of this result involves ideas from non-crossing permutations and flow networks which we will not go into detail here. Interested readers can refer to Ref. \cite{Collins:2013ubu} for details. See also Ref. \cite{cheng2024random,Fitter:2024nxn,Hu:2025geh} for recent developments.
The minimal area prescription for bipartite quantities is what makes RTN states special (in the sense that it is always ``maximally entangled'' across any bipartite regions \cite{Swingle:2009bg}) and thus a simple toy model for studying holography \cite{Hayden:2016cfa}.

For higher parties the situation is much more complicated. Even in tripartite settings $\partial = A\sqcup B\sqcup C$, the minimal configuration can be drastically different for different measures of interest. For example, if there exists a permutation element $\gamma\in \Sym_N$ that lies on the common geodesics\footnote{A \emph{geodesic} $[g_1,g_2]$ is the set of permutation elements that saturate the triangle inequality, i.e. $\{h\in \Sym_N:d(g_1,h)+d(h,g_2)=d(g_1,g_2)\}$.}  between any pairs of $\{g_A,g_B,g_C\}$:
\begin{equation}
  \gamma \in [g_A,g_B]\cap [g_B,g_C] \cap [g_A,g_C],
\end{equation}
then it is possible to show that the minimal configuration for $F(g)$ is given by a collection of minimal bipartite cuts $r_i$ and a remainder region \cite{Shapourian:2020mkc,Dong:2021clv}:
\begin{equation}
\label{eq:negativity_sol}
  g(v)=
  \begin{cases}
    g_i, \quad &v\in r_i, \quad i=A,B,C, \\
    \gamma, \quad &v\in V\setminus r_A\setminus r_B\setminus r_C.
  \end{cases}
\end{equation}
This is exactly the case for entanglement negativity (Example \ref{ex:negativity}): the permutation $\gamma=(12)(34)\cdots(n-1 ~n)\in \Sym_n$ satisfies the above criteria. The minimal value of $F(g)$ for these kind of solutions are given by a different minimization problem on the network called \emph{multicommodity flow problem} \cite{Dong:2021clv}.

Even if such $\gamma$ does not exist, there is still no guarantee that the minimal $g$ only features the terminal permutations $g_i$, and therefore the solution is not given by a simple multiway cut. This happens to be the case for the (R\'enyi) reflected entropy (Example \ref{ex:S_R}). There exists a special element $x=(1\cdots\frac{m}{2})(\frac{m}{2}+1\cdots m)\cdots (nm-\frac{m}{2}+1\cdots nm)\in \Sym_{mn}$ such that it lies on the intersection of the geodesic $[g_A,g_C]$ and $[g_B,g_C]$ while being farthest from $g_C$. The minimal configuration has been shown to be given by the following assignment \cite{Akers:2021pvd,Akers:2022zxr,Akers:2024pgq}:
\begin{equation}
  g(v) =
  \begin{cases}
    g_i, \quad &v\in \Gamma^t_i, \quad i=A,B \\
    g_C=\id, \quad &v\in r_C \\
    x, \quad &v\in \Gamma^t_C \setminus r_C,
  \end{cases}
\end{equation}
where $r_C$ is a minimal bipartite cut of $C$ and $V=\Gamma^t_A\sqcup\Gamma^t_B\sqcup\Gamma^t_C$ is a minimal multiway partition with tension $t=(t_{AB},t_{BC},t_{AC})=(2(n-1),n,n)$.\footnote{Here a multiway partition with tension $t$ means when calculating the area of the cut, we instead multiply the edges between different partitions with different weighing factors governed by $t$. More specifically, the area is defined by
  \begin{equation}
    \mathcal{A}_t(\Gamma)= t_{AB}\sum_{e\in\mu_{AB}} w(e) + t_{BC}\sum_{e\in\mu_{BC}} w(e) + t_{AC}\sum_{e\in\mu_{AC}} w(e),
  \end{equation}
  where $\mu_{ij}=\{(u,v)\in E: u\in \Gamma_j, v\in \Gamma_j\}$ is the set of edges that lies on the boundary between $\Gamma_i$ and $\Gamma_j$.}
We demonstrate a minimal configuration for the R\'enyi reflected entropy, as well as the negativity in Figure \ref{fig:TN_cuts2}.

\begin{figure}[th]
  \centering
  \includegraphics[width=.8\linewidth]{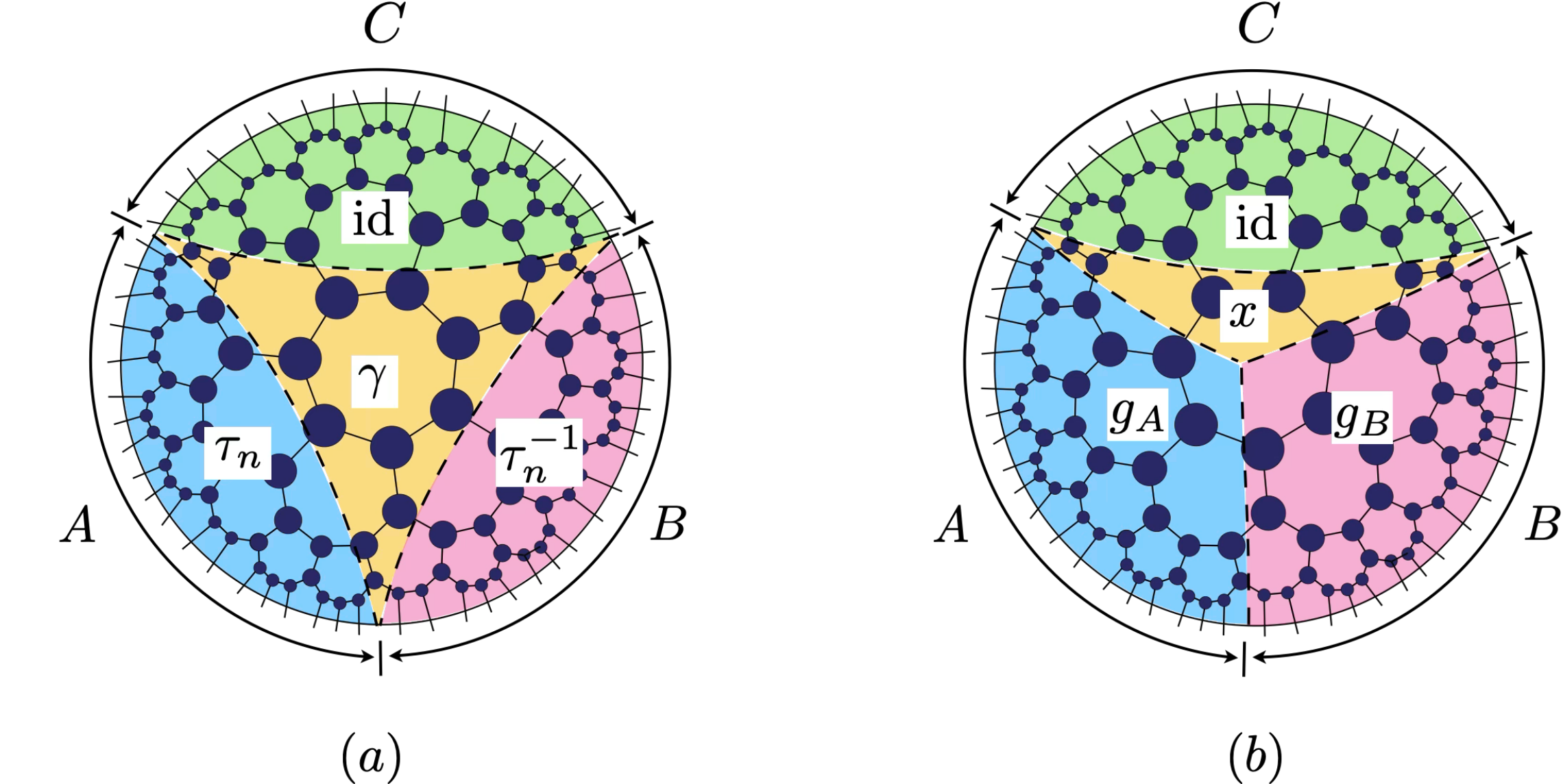}
  \caption{Examples for the minimal energy configuration of (a) R\'enyi negativity, and (b) R\'enyi reflected entropy in a hyperbolic tensor network. Different domains are represented by different colors.}
  \label{fig:TN_cuts2}
\end{figure}

From the above examples, we see that, in order to hope for a simple description for the minimizer in terms of the multiway cut, the geodesic structure between the terminal permutations $g_i$ must be ``highly frustrated''. In particular, the geodesics between any pairs of the terminal permutations had better be as far apart from each other as possible (see Proposition~\ref{prop:geodesic} for a result in this direction, showing that geodesics between terminal permutations intersect minimally).  For multi-entropy at $(q,n)=(3,2)$, Ref. \cite{Penington:2022dhr} showed that the minimal configuration is indeed given by the minimal tri-way cut. In Section \ref{sec:n=2} we extend their result to arbitrary integer $q$ (but keeping $n=2$), where instead of a minimal tri-way cut one has a minimal multiway cut across the network. For higher $n$, we show that the conjecture is, nonetheless, \emph{not true}, by giving an explicit counter example.

\section{A full description of the \texorpdfstring{$n=2$}{n=2} case}
\label{sec:n=2}
We now prove our main result of this paper, that the multi-entropy of RTN states are given by the area of the multiway cut in the network. Our proof applies to $n=2$, but arbitrary $q$. The higher $n$ cases are discussed in Section \ref{sec:n>2}.
Our proof rests on a simple estimate relating the Cayley distance and the number of moved points in $\Sym_N$. This estimate has previously appeared in the random matrix and quantum information theory literature, see e.g. \cite{biane1998representations,collins2011gaussianization, collins2013low}.
\begin{lemma}
  \label{lem:m(g)}
  For any permutation $g\in \Sym(X)$, let
  \begin{equation}
    m(g) := |\{x\in X: g(x) \ne x\}|
  \end{equation}
  be the number of the moved points under the action of $g$. Then
  \begin{equation}
    d(g,\id)\ge\frac{m(g)}{2}.
  \end{equation}
\end{lemma}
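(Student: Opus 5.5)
The plan is to work directly from the cycle decomposition of $g$ and count cycles. By definition $d(g,\id)=|X|-\#(g)$. Write the cycle decomposition of $g$ as $c_1,\dots,c_k$ with lengths $\ell_1,\dots,\ell_k$. Fixed points are exactly the cycles of length one, and every point of $X$ lies in exactly one cycle, so
\begin{equation}
  |X|=\sum_{j=1}^k \ell_j .
\end{equation}
It follows that
\begin{equation}
  d(g,\id)=\sum_{j=1}^k(\ell_j-1).
\end{equation}
Cycles of length one contribute zero to this sum, so only the nontrivial cycles (those with $\ell_j\ge 2$) matter.

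The moved points are precisely the points lying in nontrivial cycles, so $m(g)=\sum_{j:\ell_j\ge2}\ell_j$. The claim therefore reduces to a termwise inequality. For each nontrivial cycle, $\ell_j\ge 2$ gives $\ell_j-1\ge \ell_j/2$, since this is equivalent to $\ell_j\ge2$. Summing over the nontrivial cycles yields
\begin{equation}
  d(g,\id)=\sum_{j:\ell_j\ge2}(\ell_j-1)\ \ge\ \sum_{j:\ell_j\ge2}\frac{\ell_j}{2}=\frac{m(g)}{2}.
\end{equation}
This also shows that equality holds exactly when every nontrivial cycle is a transposition, i.e.\ when $g$ is an involution. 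That equality case is presumably what makes $n=2$ special in the paper's later argument.

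There is no real obstacle here. The only step requiring care is the identity $d(g,\id)=\sum_j(\ell_j-1)$, which rests on the convention that $\#(g)$ counts fixed points as unit cycles; the paper states this convention explicitly. Alternatively, one could argue via transpositions: writing $g$ as a product of $d(g,\id)$ transpositions, each transposition moves at most two points, so at most $2d(g,\id)$ points are moved. The cycle-count argument is cleaner, however, and I would present that one.
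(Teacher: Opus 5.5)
Your proof is correct and follows the paper's argument essentially verbatim: the cycle decomposition gives $d(g,\id)=\sum_j(\ell_j-1)$ and $m(g)=\sum_{j:\ell_j\ge 2}\ell_j$, and the termwise bound $\ell_j-1\ge \ell_j/2$ finishes it. Your equality condition (every nontrivial cycle is a transposition) matches the paper's Remark~\ref{rem:m(g)}.
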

\begin{proof}
  Write the disjoint cycle decomposition of $g$ as fixed points together with nontrivial cycles of lengths
  \begin{equation}
    \ell_1,\cdots,\ell_r\ge 2.
  \end{equation}
  A cycle of length $\ell_j$ has transposition length $\ell_j-1$, hence
  \begin{equation}
    d(g,\id) = \sum_{j=1}^r(\ell_j-1).
  \end{equation}
  Also,
  \begin{equation}
    m(g) = \sum_{j=1}^r \ell_j
  \end{equation}
  because precisely the points lying in the nontrivial cycles are moved. Therefore
  \begin{equation}
    d(g,\id) -\frac{m(g)}{2} = \sum_{j=1}^r \left((\ell_j-1)-\frac{\ell_j}{2}\right) = \sum^r_{j=1} \left(\frac{\ell_j}{2}-1\right) \ge 0
  \end{equation}
  since each $\ell_j\ge2$.
\end{proof}
\begin{remark}
  \label{rem:m(g)}
  Note that the bound in the above lemma is not tight in general: for a full cycle $\tau$ on $X$, we get
  \begin{equation}
    d(\tau,\id) = |X|-1 \quad \text{while}\quad \frac{m(\tau)}{2} = \frac{|X|}{2},
  \end{equation}
  but it is tight if every nontrivial cycle has length $2$, i.e. when $g$ is an involution.
\end{remark}

\subsection{Single random tensor}\label{sec:n=2 single random tensor}
We first consider the situation for the case of a single random tensor before moving on to RTNs defined on generic graphs. That is, we consider the minimization of the following target
\begin{equation}
  \min_{g\in \Sym(X)} F(g) = \min_{g\in \Sym(X)} \sum_{i=1}^q w_i d(g,g_i),
\end{equation}
where $w_i \in \mathbb{R}_+$. See Figure \ref{fig:1TN} for an illustration. Let us first consider the case where $w_i$ are all equal to each other.

\begin{figure}[t]
    \centering
    \includegraphics[scale=.3]{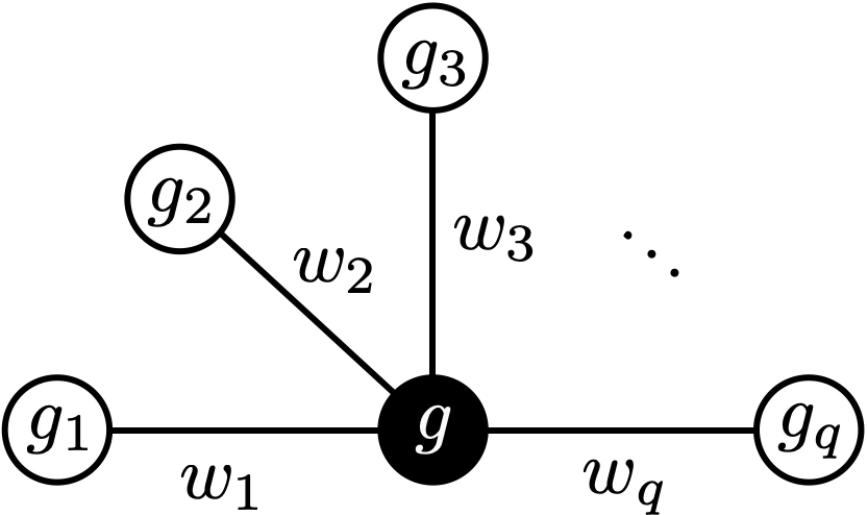}
    \caption{The $q$-partite single random tensor we consider in this subsection. The black node represents the random tensor and the white nodes represent the terminal vertices. The (logarithmic) bond dimensions of this random tensor are denoted by $w_i$, respectively.}
    \label{fig:1TN}
\end{figure}

\begin{theorem}
  \label{thm:1TN}
  For $n=2$ and for every integer $q$,
  \begin{equation}
    \min_{g\in \Sym(X)} \sum^q_{i=1} d(g,g_i) = (q-1)2^{q-2}.
  \end{equation}
  Moreover, the minimum is attained at $g=g_q=\id$.
\end{theorem}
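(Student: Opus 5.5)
For $n=2$ we have $X=\mathbb{Z}_2^{q-1}$ with $|X|=2^{q-1}$. For $i<q$, each $g_i$ is the fixed-point-free involution $x\mapsto x+e_i$, so $d(g_i,\id)=2^{q-2}$ by Remark~\ref{rem:m(g)}. The value at $g=\id$ is therefore $\sum_{i<q} d(\id,g_i)=(q-1)2^{q-2}$, which settles the upper bound.

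For the lower bound I will fix an arbitrary $g\in\Sym(X)$ and write $h_i:=g_i^{-1}g=g_ig$, using right-invariance of the Cayley distance to get $d(g,g_i)=d(h_i,\id)$, with $h_q=g$. Lemma~\ref{lem:m(g)} then gives
\begin{equation}
  \sum_{i=1}^q d(g,g_i)\ \ge\ \frac12\sum_{i=1}^q m(g_i g).
\end{equation}
It therefore suffices to show $\sum_{i=1}^q m(g_ig)\ge (q-1)2^{q-1}$. I will do this by counting, for each point $x\in X$, how many of the $q$ permutations $g_ig$ fix $x$. The permutation $g_ig$ fixes $x$ exactly when $g(x)=g_i^{-1}(x)=g_i(x)$. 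The images $g_1(x),\dots,g_{q-1}(x),g_q(x)=x$ are the $q$ distinct points $x+e_1,\dots,x+e_{q-1},x$, so $g(x)$ can coincide with at most one of them. Hence each $x$ is fixed by at most one of the $h_i$, i.e.\ it is moved by at least $q-1$ of them. Summing over $x$ gives $\sum_i m(h_i)\ge (q-1)|X|=(q-1)2^{q-1}$, and halving yields exactly $(q-1)2^{q-2}$.

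I do not expect a real obstacle here. The only steps that need care are the bookkeeping of left versus right multiplication, which is harmless because the Cayley distance is bi-invariant and the $g_i$ are involutions, and the observation that the $q$ target points $g_i(x)$ are pairwise distinct for each $x$. The latter is exactly what makes the pointwise counting bound sharp at $g=\id$. As a check on tightness at $g=\id$: every $h_i=g_i$ for $i<q$ is an involution with no fixed points, so Lemma~\ref{lem:m(g)} is saturated, and $h_q=\id$ fixes every point, so each $x$ is moved by exactly $q-1$ of the $h_i$. This confirms that the minimum is attained at $g=g_q=\id$.
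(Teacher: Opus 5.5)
Your proof is correct and follows essentially the same route as the paper. Both arguments reduce $d(g,g_i)$ to $d(h_i,\id)$ by invariance of the Cayley metric, apply Lemma~\ref{lem:m(g)}, use the pairwise distinctness of $g_1(x),\dots,g_q(x)$ to show each $x$ is moved by at least $q-1$ of the $h_i$, and check that $g=\id$ attains the bound. The only cosmetic difference is that you take $h_i=g_i^{-1}g$ (which uses left-invariance, not right-invariance as you say, though bi-invariance makes this harmless) where the paper takes $h_i=g^{-1}g_i$; since $g_i^{-1}=g_i$ for $n=2$, the two fixed-point conditions coincide literally.
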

\begin{proof}
  Fix an arbitrary permutation $g\in \Sym(X)$. For each $i=1,\cdots,q$, define
  \begin{equation}
    h_i = g^{-1}g_i.
  \end{equation}
  Then
  \begin{equation}
    d(g,g_i) = d(g^{-1}g_i,\id) = d(h_i,\id).
  \end{equation}
  By Lemma~\ref{lem:m(g)},
  \begin{equation}
    d(h_i,\id) \ge \frac{m(h_i)}{2}.
  \end{equation}
  Now
  \begin{equation}
    h_i(x)=x \Longleftrightarrow g^{-1}g_i(x) = x \Longleftrightarrow g_i(x) = g(x),
  \end{equation}
  so
  \begin{equation}
    \label{eq:1TN_cp}
    m(h_i) = |\{x\in X: g(x)\ne g_i(x)\}|.
  \end{equation}
  Summing over $i$, we obtain
  \begin{equation}
        \label{eq:1TN_cp2}
    \sum_i d(g,g_i) \ge \frac{1}{2} \sum_i|x\in X: g(x) \ne g_i(x)| = \frac{1}{2} \sum_{x\in X}\sum_i \mathbf{1}_{g(x)\ne g_i(x)}
  \end{equation}

  Fix $x\in X$. We claim that the $q$ elements $g_1(x),\cdots,g_q(x)$ are pairwise distinct. Indeed:
  \begin{itemize}
  \item $g_q(x)=x$;
  \item for each $q>i\ge1$, $g_i(x)$ differs from $x$ in exactly the $i$-th coordinate;
  \item if $i\ne j$, then $g_i(x)$ and $g_j(x)$ differ in both the $i$-th and $j$-th coordinates.
  \end{itemize}
  Hence the single value $g(x)$ can coincide with at most one of the values $g_1(x),\cdots,g_q(x)$. It follows that
  \begin{equation}
    \sum^q_{i=1} \mathbf{1}_{g(x)\ne g_i(x)} \ge q-1 \quad \text{for every } x\in X.
  \end{equation}
  Summing over all $x\in X$, we get
  \begin{equation}
    \sum_i d(g,g_i) \ge \frac{1}{2}\sum_{x\in X} (q-1) = \frac{|X|}{2} (q-1) = (q-1)2^{q-2}.
  \end{equation}
  Since $g\in \Sym(X)$ was arbitrary, this proves
  \begin{equation}
    \min_{g\in \Sym(X)} \sum_{i=1}^q d(g,g_i) \ge (q-1)2^{q-2}.
  \end{equation}

  It remains to show that the equality is attained. Take $g=g_q=\id$.
  Then $d(g_q,g_q)=0$ and $d(g_q,g_i)=2^{q-2}$ for each $i=1,\cdots,q-1$. Therefore
  \begin{equation}
    \sum_i d(g_q,g_i) = (q-1)2^{q-2}.
  \end{equation}
  Combining this with the lower bound yields
  \begin{equation}
    \min_{g\in \Sym(X)} \sum_{i=1}^q d(g,g_i) = (q-1)2^{q-2}.
  \end{equation}
   and thus completes the proof.
\end{proof}

\begin{corollary}
  The set of minimizers in Theorem \ref{thm:1TN} is exactly
  \begin{equation}
    \{g_1,g_2,\cdots,g_q\}.
  \end{equation}
\end{corollary}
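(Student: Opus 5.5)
We need to show two things: every $g_j$ is a minimizer, and every minimizer is one of the $g_j$.

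\textbf{Each $g_j$ is a minimizer.} By the proof of Theorem~\ref{thm:1TN}, for $n=2$ the $g_i$ with $i<q$ are commuting involutions; $g_i$ flips the $i$-th coordinate of $X=\mathbb{Z}_2^{q-1}$. So for $i\ne j$, $g_j^{-1}g_i = g_jg_i$ is the fixed-point-free involution flipping coordinates $i$ and $j$ (or a single coordinate if one index is $q$). Hence $d(g_j,g_i)=|X|/2=2^{q-2}$ for every $i\neq j$, and
\begin{equation}
\sum_i d(g_j,g_i)=(q-1)2^{q-2},
\end{equation}
which is the minimum.

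\textbf{Every minimizer is some $g_j$.} Let $g$ attain the minimum. Then every inequality in the chain of the proof of Theorem~\ref{thm:1TN} is an equality, and this gives two constraints.

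\emph{Pointwise equality.} For every $x\in X$ we have $\sum_i \mathbf{1}_{g(x)\ne g_i(x)}=q-1$. Since the values $g_1(x),\dots,g_q(x)$ are distinct, there is a unique index $j(x)$ with $g(x)=g_{j(x)}(x)$.

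\emph{Equality in Lemma~\ref{lem:m(g)}.} For each $i$ we have $d(h_i,\id)=m(h_i)/2$ with $h_i=g^{-1}g_i$. By Remark~\ref{rem:m(g)}, each $h_i$ is an involution: every nontrivial cycle has length $2$.

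It remains to show that $j(x)$ is constant in $x$; then $g=g_j$.

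\textbf{Key step: $j(x)$ is constant.} Suppose $j(x)=a$ and let $y=g_b(x)$ for some $b\ne a$. Since $h_b=g^{-1}g_b$ is an involution, $h_b(x)=g^{-1}(y)$ and $h_b(h_b(x))=x$.

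Set $z=g^{-1}(y)$. Because $g(x)=g_a(x)\ne y$ (the values $g_i(x)$ are distinct), we have $z\ne x$, so $x$ is moved by $h_b$. The involution property then gives $h_b(z)=x$, that is, $g_b(z)=g(x)=g_a(x)$, so $z=g_bg_a(x)$.

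Now use the pointwise structure at $z$: $g(z)=g_{j(z)}(z)$, and also $g(z)=y=g_b(x)$. Writing $c=j(z)$, this reads
\begin{equation}
g_c g_b g_a(x)=g_b(x).
\end{equation}
All $g_i$ are commuting translations, so this is $g_cg_a(x)=x$, which forces $c=a$. Hence $j(g_bg_a(x))=j(x)$ for all $b\ne a$.

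\textbf{Concluding.} As $b$ ranges over the indices other than $a$, the elements $g_bg_a$ generate the translation group $\mathbb{Z}_2^{q-1}$, which acts transitively on $X$. Strictly, the move depends on $a$ at each point, but since $a=j(x)$ is preserved at each step, we can iterate. This gives $j\equiv a$ on $X$, so $g=g_a$.

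\textbf{Checking generation.} For $a=q$, the elements $g_b$ with $b<q$ are the coordinate flips, so they generate $\mathbb{Z}_2^{q-1}$. For $a<q$, the set $\{g_bg_a: b\neq a\}$ contains $g_a$ (taking $b=q$) together with $g_bg_a$ for $b<q$, $b\neq a$; these again generate all coordinate flips. This generation check, and making sure the involution argument is applied correctly, is where I expect the care to be needed.
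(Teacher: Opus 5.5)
Your proof is correct, but at the key step it argues differently from the paper.

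\textbf{The paper's route.} It uses the involution conditions globally. Since $h_q=g^{-1}$ is an involution, $g=g^{-1}$. Then $(gg_i)^2=\id$ gives $g_i g g_i^{-1}=g$, so $g$ commutes with every translation of $X$. Hence $g=\tau_{g(0)}$, and the pointwise condition at the single point $0$ forces $g(0)=e_i$ for some $i$, i.e.\ $g=g_i$.

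\textbf{Your route.} You never establish translation-equivariance up front. Instead you propagate the label $j(x)$ along the translations by $e_a+e_b$, combining the involution property of one $h_b$ with the pointwise condition at the new point $z$. You then finish with a generation/transitivity argument on $\mathbb{Z}_2^{q-1}$.

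\textbf{What each buys.} The paper's argument is shorter and needs the pointwise condition at only one point. Yours is more local. It is exactly the single-vertex instance of the edge-compatibility condition (E) in Theorem~\ref{thm:compatible-cut-geodesic-families}, together with the generation step the paper later uses for the triangle network and in Proposition~\ref{prop:oriented-edge-pattern-rigidity}. So your version is the one that carries over to networks.

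\textbf{Two further points in your favour.} You explicitly check that every $g_j$ attains the minimum, which the paper's proof leaves implicit: it only shows that every minimizer is some $g_i$. You also state the pointwise equality correctly as $\sum_i \mathbf{1}_{g(x)\ne g_i(x)}=q-1$, where the paper's proof has a typo ($=q$).
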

\begin{proof}
  Let $g$ be a minimizer. Then equality must hold at every step of the chain of inequalities in the proof of the theorem:
  \begin{equation}
    \sum_i d(g,g_i)\ge \frac{1}{2}\sum_i m(g^{-1}g_i) = \frac{1}{2}\sum_{x\in X}\sum_i \mathbf{1}_{g(x)\ne g_i(x)} \ge (q-1)2^{q-2}.
  \end{equation}
  Therefore, for every $x\in X$,
  \begin{equation}
    \sum_i \mathbf{1}_{g(x)\ne g_i(x)} = q.
  \end{equation}
  Since the values $g_1(x),\cdots,g_q(x)$ are pairwise distinct, it follows that $g(x)$ coincides with exactly one of them.

  Also, for every $i$, we have
  \begin{equation}
    d(g,g_i) = \frac{1}{2}m(g^{-1}g_i).
  \end{equation}
  From Remark \ref{rem:m(g)}, equality occurs if and only if $g^{-1}g_i$ is an involution. Taking $i=q$, we see that $g^{-1}$ is an involution, so $g^2=\id$ and therefore $g=g^{-1}$. For $i=1,\cdots,q-1$, the permutation $g^{-1}g_i=gg_i$ is also an involution, therefore
  \begin{equation}
    g_i g g_i^{-1} = g,
  \end{equation}
  since $g_i$ are also involutions. This implies that $g$ commutes with every translation: for any $v\in X$, then $g(\tau_v(x))=\tau_v(g(x))$ where
  \begin{equation}
    \tau_v: X\to X, \quad \tau_v(x) = x+v.
  \end{equation}

  Finally, from the pointwise condition above, we have $g(0)=g_i(0)$ for some $i\in\{1,\cdots,q\}$. We have
 For any $x\in X$,
  \begin{equation}
    g(x) = g(\tau_x(0)) = \tau_x(g(0)) = x + g_i(0) = g_i(x).
  \end{equation}
  So $g=g_i$. This proves that the set of minimizers is exactly $\{g_1,\cdots,g_q\}$.
\end{proof}

\begin{remark}
  In the proof above it is crucial that the $g=g_i$ saturates the estimate $d(g,g_j)\ge\frac{1}{2}m(h_j)=\frac{1}{2}m(g^{-1}g_j)$.
  This is only possible if $h_j$ is an involution, i.e. $h_j^2=\id$, which happens to be true when $n=2$. For $n\ge3$, $(g_i^{-1}g_j)^2\ne \id$ and the estimate is never tight for $g=g_i$. In fact we will see that, for $n\ge 3$ there is a new element $\pi$ where it attains a lower objective, $F(g)\ge F(\pi)$ (with equality only at $(q, n)=(3,3)$).
\end{remark}

We now deal with the case of arbitrary edge weights.
\begin{theorem}
  \label{thm:1TN_w}
  Let $w_1,\cdots,w_q>0$, and set
  \begin{equation}
    w_{\max}:=\max_{1\le i\le q}w_i,\qquad
    K_{\max}:=\{k\in\{1,\cdots,q\}:w_k=w_{\max}\}.
  \end{equation}
  For
  \begin{equation}
    F(g) = \sum_{i=1}^q w_i d(g,g_i)
  \end{equation}
  one has
  \begin{equation}
    \min_{g\in \Sym(X)}F(g)
    =
    2^{q-2}\left(\Big(\sum_{i=1}^q w_i \Big)-w_{\max}\right),
  \end{equation}
  and the minimizers are exactly
  \begin{equation}
    \{g_k:k\in K_{\max}\}.
  \end{equation}
\end{theorem}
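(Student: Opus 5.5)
We follow the proof of Theorem~\ref{thm:1TN} and its corollary, but weight each terminal by $w_i$. Fix $g\in\Sym(X)$ and set $h_i=g^{-1}g_i$. Lemma~\ref{lem:m(g)} together with \eqref{eq:1TN_cp} gives
\begin{equation}
  F(g)\ge \frac12\sum_{i=1}^q w_i\, m(h_i) = \frac12\sum_{x\in X}\sum_{i=1}^q w_i\,\mathbf{1}_{g(x)\ne g_i(x)} .
\end{equation}
As before, for each fixed $x$ the values $g_1(x),\dots,g_q(x)$ are pairwise distinct. So $g(x)$ agrees with at most one of them, say $g_{i(x)}(x)$, and the inner sum is at least $\sum_i w_i - w_{i(x)}\ge \sum_i w_i - w_{\max}$. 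Summing over $|X|=2^{q-1}$ points yields the lower bound $2^{q-2}\big(\sum_i w_i - w_{\max}\big)$.

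For attainment, take $g=g_k$ with $k\in K_{\max}$. For $n=2$ every $g_k^{-1}g_i$ is a fixed-point-free involution of $X$ when $i\ne k$: it is a translation by a nonzero vector of $\mathbb{Z}_2^{q-1}$, namely $e_k+e_i$, or $e_i$ if $k=q$. Hence $d(g_k,g_i)=|X|/2=2^{q-2}$, and $F(g_k)=2^{q-2}\sum_{i\ne k}w_i$, which matches the bound.

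For the characterization of minimizers, suppose $g$ attains the minimum. Then equality must hold in both inequalities. Since all $w_i>0$, equality in the Lemma step forces every $h_i$ to be an involution (Remark~\ref{rem:m(g)}). The pointwise step then requires, for every $x$, that $g(x)=g_{i(x)}(x)$ for some $i(x)\in K_{\max}$. If $w_{\max}$ is attained uniquely this already pins $g(x)$ down, but in general $i(x)$ could a priori depend on $x$.

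Here we reuse the corollary's argument verbatim. Since $h_q=g^{-1}$ is an involution, $g=g^{-1}$. Since $g g_i$ is an involution and $g_i$ is an involution, $g$ commutes with each $g_i$, hence with all translations $\tau_v$. Therefore
\begin{equation}
  g(x)=x+g(0)=x+g_{i(0)}(0)=g_{i(0)}(x),
\end{equation}
so $g=g_{i(0)}$ with $i(0)\in K_{\max}$. The only point to double-check is that positivity of all weights is what lets equality in the weighted sum force equality termwise. That is immediate, so no genuine obstacle arises beyond adapting the corollary.
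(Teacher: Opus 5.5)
Your proposal is correct and follows essentially the same route as the paper. It uses the weighted moved-points bound from Lemma~\ref{lem:m(g)} with the pointwise ``at most one agreement'' estimate, checks attainment at $g_k$ for $k\in K_{\max}$, and identifies the minimizers through the involution condition, which forces $g$ to commute with all translations and hence $g=g_{i(0)}$ with $i(0)\in K_{\max}$.
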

\begin{proof}
  We follow the same proof as that of Theorem \ref{thm:1TN} up to \eqref{eq:1TN_cp}. However instead of \eqref{eq:1TN_cp2} we have
  \begin{equation}
    \sum_i w_i d(g,g_i) \ge \frac{1}{2}\sum_i w_i|x\in X:g(x)\ne g_i(x)|=\frac{1}{2}\sum_{x\in X}\sum_i w_i\mathbf{1}_{g(x)\ne g_i(x)}.
  \end{equation}
  Now since $g(x)$ can only coincide with at most one of $g_1(x),\cdots,g_q(x)$, to minimize the RHS we must pick $g(x)=g_k(x)$ for some $k\in K_{\max}$, and thus
  \begin{equation}
    \sum_i w_i \mathbf{1}_{g(x)\ne g_i(x)} \ge \Big(\sum_{i=1}^q w_i \Big)-w_{\max} , \quad \forall x\in X.
  \end{equation}
  Summing over all $x\in X$ we obtain
  \begin{equation}
    \sum_i w_i d(g,g_i) \ge 2^{q-2}\left(\Big(\sum_{i=1}^q w_i \Big) -w_{\max}\right).
  \end{equation}
  It is straightforward to check that plugging $g=g_k$ into $F(g)$ saturates this bound for every $k\in K_{\max}$.

  Conversely, let $g$ be a minimizer. Then equality holds in the pointwise bound above, so for every $x\in X$ there is some $k(x)\in K_{\max}$ such that
  \begin{equation}
    g(x)=g_{k(x)}(x).
  \end{equation}
  Also equality holds in the bounds $d(g,g_i)\ge \frac{1}{2}m(g^{-1}g_i)$ for all $i$, hence every $g^{-1}g_i$ is an involution. As in the proof of the corollary to Theorem \ref{thm:1TN}, this implies that $g$ commutes with every translation of $X$. Since $g(0)=g_k(0)$ for some $k\in K_{\max}$, we have, for every $x\in X$,
  \begin{equation}
    g(x)=g(\tau_x(0))=\tau_x(g(0))=\tau_x(g_k(0))=g_k(x).
  \end{equation}
  Thus $g=g_k$ with $k\in K_{\max}$, which proves the claimed description of the minimizers.
\end{proof}

\subsection{Random tensor networks}
We now deal with the general optimization problem of multi-entropy on a RTN state:
Given a graph $G=(V,E)$ and edge weight $w:E\to \mathbb{R}_+$, we would like to minimize
\begin{equation}
  \label{eq:RTN_F(g)}
  F(g) = \sum_{e=(u,v)\in E} w(e) d(g(u),g(v))
\end{equation}
over all admissible $g:V\to \Sym(X)$ satisfying the boundary condition $g(v)=g_i$ for $v\in R_i$ on the boundary (terminal) vertices.

We first state our main result of this subsection.
\begin{theorem}
  \label{thm:RTN}
  For $n=2$ and for every integer $q$, the minimum of \eqref{eq:RTN_F(g)} is given by the minimum multiway cut area over $G$:
  \begin{equation}
    \min_g F(g) = 2^{q-2}\mathcal{A}(R_1:\cdots :R_q).
  \end{equation}
\end{theorem}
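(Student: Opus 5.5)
The plan is to prove matching upper and lower bounds.

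\emph{Upper bound.} Take a minimal multiway partition $V=\Gamma_1\sqcup\cdots\sqcup\Gamma_q$ and set $g(v)=g_i$ for $v\in\Gamma_i$. This assignment respects the boundary conditions. For $n=2$ every $g_i^{-1}g_j$ with $i\ne j$ is a fixed-point-free involution on $X$, since it changes one or two coordinates mod $2$. Hence $d(g_i,g_j)=|X|/2=2^{q-2}$. Each cut edge therefore contributes $2^{q-2}w(e)$ and every other edge contributes nothing, so $F(g)=2^{q-2}\mathcal{A}(R_1:\cdots:R_q)$.

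\emph{Lower bound.} Here I would linearize pointwise, as in the proof of Theorem~\ref{thm:1TN}. Fix any admissible $g:V\to\Sym(X)$. By Lemma~\ref{lem:m(g)}, applied to $g(v)^{-1}g(u)$, each edge satisfies
\begin{equation}
d(g(u),g(v))\ge \tfrac12\,|\{x\in X: g(u)(x)\ne g(v)(x)\}|.
\end{equation}
Summing over edges and exchanging the sums gives
\begin{equation}
F(g)\ge \frac12\sum_{x\in X} F_x,\qquad F_x:=\sum_{e=(u,v)\in E} w(e)\,\mathbf 1_{g(u)(x)\ne g(v)(x)}.
\end{equation}

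Now fix $x$ and look at the labeling $\lambda_x:V\to X$, $v\mapsto g(v)(x)$. On $R_i$ it equals the constant $g_i(x)$. The key fact, already used in Theorem~\ref{thm:1TN}, is that $g_1(x),\dots,g_q(x)$ are pairwise distinct. Define $\Gamma_i^{(x)}=\lambda_x^{-1}(g_i(x))$ for $i\ge 2$. Put every remaining vertex into $\Gamma_1^{(x)}$; these are the vertices labeled $g_1(x)$ together with those whose label is none of the $g_i(x)$. This gives a multiway partition with $R_i\subseteq\Gamma_i^{(x)}$. Any edge between different parts of this partition has endpoints with different labels, so its indicator in $F_x$ equals $1$. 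Therefore $F_x\ge\mathcal{A}(\{\Gamma^{(x)}_k\})\ge\mathcal{A}(R_1:\cdots:R_q)$. Summing over the $|X|=2^{q-1}$ points gives $F(g)\ge 2^{q-2}\mathcal{A}(R_1:\cdots:R_q)$, which matches the upper bound.

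The only delicate step is the reduction from $F_x$ to a genuine multiway cut: the labeling may use values outside $\{g_i(x)\}$, and this is handled by absorbing them into one part. Merging parts can only remove cut edges, and each remaining cut edge is counted at least once in $F_x$. Everything else is routine. I expect no real obstacle beyond checking that distinctness of the $g_i(x)$ makes the $\Gamma^{(x)}_i$ well defined and disjoint.
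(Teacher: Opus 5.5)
Your proof is correct and follows the paper's argument. The upper bound comes from a terminal-valued configuration on a minimal partition. The lower bound applies Lemma~\ref{lem:m(g)} edge by edge, exchanges the sums, and reduces each slice to a multiway cut using the distinctness of the $g_i(x)$. The only cosmetic difference is how you build the partition: you use level sets of the slice and absorb stray labels into $\Gamma_1^{(x)}$, whereas the paper's Lemma~\ref{lem:labeling} uses connected components of the agreement subgraph. Both give a multiway partition whose cut set is contained in the disagreement edges.
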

Our proof strategy will be similar to the single tensor case, by first applying Lemma \ref{lem:m(g)} on the Cayley distances to obtain a bound for $F(g)$. We then show that, given a solution to the multiway cut problem, one can reconstruct a $g$ that saturates the bound.

For each $x\in X$, we define the \emph{slice constant}
\begin{equation}
  \mathcal{A}_x := \min_s\left\{ \sum_{e=(u,v)\in E} w(e) \mathbf{1}_{s(u)\ne s(v)}: s:V\to X, \quad s(v)=g_i(x)\quad  \forall v\in R_i \right\} .
\end{equation}
Equivalently, $\mathcal{A}_x$ is the (weighted) minimum number of edges one must cut in order to separate the terminal nodes whose $x$-images under $g_i$ are different, while terminals with the same $x$-image are allowed to remain in the same component. Note that this quantity depends on the element $x\in X$.

\begin{lemma}[Universal slice lower bound]
  \label{lem:slice}
  For every admissible $g$,
  \begin{equation}
    \label{eq:slice_ineq}
    F(g) \ge \frac{1}{2} \sum_{x\in X} \mathcal{A}_x.
  \end{equation}
\end{lemma}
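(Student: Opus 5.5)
The plan is to repeat, edge by edge, the two steps from the proof of Theorem~\ref{thm:1TN}: apply Lemma~\ref{lem:m(g)} to each Cayley distance, then swap the order of summation so that each lattice point $x\in X$ gets its own cut problem. Fix an admissible $g:V\to\Sym(X)$ and an edge $e=(u,v)$. By left/right invariance of the Cayley distance, $d(g(u),g(v))=d(g(v)^{-1}g(u),\id)$. Lemma~\ref{lem:m(g)} then gives
\begin{equation*}
d(g(u),g(v))\ \ge\ \tfrac12\, m\big(g(v)^{-1}g(u)\big)=\tfrac12\,\big|\{x\in X: g(u)(x)\neq g(v)(x)\}\big| .
\end{equation*}
Here I use, as in \eqref{eq:1TN_cp}, that $g(v)^{-1}g(u)$ fixes $x$ exactly when $g(u)(x)=g(v)(x)$.

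Next I multiply by $w(e)$, sum over $E$, and exchange the two finite sums:
\begin{equation*}
F(g)\ \ge\ \tfrac12\sum_{x\in X}\ \sum_{e=(u,v)\in E} w(e)\,\mathbf 1_{s_x(u)\neq s_x(v)},\qquad s_x:V\to X,\quad s_x(v):=g(v)(x).
\end{equation*}
The map $s_x$ is the evaluation of the permutation field at the point $x$.

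The final step is to check that each $s_x$ is feasible for the minimization defining $\mathcal{A}_x$. For a terminal $v\in R_i$, admissibility of $g$ gives $g(v)=g_i$, so $s_x(v)=g_i(x)$, which is exactly the required boundary condition. Hence the inner sum is at least $\mathcal{A}_x$ for every $x$, and summing over $x$ gives \eqref{eq:slice_ineq}.

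I do not expect a real obstacle here. The only points needing care are the invariance $d(a,b)=d(b^{-1}a,\id)$ and the fact that the $s_x$ need not be bijections or related to one another across different $x$. That is harmless, because $\mathcal{A}_x$ minimizes over arbitrary maps $s:V\to X$, so relaxing the permutation structure can only decrease the bound.
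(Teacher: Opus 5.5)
Your proposal is correct and follows the paper's proof essentially step for step: you apply Lemma~\ref{lem:m(g)} to each edge, exchange the sums over edges and over $x\in X$, and observe that the slice $s_x(v)=g(v)(x)$ meets the boundary conditions in the definition of $\mathcal{A}_x$. Your extra remark that $d(a,b)=d(b^{-1}a,\id)$, which holds because $ab^{-1}$ and $b^{-1}a$ are conjugate and so have the same number of cycles, is a detail the paper leaves implicit.
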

\begin{proof}
  Fix an admissible $g$. For each edge $(u,v)\in E$ denote $g_u=g(u)$ and $g_v=g(v)$. Lemma \ref{lem:m(g)} gives
  \begin{equation}
    d(g_u,g_v) \ge \frac{1}{2}|\{x\in X: g_u(x) \ne g_v(x)\}|,
  \end{equation}
  Summing over all edges and interchanging the order of summation, we obtain
  \begin{align}
    \label{eq:slice_ineq2}
    F(g) &\ge \frac{1}{2} \sum_{e=(u,v)\in E} w(e) |\{x\in X: g_u(x)\ne g_v(x)\}| \\
    &= \frac{1}{2}\sum_{x\in X}\sum_{e=(u,v)\in E} w(e) \mathbf{1}_{g_u(x)\ne g_v(x)}.
  \end{align}

  For a fixed $x\in X$, define
  \begin{equation}
    s_x(v) := g_v(x), \quad (v\in V).
  \end{equation}
  Then $s_x:V\to X$ is a labeling of $V$ with boundary values
  \begin{equation}
    s_x(t_i) = g_{t_i}(x) = g_i(x), \quad t_i\in R_i.
  \end{equation}
  Therefore, by the definition of $\mathcal{A}_x$,
  \begin{equation}
    \sum_{e=(u,v)\in E} w(e)\mathbf{1}_{g_u(x)\ne g_v(x)} = \sum_{e=(u,v)\in E} w(e)\mathbf{1}_{s_x(u)\ne s_x(v)} \ge \mathcal{A}_x.
  \end{equation}
  Substituting this into \eqref{eq:slice_ineq2} and summing over $x$ gives \eqref{eq:slice_ineq}.
\end{proof}

We need the following technical lemma.
\begin{lemma}[disjoint labelings are multiway cuts]
  \label{lem:labeling}
  Let $Y$ be any set, and let $y_1,\cdots,y_q\in Y$ be pairwise distinct. For a labeling $s:V\to Y$ with
  \begin{equation}
    s(v) = y_i,\quad v\in R_i.
  \end{equation}
  Define
  \begin{equation}
    \mathcal{W}(s) := \sum_{e=(u,v)\in E} w(e) \mathbf{1}_{s(u)\ne s(v)}.
  \end{equation}
  Then
  \begin{equation}
    \label{eq:labeling_ineq}
    \mathcal{W}(s) \ge \mathcal{A}(R_1:\cdots:R_q).
  \end{equation}
  Moreover, equality is attained by the boundary-valued labeling: if $V=\Gamma_1\sqcup\cdots\sqcup \Gamma_q$ is a minimum multiway partition, then the labeling
  \begin{equation}
    \label{eq:labeling_part}
    s(v) = y_i,\quad v\in \Gamma_i
  \end{equation}
  satisfies
  \begin{equation}
    \mathcal{W}(s) = \mathcal{A}(R_1:\cdots:R_q).
  \end{equation}
\end{lemma}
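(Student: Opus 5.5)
The plan is to reduce an arbitrary labeling to a multiway partition by collapsing labels, and to observe that collapsing can only cut fewer edges. Given $s:V\to Y$ with $s(v)=y_i$ on $R_i$, I build a multiway partition $\{\Gamma_k\}$ whose cut set is contained in the set of edges that $s$ cuts, namely $\{(u,v)\in E: s(u)\ne s(v)\}$. The natural choice is
\begin{equation}
  \Gamma_i := s^{-1}(y_i)\quad (i=2,\cdots,q),\qquad \Gamma_1 := V\setminus\bigcup_{i\ge2} s^{-1}(y_i).
\end{equation}
All vertices whose label is not among $y_2,\cdots,y_q$ are merged into $\Gamma_1$, together with the vertices labeled $y_1$. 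Because the $y_i$ are pairwise distinct, the preimages $s^{-1}(y_i)$ are disjoint. So this is a partition of $V$, and $R_i\subseteq s^{-1}(y_i)\subseteq\Gamma_i$ for every $i$. It is therefore an admissible multiway partition.

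The key step is to show that every edge in $\mu(\{\Gamma_k\})$ is cut by $s$. Take $(u,v)$ with $u\in\Gamma_i$, $v\in\Gamma_j$ and $i\ne j$. At least one of $i,j$, say $j$, is $\ge 2$, so $s(v)=y_j$. If $i\ge2$, then $s(u)=y_i\ne y_j$. If $i=1$, then $s(u)\notin\{y_2,\cdots,y_q\}$ by the definition of $\Gamma_1$, so again $s(u)\ne y_j$. In either case $\mathbf{1}_{s(u)\ne s(v)}=1$. Since the weights are positive, summing gives
\begin{equation}
  \mathcal{W}(s)\ge\mathcal{A}(\{\Gamma_k\})\ge\mathcal{A}(R_1:\cdots:R_q),
\end{equation}
which is \eqref{eq:labeling_ineq}.

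For the equality statement, take a minimum multiway partition and the labeling \eqref{eq:labeling_part}. Since the $y_i$ are distinct, an edge satisfies $s(u)\ne s(v)$ exactly when its endpoints lie in different $\Gamma_i$. Hence $\mathcal{W}(s)=\mathcal{A}(\{\Gamma_k\})$, and this equals $\mathcal{A}(R_1:\cdots:R_q)$ by minimality. Combined with the lower bound, the labeling attains the minimum.

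There is no real obstacle here. The one point needing care is the treatment of the "extra" labels outside $\{y_1,\cdots,y_q\}$. Merging them all into one terminal's domain, rather than forming new parts, is what guarantees that no edge becomes cut that $s$ did not already cut.
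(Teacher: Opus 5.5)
Your proof is correct. Its overall strategy matches the paper's: exhibit a multiway partition $\{\Gamma_k\}$ with $\mu(\{\Gamma_k\})\subseteq E_{\ne}(s)$, then chain $\mathcal{W}(s)\ge\mathcal{A}(\{\Gamma_k\})\ge\mathcal{A}(R_1:\cdots:R_q)$. The equality part is identical to the paper's.

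The difference is how the partition is built. The paper deletes the disagreeing edges and takes the connected components of the spanning subgraph $G_s$. These components are monochromatic, so each meets terminals of at most one index. Components meeting $R_i$ go to $\Gamma_i$, and terminal-free components are distributed arbitrarily. You instead use the level sets $s^{-1}(y_i)$ for $i\ge 2$ and put everything else, including all non-terminal labels, into $\Gamma_1$; a two-case check then gives $\mu\subseteq E_{\ne}(s)$.

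Your version is more elementary. It uses no graph connectivity, and once the catch-all index is fixed it involves no arbitrary choices. For labelings taking only terminal values it returns exactly the level-set partition $\{s^{-1}(y_i)\}$, which makes the notion of an ``associated'' labeling in Remark~\ref{rem:labeling} transparent. The paper's construction is symmetric in the $q$ terminals, and its component picture is the one invoked in Remark~\ref{rem:labeling}.

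For the saturation characterization in that remark ($G$ connected, positive weights), both constructions need one extra step to rule out non-terminal labels. In the paper's, you use the freedom in placing terminal-free components. In yours, you use the freedom to choose which index serves as the catch-all: equality must hold for every choice, and an edge from a non-terminal label to a differently labeled vertex would then lie inside some $\Gamma_k$.

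A minor point: your summation step needs only nonnegativity of the weights, not positivity.
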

\begin{proof}
  Let
  \begin{equation}
    E_{\ne} (s) := \{(u,v)\in E: s(u)\ne s(v)\}.
  \end{equation}
  Then $\mathcal{W}(s) = \sum_{e\in E_{\ne}(s)}w(e)$. Remove these edges and consider the spanning subgraph
  \begin{equation}
    G_s = (V,E\setminus E_{\ne}(s)).
  \end{equation}
  Every connected component of $G_s$ is monochromatic by construction. Since the terminal colors $y_i$ are pairwise distinct, no connected component of $G_s$ can contain boundary vertices of different indices.

  For each $i$, let $\Gamma_i$ be the union of all components of $G_s$ that intersect $R_i$. Any remaining component of $G_s$ contains no terminal vertices; assign each such component arbitrarily to one of the sets $\Gamma_1,\cdots,\Gamma_q$. The produces a multiway partition
  \begin{equation}
    V= \Gamma_1 \sqcup\cdots\sqcup \Gamma_q.
  \end{equation}
  Every edge joining different parts of this partition must lie in $E_{\ne}(s)$, because edges outside $E_{\ne}(s)$ stay inside components of $G_s$. Hence
  \begin{equation}
    \mu(\{\Gamma_k\}) \subseteq E_{\ne} (s).
  \end{equation}
  So
  \begin{equation}
    \label{eq:labeling_ineq2}
    \mathcal{W}(s) = \sum_{e\in E_{\ne}(s)} w(e) \ge \sum_{e\in \mu(\{\Gamma_k\})} w(e) \ge \mathcal{A}(R_1:\cdots :R_q).
  \end{equation}
  This proves \eqref{eq:labeling_ineq}.

  Conversely, if $V=\Gamma_1\sqcup\cdots\sqcup \Gamma_q$ is a minimum multiway partition and we set $s(v)=y_i$ for $v\in \Gamma_i$, then an edge contributes $w(e)$ to $\mathcal{W}(s)$ if and only if it lies in $\mu(\{\Gamma_k\})$. Therefore
  \begin{equation}
    \mathcal{W}(s) = \sum_{e\in \mu(\{\Gamma_k\})} w(e) = \mathcal{A}(R_1:\cdots :R_q).
  \end{equation}
\end{proof}

\begin{remark}
  \label{rem:labeling}
  Note that \eqref{eq:labeling_ineq} is saturated \emph{if and only if} $s$ is of the form of \eqref{eq:labeling_part}. This can be seen from \eqref{eq:labeling_ineq2}: Saturating of the chain of inequalities requires $E_{\ne}(s)=\mu(\{\Gamma_k\})$ where $\{\Gamma_k\}$ is a minimal multiway partition. We call such $s$ to be a labeling \emph{associated} to the minimal multiway partition $\{\Gamma_k\}$.
\end{remark}

We are now ready to prove the main result.

\begin{proof}[Proof of Theorem \ref{thm:RTN}]
  Fix $x\in X$. The $q$ elements $g_1(x),\cdots,g_q(x)$ are pairwise distinct for $n=2$ (see the proof of Theorem \ref{thm:1TN}).
  Therefore Lemma~\ref{lem:labeling}, applied to the colors $y_i=g_i(x)$ shows that the minimum disagreement cost with these boundary values is exactly the minimum multiway cut:
  \begin{equation}
    \mathcal{A}_x \ge \mathcal{A}(R_1:\cdots:R_q).
  \end{equation}
  From Lemma \ref{lem:slice} we have
  \begin{equation}
    F(g) \ge \frac{1}{2}\sum_{x\in X} \mathcal{A}_x \ge \frac{|X|}{2} \mathcal{A}(R_1:\cdots:R_q) = 2^{q-2} \mathcal{A}(R_1:\cdots:R_q).
  \end{equation}

  Next, let $V=\Gamma_1\sqcup\cdots \sqcup \Gamma_q$ be a minimum multiway partition, and define
  \begin{equation}
    g_*(v)=g_i,\quad v\in \Gamma_i.
  \end{equation}
  It is clear that $g_*$ is admissible. The only contribution to $F(g_*)$ comes from the cut set $\mu(\{\Gamma_k\})$, with each edge contributing $d(g_i,g_j)=2^{q-2}$. Therefore
  \begin{equation}
    F(g_*) = 2^{q-2} \sum_{e\in \mu(\{\Gamma_k\})} w(e) = 2^{q-2} \mathcal{A} (R_1:\cdots:R_q)
  \end{equation}
  saturates the bound. This proves the theorem.
\end{proof}

In fact, we can state the previous result of single tensor as a simple corollary.
\begin{example}[single random tensor]
  Let $G=(V,E)$ where $V$ consists of a single internal vertex $v$ and terminals $\{t_1,\cdots,t_q\}$, and edge weight $w(v,t_i) = w_i$.
  Let $k\in\{1,\cdots,q\}$ such that $w_k = \max_i w_i$.   The minimal multiway partition of this network is
  \begin{equation}
    \Gamma_{i\ne k} = \{t_i\}, \quad \Gamma_k = \{v,t_k\}.
  \end{equation}
  The area of the minimal multiway cut is
  \begin{equation}
    \mathcal{A}(t_1:\cdots:t_q) = \sum_{i\ne k} w_i.
  \end{equation}
  Applying Theorem \ref{thm:RTN}, we find
  \begin{equation}
    \min_g F(g) = 2^{q-2}\mathcal{A}(t_1:\cdots:t_q) = 2^{q-2}\sum_{i\ne k} w_i,
  \end{equation}
  which reproduces the result of Theorem \ref{thm:1TN_w}.
\end{example}

The next result pertains to the uniqueness of the minimizer.
\begin{corollary}\label{cor:unique-mincut}
  If $G$ admits a unique minimal multiway partition $V=\Gamma_1\sqcup\cdots\sqcup \Gamma_q$, then the minimizer of $F(g)$ is unique and is given by
  \begin{equation}
    g(v) = g_i, \quad v\in \Gamma_i
  \end{equation}
\end{corollary}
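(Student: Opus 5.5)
Let $g$ be any minimizer of $F$. By Theorem~\ref{thm:RTN}, $F(g)=2^{q-2}\mathcal{A}(R_1:\cdots:R_q)$, so every inequality in the chain
\begin{equation*}
F(g)\ \ge\ \frac12\sum_{x\in X}\sum_{e=(u,v)\in E} w(e)\,\mathbf{1}_{g_u(x)\ne g_v(x)}\ \ge\ \frac12\sum_{x\in X}\mathcal{A}_x\ \ge\ \frac{|X|}{2}\mathcal{A}(R_1:\cdots:R_q)
\end{equation*}
must be an equality. The plan is to extract two facts from this and combine them.

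\emph{Edge-wise saturation.} Equality in the first step means $d(g_u,g_v)=\tfrac12 m(g_u^{-1}g_v)$ for every edge $e=(u,v)$ with $w(e)>0$. By Remark~\ref{rem:m(g)}, $g_u^{-1}g_v$ is then an involution.

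\emph{Slice-wise saturation.} For each $x\in X$, the labeling $s_x(v)=g_v(x)$ saturates $\mathcal{W}(s_x)\ge\mathcal{A}_x\ge\mathcal{A}(R_1:\cdots:R_q)$. By Remark~\ref{rem:labeling}, $s_x$ is a labeling associated to a minimal multiway partition. By uniqueness this partition is $\{\Gamma_k\}$, so
\begin{equation*}
g_v(x)=g_i(x)\qquad\text{for all } v\in\Gamma_i \text{ and all } x\in X .
\end{equation*}

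Since $x$ is arbitrary, this gives $g_v=g_i$ on $\Gamma_i$ as permutations. The edge-wise involution condition is not even needed. The one point to check carefully is that the equality case of Remark~\ref{rem:labeling} really forces the constant value $y_i$ on all of $\Gamma_i$, including on components of $G_{s_x}$ that contain no terminal.

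The main obstacle is exactly this subtlety. Suppose a terminal-free component of $G_{s_x}$ carried some color not among the $y_j$. Assigning that component to the different $\Gamma_j$ would give several multiway partitions. If the cost is saturated, all of them are minimal, since in every case $\mu(\{\Gamma_k\})=E_{\ne}(s_x)$, which forces its boundary to be counted fully for every assignment. Compare assigning the component to $\Gamma_1$ versus $\Gamma_2$. Equality of the cut sets for both assignments means the component's boundary edges all lie in $\mu$ under both choices. That is only possible if the component has no edges into $\Gamma_1$ or $\Gamma_2$ at all, and this is impossible when $q\ge 2$ and the graph is connected.

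If that case analysis gets awkward, I would handle it directly instead. Take a terminal-free component $C$ with foreign color, and recolor it to the color of a neighboring class $\Gamma_j$ across an edge of positive weight. This strictly decreases $\mathcal{W}(s_x)$, contradicting minimality. With positive weights and connectivity assumed, this shows each $s_x$ equals $y_i$ on $\Gamma_i$, and uniqueness of the multiway partition pins down that partition. Together these give $g=g_*$.
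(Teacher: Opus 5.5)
Your proposal is correct and is essentially the paper's proof. As in the paper, saturation of the chain together with Remark~\ref{rem:labeling} and uniqueness forces $g_v(x)=g_i(x)$ on $\Gamma_i$ for every $x$, and the edge-wise involution condition is indeed never needed. The terminal-free-component subtlety you flag is passed over in the paper's remark, and your own first observation already settles it without positivity or connectivity: if $\mathcal{W}(s_x)=\mathcal{A}(R_1:\cdots:R_q)$ and such a component existed, placing it in two different parts would give two distinct minimal multiway partitions, contradicting uniqueness outright. So the later cut-set comparison is unnecessary, and as written it is also incomplete for $q\ge 3$, since it only excludes edges into $\Gamma_1,\Gamma_2$.
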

\begin{proof}
  Since $g$ is a minimizer, it must saturate the chain of inequalities in the proof of Theorem \ref{thm:RTN}:
  \begin{equation}
    \label{eq:RTN_unique_ineq}
    F(g) \ge\frac{1}{2}\sum_{x\in X}\sum_{e=(u,v)\in E} w(e) \mathbf{1}_{g_u(x)\ne g_v(x)} \ge \frac{1}{2}\sum_{x\in X}\mathcal{A}_x \ge \frac{1}{2}\sum_{x\in X} \mathcal{A}(R_1:\cdots:R_q)
  \end{equation}
  In particular this chain of inequalities has to be tight for each $x\in X$.

  Fix some $x\in X$. Let $s_x:V\to X$ be a labeling defined by
  \begin{equation}
    s_x(v) = g_v(x).
  \end{equation}
  By Remark \ref{rem:labeling}, saturation of \eqref{eq:RTN_unique_ineq} requires $s_x$ to be the labeling associated to the unique minimal multiway partition $\{\Gamma_k\}$, i.e.
  \begin{equation}
    s_x(v) = g_i(x), \quad v\in \Gamma_i,
  \end{equation}
  which implies that for any two vertices $u,v$ in the same part of the multiway partition $\{\Gamma_k\}$
  \begin{equation}
    g_u(x)=g_v(x), \quad u,v\in \Gamma_i.
  \end{equation}
  Since this is true for all $x$, we must have
  \begin{equation}
    g(u)=g(v), \quad u,v\in \Gamma_i.
  \end{equation}
  Now take $u\in R_i$. Since $g$ is admissible, $g(u)=g_i$. This proves that
  \begin{equation}
    g(v) = g_i, \quad v\in \Gamma_i.
  \end{equation}
\end{proof}

When the minimal multiway partition is not unique, then each of such partition gives rise to a solution of the multi-entropy minimization problem. The question is whether these solutions constitute all the minimizers of the problem. This problem, as well as several examples, will be discussed in the next section.

\section{Minimizer count and examples}
\label{sec:minimizer}

In this section we count the number and study the nature of the minimizers of the function $F$. We discuss some simple examples that allow for minimizers which take values outside the set of terminal permutations. We emphasize that trees are easy to analyze and we study the simplest non-tree network with 3 terminals, the triangle network.

\subsection{Counting minimizers in the general case}

We now describe all minimizers in the connected case.  Throughout this
subsection we keep $n=2$, assume that $G$ is connected and that all edge
weights are strictly positive.  We write
\begin{equation}
  \mathcal{A}:=\mathcal{A}(R_1:\cdots:R_q).
\end{equation}
Let $e_1,\cdots,e_{q-1}$ be the standard generators of
$X=\mathbb{Z}_2^{q-1}$ and put $e_q=0$, so that
\begin{equation}
  g_i(x)=x+e_i,\quad i=1,\cdots,q,
\end{equation}
with $g_q=\id$.  We also write $\operatorname{Min}(F)$ for the set of
admissible minimizers of \eqref{eq:RTN_F(g)}; the goal of this
subsection is to provide a counting formula for the number of minimizers. We start with the precise necessary and sufficient conditions for equality in Thm.~\ref{thm:RTN}.

\begin{proposition}
  \label{prop:counting-equality-conditions}
  An admissible labeling $g:V\to \Sym(X)$ minimizes \eqref{eq:RTN_F(g)} if
  and only if the following two conditions hold.
  \begin{enumerate}[label=(\arabic*)]
  \item \label{item:counting-edge-involution}
    For every edge $e=(u,v)\in E$, the relative permutation
    $g(u)^{-1}g(v)$ satisfies
    \begin{equation}
      (g(u)^{-1}g(v))^2=\id.
    \end{equation}

  \item \label{item:counting-slice-mincut}
    For every $x\in X$, the slice
    \begin{equation}
      s_x(v)=g(v)(x)
    \end{equation}
    is terminal-valued on a minimal multiway partition: there exists
    $V=\Gamma_1^x\sqcup\cdots\sqcup\Gamma_q^x$ with area $\mathcal{A}$
    such that
    \begin{equation}
      s_x(v)=g_i(x),\quad v\in\Gamma_i^x.
    \end{equation}
  \end{enumerate}
\end{proposition}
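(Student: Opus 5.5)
The plan is to read off both directions from the chain of inequalities already established in the proof of Theorem~\ref{thm:RTN}. For any admissible $g$ that chain is
\begin{equation*}
  F(g) \;\ge\; \frac12\sum_{x\in X}\sum_{e=(u,v)\in E} w(e)\,\mathbf{1}_{g(u)(x)\ne g(v)(x)} \;\ge\; \frac12\sum_{x\in X}\mathcal{A}_x \;\ge\; \frac{|X|}{2}\,\mathcal{A} \;=\; 2^{q-2}\mathcal{A}.
\end{equation*}
Since Theorem~\ref{thm:RTN} shows that $\min F = 2^{q-2}\mathcal{A}$, an admissible $g$ is a minimizer if and only if every inequality in this chain is an equality. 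The task is therefore to translate equality at each step into conditions \ref{item:counting-edge-involution} and \ref{item:counting-slice-mincut}.

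\emph{First inequality.} It is the edge-wise sum of $w(e)\,d(g(u),g(v)) \ge \tfrac12 w(e)\, m(g(u)^{-1}g(v))$, which is Lemma~\ref{lem:m(g)} combined with left-invariance of the Cayley distance. All weights are strictly positive, so equality in the sum forces equality on every edge. By the proof of Lemma~\ref{lem:m(g)}, $d(h,\id)=m(h)/2$ holds if and only if every nontrivial cycle of $h$ has length $2$, that is, $h^2=\id$. This gives condition \ref{item:counting-edge-involution}. Note that $g(u)^{-1}g(v)$ is an involution exactly when $g(v)^{-1}g(u)$ is one, so the orientation of the edge does not matter.

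\emph{Second and third inequalities.} The second and third steps can be merged. For each $x$, the slice $s_x$ has terminal values $g_i(x)$, and these are pairwise distinct when $n=2$, as shown in the proof of Theorem~\ref{thm:1TN}. Lemma~\ref{lem:labeling} then gives $\mathcal{W}(s_x)\ge\mathcal{A}$ directly, so the summed inequality $\tfrac12\sum_x \mathcal{W}(s_x)\ge \tfrac{|X|}{2}\mathcal{A}$ is an equality if and only if $\mathcal{W}(s_x)=\mathcal{A}$ for every $x$. By Remark~\ref{rem:labeling}, this holds if and only if $s_x$ is the labeling associated to some minimal multiway partition $\{\Gamma^x_k\}$, which is condition \ref{item:counting-slice-mincut}.

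The step I expect to need the most care is this last characterization, because Remark~\ref{rem:labeling} was only sketched. The point to verify is the forward direction: if $\mathcal{W}(s)=\mathcal{A}$, then $s$ takes the value $y_i$ on a part $\Gamma_i$ of a minimal partition. I would argue it as follows.
\begin{itemize}
\item In the proof of Lemma~\ref{lem:labeling}, equality forces $E_{\ne}(s)=\mu(\{\Gamma_k\})$, and since weights are positive, $\{\Gamma_k\}$ must have area $\mathcal{A}$.
\item Suppose some terminal-free component $C$ of $G_s$ had a color $c$ different from the $y_i$ of the part $\Gamma_i$ it was assigned to. Because $G$ is connected, $C$ has an edge to another component. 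Every such edge lies in $E_{\ne}(s)=\mu$, so it joins $C\subseteq\Gamma_i$ to some $\Gamma_j$ with $j\ne i$.
\item Reassigning $C$ to such a $\Gamma_j$ keeps the partition admissible and does not increase its area, since all of $C$'s boundary edges are cut edges either way. So we may choose the assignment of terminal-free components freely among minimal partitions. More carefully, one can simply define $\Gamma_i$ as the set of vertices where $s=y_i$, assign the vertices with non-terminal colors arbitrarily, and observe that the resulting cut is contained in $E_{\ne}(s)$. Equality then forces no vertex to have a non-terminal color: otherwise its component's boundary edges would lie in $E_{\ne}(s)$ but not all in $\mu$, unless that component is adjacent only to parts $\Gamma_j$ different from its own. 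I would settle this by picking $\Gamma_i$ to be the part that maximizes adjacency to $C$, which gives a strict decrease and hence a contradiction with minimality.
\end{itemize}

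\emph{Converse.} Given conditions \ref{item:counting-edge-involution} and \ref{item:counting-slice-mincut}, each step of the chain is an equality: the first by the involution criterion of Remark~\ref{rem:m(g)}, and the slice step by $\mathcal{W}(s_x)=\mathcal{A}$ from Lemma~\ref{lem:labeling}. Hence $F(g)=2^{q-2}\mathcal{A}$, so $g$ is a minimizer.
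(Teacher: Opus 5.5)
Your proposal is correct and follows the paper's proof essentially verbatim: the paper also applies Lemma~\ref{lem:labeling} directly to each slice (merging your second and third steps), reads off the involution condition from edgewise equality using positive weights, and invokes Remark~\ref{rem:labeling} for the slice condition. Your extra verification of that remark is fine in its final form: assigning a terminal-free component $C$ to the part of any neighbouring component makes $\mu(\{\Gamma_k\})\subsetneq E_{\ne}(s)$, which contradicts equality, and connectivity of $G$ supplies the neighbour. However, your intermediate claim that reassigning $C$ keeps all its boundary edges cut is false---reassignment uncuts the edges to the new part, which strictly lowers the area and is itself the contradiction.
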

\begin{proof}
  Write $g_u=g(u)$ and $g_v=g(v)$.  For every edge $e=(u,v)$, Lemma
  \ref{lem:m(g)} gives
  \begin{equation}
    \label{eq:counting-edge-lower}
    d(g_u,g_v)\ge
    \frac{1}{2}|\{x\in X:g_u(x)\ne g_v(x)\}|.
  \end{equation}
  Summing over all edges and exchanging the order of summation,
  \begin{align}
    F(g)
    &\ge \frac{1}{2}\sum_{e=(u,v)\in E} w(e)
      |\{x\in X:g_u(x)\ne g_v(x)\}| \notag\\
    &=\frac{1}{2}\sum_{x\in X}\sum_{e=(u,v)\in E}
      w(e)\mathbf{1}_{g_u(x)\ne g_v(x)}. \label{eq:counting-slice-lower}
  \end{align}
  For fixed $x$, the boundary values of $s_x(v)=g(v)(x)$ are
  $g_i(x)$ on $R_i$.  These $q$ values are pairwise distinct for
  $n=2$.  Applying Lemma \ref{lem:labeling} to the slice $s_x$, we get
  \begin{equation}
    \sum_{e=(u,v)\in E} w(e)\mathbf{1}_{g_u(x)\ne g_v(x)}
    \ge \mathcal{A}.
  \end{equation}
  Therefore
  \begin{equation}
    F(g)\ge \frac{|X|}{2}\mathcal{A}=2^{q-2}\mathcal{A}.
  \end{equation}
  By Theorem \ref{thm:RTN}, this is the minimum value.

  If $g$ is a minimizer, every nonnegative error in the two inequalities
  above must vanish.  Since all edge weights are positive, equality in
  \eqref{eq:counting-edge-lower} holds for every edge.  By the proof of
  Lemma \ref{lem:m(g)}, this is equivalent to every nontrivial cycle of
  $g(u)^{-1}g(v)$ having length $2$, or equivalently
  $(g(u)^{-1}g(v))^2=\id$.  This gives condition
  \ref{item:counting-edge-involution}.

  Equality in the slice lower bound holds for every $x\in X$.  Remark
  \ref{rem:labeling} then shows that each slice is
  terminal-valued on a minimal multiway partition.  This gives condition
  \ref{item:counting-slice-mincut}.

  Conversely, if conditions \ref{item:counting-edge-involution} and
  \ref{item:counting-slice-mincut} hold, then every edge inequality
  \eqref{eq:counting-edge-lower} is an equality and every slice has
  disagreement cost $\mathcal{A}$.  Hence
  \begin{equation}
    F(g)=\frac{1}{2}\sum_{x\in X}\mathcal{A}
    =2^{q-2}\mathcal{A},
  \end{equation}
  so $g$ is a minimizer.
\end{proof}

Since we are considering in this section the case of arbitrary tensor networks where the minimal multiway cut might not be unique, we introduce next the set of all such minimal cuts.

\begin{definition}
  \label{def:cut-geodesic-data}
  Let $\mathfrak{M}$ be the set of \emph{minimal multiway cut labelings}
  \begin{equation}
    \lambda:V\to\{1,\cdots,q\},\quad
    \lambda(v)=i\quad (v\in R_i),
  \end{equation}
  whose cut area is $\mathcal{A}$.  Equivalently, $\lambda$ records a
  minimal multiway partition by
  \begin{equation}
    \Gamma_i=\lambda^{-1}(i).
  \end{equation}

  For each pair $i\ne j$, define the terminal-geodesic matching
  \begin{equation}\label{eq:terminal-geodesic-matching}
    \mathcal{P}_{ij}:=\{\{x,x+e_i+e_j\}:x\in X\}.
  \end{equation}
  This is exactly the set of transpositions appearing in the disjoint
  cycle decomposition of $g_i^{-1}g_j$.  Consequently the Cayley
  geodesic interval $[g_i,g_j]$ consists of the permutations obtained
  from $g_i$ by choosing an arbitrary subcollection of the pairs in
  $\mathcal{P}_{ij}$ and applying those transpositions.
\end{definition}

The next theorem is the main result of this section, characterizing
exactly the set of minimizers in terms of two combinatorial conditions on
functions $X \to \mathfrak M$.  Thus, for each $x\in X$, we choose a
minimal multiway cut labeling
\begin{equation}
  \lambda_x:V\to\{1,\cdots,q\}.
\end{equation}
Equivalently, $\lambda_x$ determines a partition of $V$ into $q$ blocks
\begin{equation}
  \Gamma_i(x):=\lambda_x^{-1}(i),\qquad i=1,\cdots,q,
\end{equation}
where the block $\Gamma_i(x)$ contains the terminal region $R_i$.  We
write
\begin{equation}
  \Lambda=(\lambda_x)_{x\in X}\in\mathfrak{M}^X
\end{equation}
for the resulting family of minimal multiway cut labelings.

\begin{theorem}[Characterization of minimizers]
  \label{thm:compatible-cut-geodesic-families}
  For a family $\Lambda=(\lambda_x)_{x\in X}\in\mathfrak{M}^X$, define
  pointwise maps
  \begin{equation}
    \label{eq:compatible-family-g}
    g_v^\Lambda(x):=g_{\lambda_x(v)}(x)=x+e_{\lambda_x(v)}.
  \end{equation}
  Then the maps $g_v^\Lambda$ define an admissible minimizer
  $g^\Lambda:V\to \Sym(X)$ if and only if the following two combinatorial
  conditions hold:

  \begin{enumerate}
  \item[\textup{(V)}] \namedlabel{item:compatible-vertex}{(V)}
    For every vertex $v$, the map
    \begin{equation}
      x\longmapsto x+e_{\lambda_x(v)}
    \end{equation}
    is a bijection of $X$.  Equivalently, for every $y\in X$,
    \begin{equation}
      \label{eq:vertex-bijection-condition}
      |\{i\in\{1,\cdots,q\}:\lambda_{y+e_i}(v)=i\}|=1.
    \end{equation}

  \item[\textup{(E)}] \namedlabel{item:compatible-edge}{(E)}
    For every edge $e=(u,v)\in E$, every $x\in X$, and
    \begin{equation}
      i=\lambda_x(u),\quad j=\lambda_x(v),
    \end{equation}
    one has
    \begin{equation}
      \label{eq:edge-geodesic-compatibility}
      \lambda_{x+e_i+e_j}(u)=i,\quad
      \lambda_{x+e_i+e_j}(v)=j.
    \end{equation}
  \end{enumerate}
  Equivalently, condition \ref{item:compatible-edge} says that, along
  the graph edge $(u,v)$, the relative permutation
  $(g_u^\Lambda)^{-1}g_v^\Lambda$ is a product of selected
  transpositions from the terminal geodesics $[g_i,g_j]$.

  Consequently the number of minimizers is given by
  \begin{equation}
    \label{eq:minimizer-count-compatible-families}
    |\operatorname{Min}(F)|
    = |\{\Lambda\in\mathfrak{M}^X \, : \, \Lambda \text{ satisfies \ref{item:compatible-vertex} and \ref{item:compatible-edge}}\}|.
  \end{equation}
\end{theorem}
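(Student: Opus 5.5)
The plan is to reduce everything to Proposition~\ref{prop:counting-equality-conditions} and translate its two conditions into the combinatorial language of $\Lambda$. First I will check that $\Lambda\mapsto g^\Lambda$ is injective, and that every minimizer arises from some $\Lambda$. Condition~\ref{item:counting-slice-mincut} of the Proposition says that each slice $s_x$ equals $g_{\lambda_x(v)}(x)$ for a minimal multiway labeling $\lambda_x\in\mathfrak M$. For $n=2$ the values $g_i(x)=x+e_i$ are pairwise distinct, so $\lambda_x(v)$ is determined by $s_x(v)$. Hence every minimizer is $g^\Lambda$ for a unique $\Lambda$, and distinct admissible families give distinct maps. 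This yields the counting formula \eqref{eq:minimizer-count-compatible-families} once the equivalence is proved.

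Next, condition (V) is exactly the requirement that each $g^\Lambda_v$ is a genuine element of $\Sym(X)$. The map $x\mapsto x+e_{\lambda_x(v)}$ is a bijection of the finite set $X$ iff it is surjective, i.e. every $y$ has exactly one preimage. Now $x+e_i=y$ forces $x=y+e_i$, so the preimages of $y$ are indexed by the $i$ with $\lambda_{y+e_i}(v)=i$; this gives \eqref{eq:vertex-bijection-condition}. Here I use that $e_i=-e_i$ in $\mathbb Z_2^{q-1}$. Admissibility is automatic, since $\lambda_x(v)=i$ on $R_i$, and slice minimality (condition~\ref{item:counting-slice-mincut}) holds by construction.

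It remains to show that, given (V), condition~\ref{item:counting-edge-involution} for the edge $(u,v)$ is equivalent to (E); this is the main step. Put $a=g_u^\Lambda$ and $b=g_v^\Lambda$. Since the Cayley distance and cycle type are conjugation invariant, I can work with $h=b a^{-1}$ instead of $a^{-1}b$: $h$ is an involution iff $a^{-1}b$ is. Write $y=a(x)=x+e_i$ with $i=\lambda_x(u)$, and $j=\lambda_x(v)$. Then $h(y)=b(x)=x+e_j=y+e_i+e_j$. So $h$ fixes $y$ when $i=j$, and otherwise sends $y$ to $y'=y+e_i+e_j$, with $\{y,y'\}\in\mathcal P_{ij}$.

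The condition $h^2=\id$ means $h(y')=y$. Write $y'=a(x')$ with $x'=a^{-1}(y')$. We need $b(x')=y$, i.e. $x'+e_{\lambda_{x'}(v)}=x+e_i$. The natural candidate is $x'=x+e_i+e_j$ with $\lambda_{x'}(u)=i$ and $\lambda_{x'}(v)=j$: then $a(x')=x'+e_i=x+e_j=y'$ and $b(x')=x+e_i=y$. That is precisely \eqref{eq:edge-geodesic-compatibility}, so (E) together with bijectivity of $a$ gives the involution property. For the converse I must show $x'$ is forced. From $a(x')=y'$ and $b(x')=y$ we get $x'+e_k=x+e_j$ and $x'+e_l=x+e_i$, where $k=\lambda_{x'}(u)$ and $l=\lambda_{x'}(v)$. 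This gives $e_k+e_l=e_i+e_j$, and since $\{e_1,\dots,e_{q-1},0\}$ has distinct pairwise sums away from the diagonal, $\{k,l\}=\{i,j\}$. I expect the obstacle to be excluding the swapped case $k=j$, $l=i$, which would give $x'=x$ and the contradiction $a(x)=y'$. The degenerate case $i=j$ also needs care, with (E) then reducing to a tautology. This equivalence also gives the stated interpretation: $h$ is a product of disjoint transpositions from the $\mathcal P_{ij}$, i.e. it lies on the terminal geodesics.
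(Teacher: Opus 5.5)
Your proposal is correct and follows essentially the same route as the paper: reduce to Proposition~\ref{prop:counting-equality-conditions}, identify (V) with bijectivity of each $g^\Lambda_v$, and prove that (E) is equivalent to the involution condition using the unique two-term sums of $\{e_1,\dots,e_{q-1},0\}$; the swapped case you flag as the obstacle is in fact already excluded by your own argument (it forces $x'=x$, contradicting $a(x)=y\neq y'$), which is exactly the paper's ``$y\neq x$, so $a\neq j$'' step. The only differences are cosmetic: you work with the conjugate $ba^{-1}$ in image coordinates rather than with $(g^\Lambda_u)^{-1}g^\Lambda_v$ directly, and the diagonal case $k=l$ you gloss over is harmless since $e_k+e_k=0\neq e_i+e_j$ for $i\neq j$.
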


\begin{proof}
  First let $g\in\operatorname{Min}(F)$.  By Proposition
  \ref{prop:counting-equality-conditions}, for every $x\in X$ the slice
  $s_x(v)=g(v)(x)$ is terminal-valued on a minimal multiway partition.
  Since the values $g_1(x),\cdots,g_q(x)$ are pairwise distinct, there
  is a unique $\lambda_x\in\mathfrak{M}$ such that
  \begin{equation}
    g(v)(x)=g_{\lambda_x(v)}(x),\quad v\in V.
  \end{equation}
  Thus $g=g^\Lambda$ for the family
  $\Lambda=(\lambda_x)_{x\in X}$.  Condition
  \ref{item:compatible-vertex} holds because
  each $g(v)$ is a permutation of $X$.

  We now derive condition \ref{item:compatible-edge}.  Fix an edge
  $e=(u,v)$ and put $f=g(u)$, $h=g(v)$. Proposition \ref{prop:counting-equality-conditions} gives $(f^{-1}h)^2=\id$.  Fix $x\in X$ and write $i=\lambda_x(u)$, $j=\lambda_x(v)$. If $i=j$, then \eqref{eq:edge-geodesic-compatibility} is immediate.
  Suppose $i\ne j$, and let $y=(f^{-1}h)(x)$. Then
  \begin{equation}
    f(y)=h(x)=x+e_j.
  \end{equation}
  Write $a=\lambda_y(u)$, $b=\lambda_y(v)$. Since $f(y)=y+e_a$, we get $y+e_a=x+e_j$. This is equivalent to $y=x+e_a+e_j$ because $n=2$ and $e_a=-e_a$. Since $(f^{-1}h)^2=\id$, we also have $h(y)=f(x)$, hence
  \begin{equation}
    y+e_b=x+e_i \quad \text{ and thus } \quad e_i+e_j=e_a+e_b.
  \end{equation}
  We now use the simplex property of $e_1, \ldots, e_{q-1}$. The vectors $e_1,\cdots,e_{q-1},e_q=0$ have \emph{unique two-term sums}: if $i\ne j$ and $e_i+e_j=e_a+e_b$, then
  $\{a,b\}=\{i,j\}$.  Also $y\ne x$, because $f(x)\ne h(x)$, so
  $a\ne j$. Hence
  \begin{equation}
    a=i,\quad b=j,\quad y=x+e_i+e_j,
  \end{equation}
  which is exactly \eqref{eq:edge-geodesic-compatibility}.

  \medskip

  Conversely, suppose $\Lambda\in\mathfrak{M}^X$ satisfies
  conditions \ref{item:compatible-vertex} and
  \ref{item:compatible-edge}.  By condition
  \ref{item:compatible-vertex}, each $g_v^\Lambda$ is a permutation of
  $X$.  Since every $\lambda_x$ has
  $\lambda_x(v)=k$ on $R_k$, the labeling $g^\Lambda$ is admissible.
  Each slice is terminal-valued on the minimal cut labeling $\lambda_x$,
  so condition \ref{item:counting-slice-mincut} of Proposition
  \ref{prop:counting-equality-conditions} holds.

  It remains to check condition \ref{item:counting-edge-involution}.
  Fix an edge $e=(u,v)$ and put $f=g_u^\Lambda,\quad h=g_v^\Lambda$. For $x\in X$, let $i=\lambda_x(u)$ and $j=\lambda_x(v)$.  If $i=j$,
  then $f(x)=h(x)$, so $f^{-1}h$ fixes $x$.  If $i\ne j$, condition
  \ref{item:compatible-edge} says that at $y=x+e_i+e_j$, the same ordered pair $(i,j)$ occurs. Thus
  \begin{equation}
    f(y)=y+e_i=x+e_j=h(x),\quad
    h(y)=y+e_j=x+e_i=f(x).
  \end{equation}
  Since $f$ is bijective, this means
  \begin{equation}
    (f^{-1}h)(x)=y,\quad (f^{-1}h)(y)=x.
  \end{equation}
  Thus $f^{-1}h$ is a product of fixed points and transpositions, so
  $(f^{-1}h)^2=\id$.  Condition
  \ref{item:counting-edge-involution} of Proposition
  \ref{prop:counting-equality-conditions} holds. Therefore
  $g^\Lambda$ is a minimizer.

  The formula \eqref{eq:minimizer-count-compatible-families} is this
  bijection written as an indicator sum over all $X$-indexed families of
  minimal cut labelings, or, equivalently, over all functions $X \to \mathfrak M$.
\end{proof}

We can now recover the unique multiway min-cut result as a corollary of
the previous theorem.
\begin{corollary}
  \label{cor:unique-mincut-from-characterization}
  Suppose that $G$ admits a unique minimal multiway cut labeling
  $\lambda:V\to\{1,\cdots,q\}$, and write
  $\Gamma_i=\lambda^{-1}(i)$.  Then the minimizer of $F(g)$ is unique
  and is given by
  \begin{equation}
    g(v)=g_i,\quad v\in\Gamma_i.
  \end{equation}
\end{corollary}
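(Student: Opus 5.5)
The plan is to deduce the corollary directly from Theorem~\ref{thm:compatible-cut-geodesic-families}, via the counting formula \eqref{eq:minimizer-count-compatible-families}. If the minimal multiway cut labeling is unique, then $\mathfrak{M}=\{\lambda\}$. Hence $\mathfrak{M}^X$ consists of exactly one family, the constant family $\Lambda=(\lambda_x)_{x\in X}$ with $\lambda_x=\lambda$ for every $x$. By the theorem, every minimizer is of the form $g^\Lambda$ for some $\Lambda\in\mathfrak{M}^X$. So there is at most one minimizer, and it must be the one built from this constant family.

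The next step is to identify that minimizer explicitly. For the constant family, \eqref{eq:compatible-family-g} gives
\begin{equation*}
  g_v^\Lambda(x)=x+e_{\lambda(v)}=g_{\lambda(v)}(x)
\end{equation*}
for all $x$. Therefore $g^\Lambda(v)=g_i$ whenever $v\in\Gamma_i=\lambda^{-1}(i)$, which is the configuration claimed in the statement.

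It remains to confirm existence, i.e.\ that the constant family actually satisfies \ref{item:compatible-vertex} and \ref{item:compatible-edge}, so that the counting formula yields exactly one minimizer rather than zero. Both checks are short.
\begin{itemize}
\item[] \textbf{Condition (V).} For fixed $v$, the map $x\mapsto x+e_{\lambda(v)}$ is a translation of $X$, hence a bijection.
\item[] \textbf{Condition (E).} For an edge $(u,v)$ with $i=\lambda(u)$ and $j=\lambda(v)$, we need $\lambda_{x+e_i+e_j}(u)=i$ and $\lambda_{x+e_i+e_j}(v)=j$. This holds trivially because $\lambda_{x+e_i+e_j}=\lambda$.
\end{itemize}
Alternatively, one can skip these checks and cite the proof of Theorem~\ref{thm:RTN}, which already shows that $g_*(v)=g_i$ on $\Gamma_i$ attains the minimum $2^{q-2}\mathcal{A}$.

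I do not expect a genuine obstacle. The only subtlety is that the characterization theorem assumes $G$ is connected with strictly positive weights, and the corollary should be read under those same standing assumptions of the subsection. Otherwise, Corollary~\ref{cor:unique-mincut} gives an independent proof.
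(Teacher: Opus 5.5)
Your proof is correct and follows the paper's argument essentially verbatim. Uniqueness forces $\mathfrak{M}^X$ to consist of the single constant family, conditions (V) and (E) hold trivially (a translation, and labels independent of $x$), and Theorem~\ref{thm:compatible-cut-geodesic-families} then yields exactly one minimizer, $g(v)=g_{\lambda(v)}$. Your remark that this relies on the subsection's standing assumptions of connected $G$ and strictly positive weights is apt.
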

\begin{proof}
  Since $\mathfrak{M}=\{\lambda\}$, there is only one family
  $\Lambda=(\lambda_x)_{x\in X}\in\mathfrak{M}^X$, namely
  $\lambda_x=\lambda$ for all $x$.  Condition
  \ref{item:compatible-vertex} holds because
  $x\mapsto x+e_{\lambda(v)}$ is a translation of $X$, and condition
  \ref{item:compatible-edge} is immediate because the labeling is
  independent of $x$.  Theorem
  \ref{thm:compatible-cut-geodesic-families} therefore gives exactly one
  minimizer, and for it
  \begin{equation}
    g_v^\Lambda(x)=g_{\lambda(v)}(x).
  \end{equation}
  Thus $g(v)=g_{\lambda(v)}$, equivalently $g(v)=g_i$ on
  $\Gamma_i$.
\end{proof}

Let us consider now an example of a connected tree with non-terminal-valued minimizers. Let $q=3$, so that $X=\mathbb{Z}_2^2$.  We write the four elements of
  $X$ as $00$, $10$, $01$, and $11$, with $e_1=10$, $e_2=01$, and $e_3=0$.  Thus
  \begin{equation}
    g_1=(00\ 10)(01\ 11),\quad
    g_2=(00\ 01)(10\ 11),\quad
    g_3=\id .
  \end{equation}

  Consider the unit-weight tree with vertices $t_1$, $t_2$, $t_3$, $u$, and $v$,
  and edges
  \begin{equation}
    (t_3,u),\quad (t_2,u),\quad (u,v),\quad (v,t_1),
  \end{equation}
  with terminal regions $R_i=\{t_i\}$. In words, $u$ is adjacent to
  $t_2,t_3$ and to $v$, while $v$ is adjacent to $t_1$: it is a star graph with $u$ in the center and $v$ on the branch between $u$ and $t_1$, see Figure~\ref{fig:three-terminal-tree-mincuts}, first panel.  Since the graph
  is a tree, separating the three terminals requires cutting at least two
  edges, and this lower bound is achieved. Hence
  \begin{equation}
    \mathcal{A}(R_1:R_2:R_3)=2.
  \end{equation}

  The elements of $\mathfrak{M}$ are easy to list.  A minimal multiway
  cut labeling $\lambda$ is determined by the pair of colors of the bulk vertices $u$ and $v$
  \begin{equation}
    (a,b):=(\lambda(u),\lambda(v)).
  \end{equation}
  The cut area is equal to $2$ exactly for
  \begin{equation}
    \mathfrak{M}
    =
    \{(1,1),\ (2,1),\ (2,2),\ (3,1),\ (3,3)\},
    \label{eq:tree-example-five-mincuts}
  \end{equation}
  where the notation records $(\lambda(u),\lambda(v))$, see Figure~\ref{fig:three-terminal-tree-mincuts}, the last 5 panels.

  \begin{figure}[htb]
    \centering
    \begin{minipage}{0.25\linewidth}
      \centering
      \includegraphics[width=\linewidth]{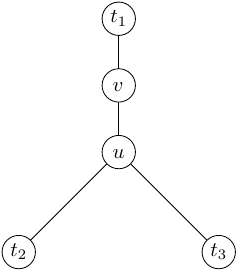}
      \par\smallskip
      \scriptsize (a) The tree
    \end{minipage}
    \hfill
    \begin{minipage}{0.25\linewidth}
      \centering
      \includegraphics[width=\linewidth]{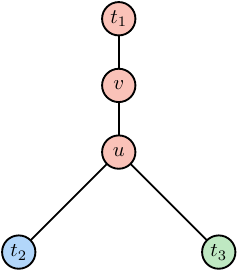}
      \par\smallskip
      \scriptsize (b) $(1,1)$
    \end{minipage}
    \hfill
    \begin{minipage}{0.25\linewidth}
      \centering
      \includegraphics[width=\linewidth]{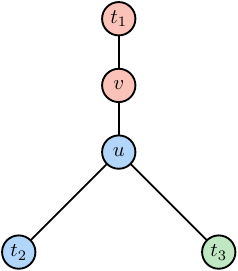}
      \par\smallskip
      \scriptsize (c) $(2,1)$
    \end{minipage}

    \vspace{0.8em}

    \begin{minipage}{0.25\linewidth}
      \centering
      \includegraphics[width=\linewidth]{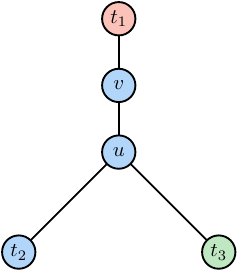}
      \par\smallskip
      \scriptsize (d) $(2,2)$
    \end{minipage}
    \hfill
    \begin{minipage}{0.25\linewidth}
      \centering
      \includegraphics[width=\linewidth]{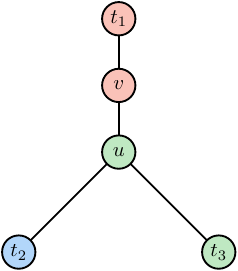}
      \par\smallskip
      \scriptsize (e) $(3,1)$
    \end{minipage}
    \hfill
    \begin{minipage}{0.25\linewidth}
      \centering
      \includegraphics[width=\linewidth]{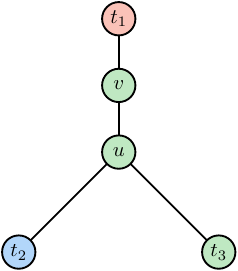}
      \par\smallskip
      \scriptsize (f) $(3,3)$
    \end{minipage}
    \caption{The connected three-terminal tree used in the tree example, followed by its five
    minimal multiway cut labelings.  In panels (b)--(f), the pair
    records the labels $(\lambda(u),\lambda(v))$ of the two bulk
    vertices. Note that the three terminals $\textcolor{pastelone}{t_1}, \textcolor{pasteltwo}{t_2}, \textcolor{pastelthree}{t_3}$ have fixed labels in panels (b)--(f).}
    \label{fig:three-terminal-tree-mincuts}
  \end{figure}

  We now apply Theorem \ref{thm:compatible-cut-geodesic-families}.  Let
  $\Lambda=(\lambda_x)_{x\in X}\in\mathfrak{M}^X$ be a compatible
  family and put
  \begin{equation}
    A_x:=\lambda_x(u),\qquad B_x:=\lambda_x(v).
  \end{equation}
  Thus $(A_x,B_x)$ must be one of the five pairs in
  \eqref{eq:tree-example-five-mincuts} for every $x\in X$.  Conditions
  \ref{item:compatible-vertex} and \ref{item:compatible-edge} give the
  following possibilities.  First, $A_x$ must be independent of $x$:
  \begin{equation}
    A_x= 1,\qquad A_x= 2,\qquad \text{or}\qquad A_x= 3.
  \end{equation}
  Hence $g_u^\Lambda$ is respectively $g_1$, $g_2$, or $g_3$.

  Once this constant value is fixed, condition
  \ref{item:compatible-edge} on the edge $(u,v)$ says exactly that
  $g_v^\Lambda$ lies on the terminal geodesic interval from
  $g_u^\Lambda$ to $g_1$.  Equivalently, $B_x$ is constant on the pairs
  of the matching $\mathcal{P}_{i1}$ when $A_x= i$.  Therefore
  \begin{align}
    |\{g_v^\Lambda:A_x= 1\}|
    &= |[g_1,g_1]|=1,\\
    |\{g_v^\Lambda:A_x= 2\}|
    &= |[g_2,g_1]|=4,\\
    |\{g_v^\Lambda:A_x= 3\}|
    &= |[g_3,g_1]|=4.
  \end{align}
  The counting formula \eqref{eq:minimizer-count-compatible-families}
  thus gives
  \begin{equation}
    |\operatorname{Min}(F)|=1+4+4=9.
  \end{equation}

  The nine minimizers are the following.  The terminal values are fixed
  by admissibility, so we only list the values at $u$ and $v$:
  \begin{equation}
  \begin{array}{c|c}
    g(u) & g(v)\\
    \hline
    g_1 & g_1\\
    g_2 & g_1\\
    g_2 & (00\ 10\ 11\ 01)\\
    g_2 & (00\ 01\ 11\ 10)\\
    g_2 & g_2\\
    g_3 & g_1\\
    g_3 & (00\ 10)\\
    g_3 & (01\ 11)\\
    g_3 & g_3
  \end{array}
  \end{equation}
  with cycles written on the set $X=\{00,10,01,11\}$.

  For instance, the choice
  \begin{equation}
    g(u)=g_3,\qquad g(v)=(00\ 10)
  \end{equation}
  is a minimizer, but $(00\ 10)$ is not one of the terminal
  permutations $g_1,g_2,g_3$.
  This illustrates the point of the
  compatible-family description: each slice $x\mapsto g^\Lambda_v(x)$ is
  associated to a minimal multiway cut, but the minimal cut may depend on
  $x$.

\subsection{Counting minimizers for three-terminal trees}

The tree example above is an instance of a closed formula for weighted
three-terminal trees.  In this subsection we specialize to $q=3$ and $X=\mathbb{Z}_2^2$,
and assume that the terminal regions are singleton vertices $R_i=\{t_i\}$, $i=1,2,3$. Terminal-free branches attached to the minimal subtree spanning
$t_1,t_2,t_3$ never create choices in a minimizer: since all edge
weights are positive, the labels on such a branch must be constant and
equal to the label at the attachment point. Thus only the ``terminal subtree'' matters.

We first record the corresponding weighted one-vertex consequence of
Theorem \ref{thm:1TN_w}.

\begin{corollary}
  \label{cor:weighted-q3-one-vertex}
  Let $a_1,a_2,a_3>0$.  For $h\in \Sym(X)$ consider the quantity $F$ from \eqref{eq:F(g)}
  \begin{equation}
    F(h)=a_1d(h,g_1)+a_2d(h,g_2)+a_3d(h,g_3).
  \end{equation}
  If $a_{\max}=\max(a_1,a_2,a_3)$, then
  \begin{equation}
    \label{eq:weighted-q3-one-vertex-value}
    \min_{h\in \Sym(X)}F(h)=2(a_1+a_2+a_3-a_{\max}),
  \end{equation}
  and the minimizers are exactly $\{g_k \, : \, a_k=a_{\max}\}$.
\end{corollary}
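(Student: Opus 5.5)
This is the special case $q=3$ of Theorem~\ref{thm:1TN_w}, so the plan is simply to specialize that theorem rather than re-derive it. With $q=3$ we have $X=\mathbb{Z}_2^2$ and $|X|=4$. The prefactor $2^{q-2}$ becomes $2$, and the weights $w_i$ are renamed $a_i$. Theorem~\ref{thm:1TN_w} then gives
\begin{equation*}
  \min_{h\in\Sym(X)}F(h)=2^{3-2}\Big(\sum_{i=1}^3 a_i - a_{\max}\Big)=2(a_1+a_2+a_3-a_{\max}),
\end{equation*}
which is \eqref{eq:weighted-q3-one-vertex-value}. The same theorem identifies the minimizers as $\{g_k:k\in K_{\max}\}=\{g_k: a_k=a_{\max}\}$.

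For a self-contained check, I would re-run the underlying argument in this case.
\begin{itemize}
  \item Apply Lemma~\ref{lem:m(g)} to each $h^{-1}g_i$. Summing gives $F(h)\ge \tfrac12\sum_{x\in X}\sum_i a_i\mathbf{1}_{h(x)\ne g_i(x)}$.
  \item For each $x$, the three values $x+e_1$, $x+e_2$, $x$ are pairwise distinct. So at most one indicator vanishes, and each inner sum is at least $a_1+a_2+a_3-a_{\max}$.
  \item Summing over the four points of $X$ and using the factor $\tfrac12$ gives the bound $2(a_1+a_2+a_3-a_{\max})$.
  \item The bound is attained at $h=g_k$ for any $k$ with $a_k=a_{\max}$. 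Indeed, $d(g_k,g_i)=2$ for $i\ne k$, since each $g_k^{-1}g_i$ is a product of two disjoint transpositions.
\end{itemize}

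For the converse, take a minimizer $h$; every inequality in the chain must then be tight.
\begin{itemize}
  \item Tightness in Lemma~\ref{lem:m(g)} for $i=3$, together with positivity of $a_3$, forces $h$ to be an involution. Tightness for $i=1,2$ forces $hg_1$ and $hg_2$ to be involutions.
  \item It follows that $h$ commutes with $g_1$ and $g_2$, which generate all translations of $X$.
  \item Pointwise tightness forces $h(0)=g_k(0)$ for some $k$ with $a_k=a_{\max}$. Translation-equivariance then gives $h=g_k$.
\end{itemize}

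The only step needing care is this equality analysis, and it is inherited verbatim from the proof of Theorem~\ref{thm:1TN_w}. It uses strict positivity of all $a_i$, which is assumed here.
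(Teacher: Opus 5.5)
Your proposal is correct and matches the paper, which states this corollary directly as the $q=3$ specialization of Theorem~\ref{thm:1TN_w} with no separate argument. Your self-contained check, including the equality analysis via involutions, commutation with translations, and $h(0)=g_k(0)$, reproduces that theorem's proof faithfully.
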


We now consider the free energy to be minimized on an arm of the star graph.
The following lemma uses the term \emph{geodesic subdivision} in the
following sense.  If the path in the graph between two vertices $\sigma$ and $\tau$ consists of $\rho$ edges, then a geodesic subdivision is a sequence
\begin{equation}
  \sigma=\pi_0 \to \pi_1 \to \cdots \to \pi_\rho=\tau
\end{equation}
whose total length is the distance between its endpoints:
\begin{equation}
  \sum_{\ell=1}^{\rho}d(\pi_{\ell-1},\pi_\ell)=d(\sigma,\tau).
\end{equation}
Thus the intermediate labels subdivide a shortest path from $\sigma$ to
$\tau$ in the Cayley graph of the symmetric group $\Sym_n$; some pieces may have length zero, corresponding to no label change across that edge. In what follows we shall consider the case where the permutation $\sigma^{-1}\tau$ consists of a product of disjoint transpositions. The case where $\sigma$ is the identity permutation and $\tau$ is a full cycle in $\Sym_n$ is very important in free probability theory, see \cite{banica2011free} or \cite{Fitter:2024nxn,Hu:2025geh} for some application to random tensor network theory.

\begin{lemma}
  \label{lem:weighted-path-geodesic-count}
  Let $P=(v_0,v_1,\ldots,v_m)$ be a path with positive edge weights, and fix endpoint labels
  \begin{equation}
    h(v_0)=\sigma,\qquad h(v_m)=\tau.
  \end{equation}
  Consider the minimal weight on the path and its multiplicity:
  \begin{equation}
    \mu=\min_{1\le r\le m}w(v_{r-1},v_r),\qquad
    \rho=|\{r \, : \, w(v_{r-1},v_r)=\mu\}|.
  \end{equation}
  Then the minimum possible path energy is
  \begin{equation}
    \mu\,d(\sigma,\tau).
  \end{equation}
  The minimizing labelings are obtained as follows: labels are constant
  across every edge of weight larger than $\mu$, and across the $\rho$
  minimum-weight edges they form a geodesic subdivision
  \begin{equation}
    \sigma=\pi_0 \to \pi_1 \to \cdots \to \pi_\rho=\tau
  \end{equation}
  satisfying
  \begin{equation}
    \sum_{\ell=1}^{\rho}d(\pi_{\ell-1},\pi_\ell)=d(\sigma,\tau).
  \end{equation}
  Let $\mathsf G_\rho(\sigma,\tau)$ denote the number of such geodesic
  subdivisions.  If $\sigma^{-1}\tau$ is a product of $D$ disjoint
  transpositions, then
  \begin{equation}
    \label{eq:path-geodesic-count-involution}
    \mathsf G_\rho(\sigma,\tau)=\rho^D.
  \end{equation}
\end{lemma}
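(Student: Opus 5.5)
The plan has three parts: a lower bound from the triangle inequality, a characterization of when it is tight, and a count of geodesic subdivisions.

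\textbf{Lower bound.} For any labeling $h$ of the path with the given endpoints, the triangle inequality for the Cayley distance gives
\begin{equation}
  \sum_{r=1}^m w(v_{r-1},v_r)\, d(h(v_{r-1}),h(v_r)) \ge \mu \sum_{r=1}^m d(h(v_{r-1}),h(v_r)) \ge \mu\, d(\sigma,\tau).
\end{equation}
The value $\mu\,d(\sigma,\tau)$ is attained by keeping the label equal to $\sigma$ up to one minimum-weight edge and equal to $\tau$ after it. Hence this value is the minimum.

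\textbf{Equality conditions.} A labeling is a minimizer iff both inequalities are equalities. The first is tight iff $d(h(v_{r-1}),h(v_r))=0$, i.e.\ the label is constant, on every edge with $w>\mu$. The second is tight iff the sequence of distinct consecutive labels has total length $d(\sigma,\tau)$. After collapsing the constant edges, the labels across the $\rho$ minimum-weight edges therefore form a geodesic subdivision $\pi_0\to\cdots\to\pi_\rho$. Conversely, every geodesic subdivision extends uniquely to a minimizing labeling by propagating labels across the heavy edges. So minimizers are in bijection with geodesic subdivisions of length $\rho$ (zero-length steps allowed).

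\textbf{Counting $\mathsf G_\rho(\sigma,\tau)$.} By left invariance of $d$, we may replace $\pi_\ell$ by $\sigma^{-1}\pi_\ell$ and reduce to $\sigma=\id$, $\tau=t_1\cdots t_D$ a product of disjoint transpositions. The key fact needed is the structure of the geodesic interval $[\id,\tau]$ when $\tau$ is an involution.
\begin{itemize}
  \item $\pi\in[\id,\tau]$ iff $\pi$ is a product of a subset of $\{t_1,\dots,t_D\}$. This is the noncrossing-partition characterization of geodesics: the cycles of $\pi$ must refine those of $\tau$, and for a transposition this leaves only a trivial cycle or the transposition itself. Alternatively, argue directly that $d(\id,\pi)+d(\pi,\tau)=D$ forces $\#(\pi)+\#(\pi^{-1}\tau)=|X|+\#(\tau)$, which by the standard genus inequality holds only when $\pi$ is noncrossing-below $\tau$.
  \item A chain $\id=\pi_0,\pi_1,\dots,\pi_\rho=\tau$ is geodesic iff each $\pi_{\ell-1}^{-1}\pi_\ell$ lies on the geodesic between consecutive terms. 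Geodesic order is the subset order on $\{t_1,\dots,t_D\}$, so the chain corresponds to a weakly increasing chain of subsets $\emptyset=S_0\subseteq S_1\subseteq\cdots\subseteq S_\rho=\{t_1,\dots,t_D\}$. Each step contributes distance $|S_\ell\setminus S_{\ell-1}|$, and these add up to $D$.
\end{itemize}
Such chains correspond bijectively to maps $\{1,\dots,D\}\to\{1,\dots,\rho\}$ that record the step at which each $t_k$ is added. This gives $\rho^D$.

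\textbf{Main obstacle.} The one nontrivial step is the first bullet: that every permutation on the geodesic is a sub-product of the transpositions. Definition~\ref{def:cut-geodesic-data} already asserts this for the terminal geodesics, and the same argument applies here. One can also prove it via Lemma~\ref{lem:m(g)}: if $d(\id,\pi)+d(\pi,\tau)=d(\id,\tau)$, compare moved points to get $m(\pi)+m(\pi^{-1}\tau)\le 2D$ with equality structure forcing both to be involutions with disjoint supports inside $\operatorname{supp}\tau$ and $\pi(x)=\tau(x)$ on the support of $\pi$.
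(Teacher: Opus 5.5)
Your proposal is correct and follows essentially the same route as the paper. You use the triangle-inequality lower bound $\mu\sum_r d(h(v_{r-1}),h(v_r))\ge \mu\,d(\sigma,\tau)$ with its equality case, and then count geodesic subdivisions as assignments of each of the $D$ disjoint transpositions of $\sigma^{-1}\tau$ to one of the $\rho$ minimum-weight edges. The only difference is that you sketch a justification, via Lemma~\ref{lem:m(g)} or the noncrossing characterization, that $[\id,\sigma^{-1}\tau]$ consists exactly of sub-products of these transpositions, which the paper takes for granted from Definition~\ref{def:cut-geodesic-data}.
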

\begin{proof}
  For any labeling of the path,
  \begin{equation}
    \sum_{r=1}^m w(v_{r-1},v_r)d(h(v_{r-1}),h(v_r))
    \ge
    \mu\sum_{r=1}^m d(h(v_{r-1}),h(v_r))
    \ge
    \mu\,d(\sigma,\tau),
  \end{equation}
  by the triangle inequality.  Equality holds precisely when every
  positive distance occurs on an edge of weight $\mu$ and the sequence of
  labels seen on those minimum-weight edges is geodesic from $\sigma$ to
  $\tau$.  This proves the first assertion and the description of
  minimizers.

  If $\sigma^{-1}\tau$ is a product of $D$ disjoint transpositions, then
  a Cayley geodesic from $\sigma$ to $\tau$ is obtained by applying each
  of these $D$ transpositions exactly once, in any order; see also Eq.~\eqref{eq:terminal-geodesic-matching}.  A geodesic
  subdivision across $\rho$ ordered minimum-weight edges is therefore the
  same thing as assigning each of the $D$ transpositions to one of the
  $\rho$ edges on which it is applied.  Since the transpositions are
  disjoint and commute, the intermediate labels are determined by these
  assignments.  Hence there are $\rho^D$ subdivisions.
\end{proof}

We arrive now at the main result of this section, the minimizer count for weighted three-terminal trees.

\begin{theorem}
  \label{thm:q3-tree-minimizer-count}
  Let $T$ be a connected weighted tree with singleton terminal vertices
  $t_1,t_2,t_3$ in the binary three-terminal problem
  $q=3$, $X=\mathbb{Z}_2^2$. Let $m$ be the \emph{median vertex} of the three terminals, i.e.~the unique
  vertex in the intersection $[t_1,t_2]\cap[t_1,t_3]\cap[t_2,t_3]$.

  Let $P_i$ be the path from $m$ to $t_i$; we assume that all the $P_i$'s have positive length, otherwise the tree is degenerate. Define
  \begin{equation}
    \mu_i=\min_{e\in P_i}w(e),\qquad
    \rho_i=|\{e\in P_i \, :\, w(e)=\mu_i\}|,
  \end{equation}
  \begin{equation}
    K_{\max}:=\{k\in\{1,2,3\}:\mu_k=\max(\mu_1,\mu_2,\mu_3)\}.
  \end{equation}
  Then
  \begin{equation}
    \label{eq:q3-tree-count-formula}
    |\operatorname{Min}(F)|
    =
    \sum_{k\in K_{\max}}\prod_{i\ne k}\rho_i^2.
  \end{equation}
\end{theorem}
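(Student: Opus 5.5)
We reduce the tree problem to a one-vertex problem at the median $m$ together with three independent path problems along the arms $P_1,P_2,P_3$. As noted before the statement, terminal-free branches hanging off the terminal subtree are forced to be constant, so we may assume $T=P_1\cup P_2\cup P_3$ glued at $m$. For any admissible $g$, write $h=g(m)$. By Lemma~\ref{lem:weighted-path-geodesic-count}, the energy restricted to $P_i$ is at least $\mu_i\,d(h,g_i)$, with equality exactly for the geodesic-subdivision labelings described there. Hence
\begin{equation}
  F(g)\ \ge\ \sum_{i=1}^3 \mu_i\, d(h,g_i)\ \ge\ 2\big(\mu_1+\mu_2+\mu_3-\max_k\mu_k\big),
\end{equation}
where the second inequality is Corollary~\ref{cor:weighted-q3-one-vertex} with $a_i=\mu_i$.

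The two bounds are attained simultaneously by choosing $h=g_k$ with $k\in K_{\max}$ and geodesic subdivisions on each arm. So the right-hand side is $\min F$. It also agrees with $2\mathcal{A}$ from Theorem~\ref{thm:RTN}, since a minimal three-way cut of the tree cuts one minimum-weight edge on each of two arms, giving area $\sum_i\mu_i-\max\mu$. This consistency check is reassuring but not needed.

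Next we characterize the minimizers. Any minimizer must saturate both inequalities. Saturating the second forces $g(m)=g_k$ for some $k\in K_{\max}$, by the minimizer statement of Corollary~\ref{cor:weighted-q3-one-vertex}. Saturating the first forces the labels on each arm $P_i$ to be a minimizing path labeling between $g_k$ and $g_i$. Conversely, any such choice is a minimizer. The arms share only the vertex $m$, so once $g(m)=g_k$ is fixed the choices on the three arms are independent. Distinct $k$ give disjoint sets of minimizers because the value at $m$ differs. The count is therefore
\begin{equation}
  |\operatorname{Min}(F)|=\sum_{k\in K_{\max}}\ \prod_{i=1}^3 \mathsf G_{\rho_i}(g_k,g_i).
\end{equation}

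It remains to evaluate each factor with Lemma~\ref{lem:weighted-path-geodesic-count}:
\begin{itemize}
  \item For $i=k$, we have $\sigma^{-1}\tau=\id$, so $D=0$ and the factor is $\rho_k^0=1$.
  \item For $i\ne k$, the relative permutation $g_k^{-1}g_i$ is the product of the $|X|/2=2$ disjoint transpositions of the matching $\mathcal{P}_{ki}$ in \eqref{eq:terminal-geodesic-matching}, so $D=2$ and the factor is $\rho_i^2$.
\end{itemize}
This yields \eqref{eq:q3-tree-count-formula}.

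The main point needing care is the claim that minimizing labelings on an arm are exactly the geodesic subdivisions, and that the count is $\rho^D$. One must check that a geodesic from $g_k$ to $g_i$ may only apply the transpositions of $g_k^{-1}g_i$, each exactly once. This follows because the Cayley distance is the minimal transposition length and $g_k^{-1}g_i$ is an involution with $D=2$ cycles. A second point is the degenerate case in which a minimum-weight edge is adjacent to $m$; the path lemma already handles this uniformly. As a sanity check, the earlier example has $\rho_1=2$, $\rho_2=\rho_3=1$ and $K_{\max}=\{1,2,3\}$, giving $1+4+4=9$, in agreement.
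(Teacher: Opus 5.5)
Your proposal is correct and follows the paper's proof essentially step for step: prune terminal-free branches, fix $h=g(m)$, bound each arm's energy by $\mu_i\,d(h,g_i)$ via Lemma~\ref{lem:weighted-path-geodesic-count}, force $h=g_k$ with $k\in K_{\max}$ via Corollary~\ref{cor:weighted-q3-one-vertex}, and count $\rho_i^{2}$ geodesic subdivisions on each arm with $i\ne k$. Your explicit saturation argument and your remark that different values of $k$ give disjoint sets of minimizers spell out points the paper leaves implicit.
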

\begin{proof}
  First remove all branches that do not contain a terminal.  In any
  minimizer such a branch must be constant, because it has no boundary
  condition and all edge weights are positive.  Thus it contributes no
  factor to the count.

  The assumption in the statement guarantees that $m$ is not itself one of the terminals. Fix the label $h:=g(m)$, where $g$ is a minimizer. By Lemma \ref{lem:weighted-path-geodesic-count}, the minimum energy on
  the arm $P_i$, with endpoint labels $h$ and $g_i$, is $\mu_i d(h,g_i)$. Therefore the optimization over the whole tree reduces to the weighted
  one-vertex problem
  \begin{equation}
    \min_{h\in \Sym(X)}\sum_{i=1}^3\mu_i d(h,g_i).
  \end{equation}
  By Corollary \ref{cor:weighted-q3-one-vertex}, the minimizing labels at
  $m$ are exactly $h=g_k$, $k\in K_{\max}$. For such a $k$, the arm $P_k$ has equal endpoint labels and contributes
  one labeling.  For $i\ne k$, the relative permutation $g_k^{-1}g_i$ is
  a product of $2^{q-2}=2$ disjoint transpositions on $X=\mathbb{Z}_2^2$.  Lemma
  \ref{lem:weighted-path-geodesic-count} therefore gives
  \begin{equation}
    \mathsf G_{\rho_i}(g_k,g_i)=\rho_i^2
  \end{equation}
  labelings on $P_i$.  Multiplying over the three independent arms and
  summing over $k\in K_{\max}$ gives
  \eqref{eq:q3-tree-count-formula}.
\end{proof}

For the unit-weight tree example in Figure \ref{fig:three-terminal-tree-mincuts}, the median is $u$ and
\begin{equation}
  (\rho_1,\rho_2,\rho_3)=(2,1,1),\qquad
  K_{\max}=\{1,2,3\}.
\end{equation}
The formula gives
\begin{equation}
  |\operatorname{Min}(F)|
  =
  1^2\cdot 1^2
  +2^2\cdot 1^2
  +2^2\cdot 1^2
  =9,
\end{equation}
recovering the count computed above.

As another simple illustration, for a weighted three-terminal star with a unique bulk vertex $b$ and edge weights
\begin{equation}
  w_{b t_1}=w_{b t_2}=1,\qquad w_{b t_3}=\varepsilon,\qquad 0<\varepsilon<1,
\end{equation}
one has
\begin{equation}
  K_{\max}=\{1,2\},\qquad
  \rho_1=\rho_2=\rho_3=1,
\end{equation}
so \eqref{eq:q3-tree-count-formula} gives
\(|\operatorname{Min}(F)|=2\).

In the case of unit-weight three-terminal trees, we have the following minimizer count; the proof follows directly from Theorem \ref{thm:q3-tree-minimizer-count} and is left to the reader.

\begin{corollary}
  \label{cor:unit-weight-q3-tree-minimizer-count}
  Let $T$ be a connected three-terminal tree in the binary
  three-terminal problem $q=3$, $X=\mathbb{Z}_2^2$, with singleton terminals $t_1,t_2,t_3$, and assume $w(e)=1$ for every edge $e$. Let $m$ be the median vertex of the three terminals, and let
  \begin{equation}
    \ell_i=|E([m,t_i])|\qquad (i=1,2,3)
  \end{equation}
  be the combinatorial lengths of the three terminal arms.

  Assuming $m$ is not one of the terminals, we have
  \begin{equation}
    \label{eq:unit-weight-q3-tree-count}
    |\operatorname{Min}(F)|
    =
    \ell_2^2\ell_3^2
    +\ell_1^2\ell_3^2
    +\ell_1^2\ell_2^2.
  \end{equation}
\end{corollary}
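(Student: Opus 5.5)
This is a direct specialization of Theorem~\ref{thm:q3-tree-minimizer-count} to unit weights, so the plan is simply to compute the quantities $\mu_i$, $\rho_i$ and $K_{\max}$ appearing there.

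First, the weights. Since every edge has weight $1$, each arm $P_i=[m,t_i]$ has $\mu_i=\min_{e\in P_i}w(e)=1$. Every edge of $P_i$ attains this minimum, so $\rho_i=|E(P_i)|=\ell_i$. The assumption that $m$ is not a terminal ensures $\ell_i\ge 1$ for each $i$. This is exactly the positivity hypothesis of Theorem~\ref{thm:q3-tree-minimizer-count}, so that theorem applies.

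Second, the maximizing set. Because $\mu_1=\mu_2=\mu_3=1$, all three indices attain the maximum, and $K_{\max}=\{1,2,3\}$.

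Third, substitute into \eqref{eq:q3-tree-count-formula}:
\begin{equation*}
|\operatorname{Min}(F)|=\sum_{k=1}^{3}\prod_{i\ne k}\ell_i^2=\ell_2^2\ell_3^2+\ell_1^2\ell_3^2+\ell_1^2\ell_2^2 ,
\end{equation*}
which is \eqref{eq:unit-weight-q3-tree-count}.

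Two points need checking. One is that the pruning of terminal-free branches in Theorem~\ref{thm:q3-tree-minimizer-count} does not change $\ell_i$. It does not: the $\ell_i$ are measured along the paths $[m,t_i]$, which lie inside the minimal subtree spanning $t_1,t_2,t_3$, and that subtree is untouched by the pruning. The other is that the summands are distinct minimizers, so the sum counts without overlap. Minimizers in different summands have different values $g(m)=g_k$, and the $g_k$ are pairwise distinct. As a sanity check, the unit-weight example with $(\ell_1,\ell_2,\ell_3)=(2,1,1)$ gives $1+4+4=9$, matching the explicit count computed earlier.
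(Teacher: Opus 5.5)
Your proposal is correct and is the argument the paper intends: the paper leaves this corollary to the reader as a direct specialization of Theorem~\ref{thm:q3-tree-minimizer-count}, which is exactly what you do by computing $\mu_i=1$, $\rho_i=\ell_i$, $K_{\max}=\{1,2,3\}$ and substituting into \eqref{eq:q3-tree-count-formula}. Your two side checks are also correct: pruning leaves the arms $[m,t_i]$ intact, and summands with different $g(m)=g_k$ are disjoint. The second check is already built into the theorem's proof.
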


\subsection{The triangle network}

We next compute a small connected non-tree example with symmetric
weights: a \emph{triangle graph}.  The result exhibits a sharp
transition between a degenerate constant regime and a unique aligned
regime.

\begin{proposition}
Fix $q=3$ and $X=\mathbb{Z}_2^2$.  Let $G$ be the triangle network in
Figure~\ref{fig:triangle-network-mincuts}, with singleton terminal
vertices $t_1,t_2,t_3$ and bulk vertices $u_1,u_2,u_3$.
The edge set is
\begin{equation}
\{t_i,u_i\}\quad (1\le i\le 3),
\qquad
\{u_i,u_j\}\quad (1\le i<j\le 3),
\end{equation}
with weights $w_{t_i u_i}=a$ and $w_{u_i u_j}=b$ for $1\le i<j\le 3$,
where $a,b>0$.
Then the minimizers of the RTN energy minimization problem defined
in \eqref{eq:RTN_F(g)} are as follows:

\begin{enumerate}
\item If $0<a<\frac32 b$, then $\min F=4a$, and there are exactly
three minimizers:
\begin{equation}\label{eq:triangle-small-spoke-minimizers}
g_{u_1}=g_{u_2}=g_{u_3}=g_k
\qquad (k=1,2,3).
\end{equation}

\item If $a=\frac32 b$, then $\min F=4a=6b$, and there are exactly four
minimizers: the three constant minimizers in
\eqref{eq:triangle-small-spoke-minimizers}, together with the aligned
minimizer
\begin{equation}\label{eq:triangle-aligned-minimizer}
g_{u_i}=g_i
\qquad (i=1,2,3).
\end{equation}

\item If $a>\frac32 b$, then $\min F=6b$, and the aligned labeling
\eqref{eq:triangle-aligned-minimizer} is the unique minimizer.
\end{enumerate}

Thus the symmetric triangle network has more than one minimizer
precisely when $a\le \frac32 b$.
\end{proposition}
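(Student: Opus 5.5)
The plan is to reduce everything to the multiway cut problem on this six-vertex graph and then apply Theorem~\ref{thm:RTN} together with the characterization in Theorem~\ref{thm:compatible-cut-geodesic-families}.

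\textbf{Step 1: enumerate minimal multiway partitions.} A multiway partition is determined by the labels $\lambda(u_1),\lambda(u_2),\lambda(u_3)\in\{1,2,3\}$. For each spoke $\{t_i,u_i\}$ we pay $a$ if $\lambda(u_i)\ne i$, and for each triangle edge we pay $b$ if its endpoints disagree. The candidates are:
\begin{itemize}
\item the \emph{constant} labelings $\lambda\equiv k$, which cost $2a$;
\item the \emph{aligned} labeling $\lambda(u_i)=i$, which costs $3b$.
\end{itemize}
I will check the remaining labelings case by case and show each is strictly worse. If exactly two of the $u_i$ share a label, the triangle cost is $2b$. Making the odd vertex agree with its own terminal (for instance $(k,k,3)$ with $k\in\{1,2\}$) leaves one wrong spoke, cost $a+2b$. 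Otherwise the cost is at least $2a+2b$. Since $a+2b>\min(2a,3b)$ always holds (if $a\le \tfrac32 b$ then $a+2b>2a$; otherwise $a+2b>3b$), none of these is minimal. Three distinct labels other than the aligned one give at least $2a+3b$ if they form a derangement, or $a+3b$ if they form a transposition, which is again worse. Hence:
\begin{itemize}
\item $\mathcal{A}=2a$ with $\mathfrak{M}$ equal to the three constant labelings when $a<\tfrac32 b$;
\item $\mathcal{A}=3b$ with $\mathfrak{M}=\{\text{aligned}\}$ when $a>\tfrac32 b$;
\item all four labelings are minimal when $a=\tfrac32 b$.
\end{itemize}
By Theorem~\ref{thm:RTN} with $2^{q-2}=2$, this gives $\min F=4a$ and $6b$ respectively.

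\textbf{Step 2: the regime $a>\tfrac32 b$.} Here $\mathfrak{M}$ is a singleton, so Corollary~\ref{cor:unique-mincut-from-characterization} gives the unique aligned minimizer directly.

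\textbf{Step 3: the regime $a\le\tfrac32 b$.} Apply Theorem~\ref{thm:compatible-cut-geodesic-families} to families $\Lambda=(\lambda_x)_{x\in X}$ taking values in $\mathfrak{M}$. This is the main obstacle: $\mathfrak{M}$ is no longer a singleton, so non-constant families must be excluded. I will use the edge condition~(E) on the triangle edges $(u_i,u_j)$.

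Suppose $\lambda_x\equiv k$ is constant. Then $\lambda_x(u_i)=\lambda_x(u_j)$ and (E) is vacuous, so the constraint has to come from elsewhere. I will use condition~(E) on the spoke $(t_i,u_i)$. There $\lambda_x(t_i)=i$ and $\lambda_x(u_i)=k$. If $k\ne i$, (E) forces $\lambda_{x+e_i+e_k}(u_i)=k$. Since every element of $\mathfrak{M}$ is either constant or aligned, this pins down $\lambda_{x+e_i+e_k}$:
\begin{itemize}
\item in the strict regime, $\mathfrak{M}$ contains only constant labelings, so $\lambda_{x+e_i+e_k}$ is constant $k$;
\item at equality, $\lambda_{x+e_i+e_k}$ could a priori be constant $k$ or aligned, but aligned would require $\lambda(u_i)=i\ne k$, so it is again constant $k$.
\end{itemize}
Ranging over $i\ne k$ shows that the set $\{x:\lambda_x\equiv k\}$ is closed under translation by $e_i+e_k$ for both $i\ne k$. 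These two vectors generate $\mathbb{Z}_2^2$ (e.g.\ for $k=3$ they are $e_1,e_2$; for $k=1$ they are $e_1+e_2$ and $e_1$). So once some $x$ has a constant label $k$, every $x$ has constant label $k$.

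Similarly, if some $\lambda_x$ is aligned, then on the triangle edge $(u_1,u_2)$ condition~(E) forces $\lambda_{x+e_1+e_2}$ to give $u_1$ label $1$ and $u_2$ label $2$. That labeling is not constant, so it is aligned. Using the other two triangle edges gives translations by $e_1+e_3=e_1$ and $e_2$, so aligned propagates to all of $X$.

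Hence $\Lambda$ is constant in $x$. Condition~(V) then holds automatically, and the minimizers are exactly $g^\Lambda$ for $\Lambda$ ranging over $\mathfrak{M}$. This gives three minimizers for $a<\tfrac32 b$ and four at $a=\tfrac32 b$, matching the explicit $g$'s in the statement.
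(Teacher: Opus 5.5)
Your proof is correct and is essentially the paper's argument. It uses the same classification of minimal cut labelings, Corollary~\ref{cor:unique-mincut-from-characterization} for $a>\frac32 b$, and the spoke-edge use of condition (E) forcing $\lambda_{x+e_i+e_k}=C_k$, with $\{e_i+e_k\}_{i\ne k}$ generating $\mathbb{Z}_2^2$; this last step is exactly what the paper does at $a=\frac32 b$.

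The differences are minor:
\begin{itemize}
\item The paper treats $a<\frac32 b$ differently: all slices are constant, so $g_{u_1}=g_{u_2}=g_{u_3}=h$, and then Corollary~\ref{cor:weighted-q3-one-vertex} applies.
\item Your aligned-propagation step is redundant, since a family with no constant slice is already all-aligned.
\item In the three-distinct-labels case the true costs are $2a+3b$ for a transposition and $3a+3b$ for a $3$-cycle, not your stated bounds. Your weaker lower bounds still suffice.
\end{itemize}
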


\begin{figure}[htb]
  \centering
  \begin{minipage}{0.28\linewidth}
    \centering
    \includegraphics[width=\linewidth]{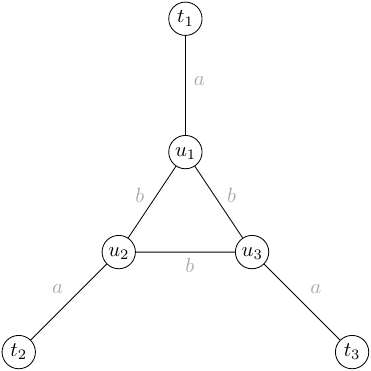}
    \par\smallskip
    \scriptsize (a) The triangle network
  \end{minipage}
  \hfill
  \begin{minipage}{0.28\linewidth}
    \centering
    \includegraphics[width=\linewidth]{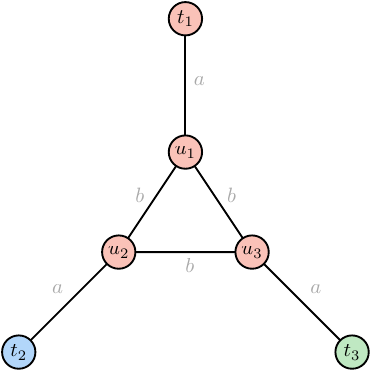}
    \par\smallskip
    \scriptsize (b) $C_1$
  \end{minipage}
  \hfill
  \begin{minipage}{0.28\linewidth}
    \centering
    \includegraphics[width=\linewidth]{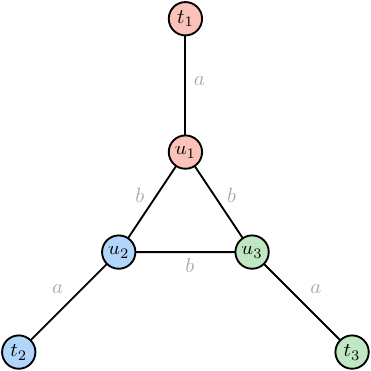}
    \par\smallskip
    \scriptsize (c) $I$
  \end{minipage}
  \caption{The symmetric triangle network, followed by representative
  minimum multiway cut labelings.  Panel (b) shows the constant labeling
  $C_1$, while panel (c) shows the aligned labeling $I$. The three terminals $\textcolor{pastelone}{t_1}, \textcolor{pasteltwo}{t_2}, \textcolor{pastelthree}{t_3}$ have fixed labels in the last two panels.}
  \label{fig:triangle-network-mincuts}
\end{figure}

\begin{proof}
We first classify the minimum multiway partitions.  A terminal-valued
partition is determined by labels $\ell_i\in\{1,2,3\}$ assigned to the
three bulk vertices.  Its cut area is
\begin{equation}\label{eq:triangle-network-general-cut-area}
C_{a,b}(\ell_1,\ell_2,\ell_3)
=
a\sum_{i=1}^3\mathbf 1_{\ell_i\neq i}
+
b\sum_{1\le i<j\le 3}\mathbf 1_{\ell_i\neq \ell_j}.
\end{equation}
There are two evident symmetric competitors.  The aligned partition
$\ell_i=i$ for $1\le i\le 3$ has area $3b$.  The three constant
partitions $\ell_1=\ell_2=\ell_3=k$ for $k=1,2,3$ have area $2a$.
Every other labeling of the bulk vertices is neither constant nor
aligned.  Hence it has at least one wrong terminal spoke and at least
two disagreeing bulk-triangle edges, so by
\eqref{eq:triangle-network-general-cut-area} its area is at least
$a+2b$.  If $a\le 3b/2$, then $a+2b>2a$, and if $a\ge 3b/2$, then
$a+2b>3b$.
Therefore the minimum cut labelings are exactly:
\[
\begin{array}{c|c|c}
\text{regime} & \mathcal A & \text{minimum cut labelings}\\
\hline
0<a<\frac32 b & 2a & C_1,C_2,C_3\\
a=\frac32 b & 2a=3b & C_1,C_2,C_3,I\\
a>\frac32 b & 3b & I,
\end{array}
\]
where $C_k(u_1)=C_k(u_2)=C_k(u_3)=k$ and $I(u_i)=i$. By Theorem~\ref{thm:RTN} applied with $q=3$, $\min F=2\mathcal A$, which gives the three displayed values of $\min F$.

It remains to count all permutation minimizers, not only the
terminal-valued ones.  If $a>3b/2$, the minimum multiway partition is
unique, so Corollaries~\ref{cor:unique-mincut} or \ref{cor:unique-mincut-from-characterization} force every minimizer to be terminal-valued on that partition.  This gives the
unique aligned minimizer.

If $0<a<3b/2$, then every slice of a minimizer must use one of the
constant minimum cut labelings $C_1,C_2,C_3$.  Thus for each
$x\in X$ all three bulk vertices have the same slice value,
$g_{u_1}(x)=g_{u_2}(x)=g_{u_3}(x)$.  Hence
$g_{u_1}=g_{u_2}=g_{u_3}=h$ for some permutation
$h\in \Sym(X)$.  The energy then reduces to
$F=a\sum_{i=1}^3 d(h,g_i)$.  Corollary~\ref{cor:weighted-q3-one-vertex},
applied with $a_1=a_2=a_3=1$, gives
$\sum_{i=1}^3 d(h,g_i)\ge 4$, with equality exactly for
$h\in\{g_1,g_2,g_3\}$.
Thus the only minimizers in this regime are the three constant
terminal-valued labelings in
\eqref{eq:triangle-small-spoke-minimizers}.

Finally assume $a=\frac32 b$.
The minimum cut labelings are $C_1,C_2,C_3,I$.  Applying the compatible
cut-geodesic formula of Theorem
\ref{thm:compatible-cut-geodesic-families} to this four-element set
gives only the four constant families
\begin{equation}\label{eq:triangle-critical-compatible-families}
\lambda_x= C_1,\qquad
\lambda_x= C_2,\qquad
\lambda_x= C_3,\qquad
\lambda_x= I.
\end{equation}
Indeed, let $\Lambda=(\lambda_x)_{x\in X}$ be a compatible family.  We
claim that if one slice is one of the constant labelings, then all
slices are the same constant labeling.  Suppose
$\lambda_x=C_k$.  For each $i\ne k$, condition
\ref{item:compatible-edge} on the spoke $t_i u_i$ gives
\begin{equation}
  \lambda_{x+e_i+e_k}(u_i)=k.
\end{equation}
Among the four minimum cut labelings $C_1,C_2,C_3,I$, the only one
which labels $u_i$ by $k$ with $i\ne k$ is $C_k$ itself, since
$I(u_i)=i$.  Hence
\begin{equation}
  \lambda_{x+e_i+e_k}=C_k
  \qquad (i\ne k).
\end{equation}
The two vectors $e_i+e_k$ with $i\ne k$ are distinct nonzero elements
of $X=\mathbb Z_2^2$, and hence generate $X$.  Iterating the preceding
implication therefore forces $\lambda_y=C_k$ for every $y\in X$.
Thus no compatible family can mix a constant slice with any other type,
or mix two different constant types.  If no slice is one of
$C_1,C_2,C_3$, then every slice is $I$, so the family is
$\lambda_x= I$.

Conversely, each of the four constant-in-$x$ families in
\eqref{eq:triangle-critical-compatible-families} is compatible:
condition \ref{item:compatible-vertex} holds because every vertex map
is a fixed translation $x\mapsto x+e_j$, and condition
\ref{item:compatible-edge} is immediate because the labels do not
depend on $x$.  They give exactly the three constant minimizers and the
aligned minimizer.  This proves the minimizer count.
\end{proof}

Non-terminal-valued minimizers arise when the equality theorem permits
different minimizing partitions on different slices of $X$, and
the resulting compatibility constraints still leave enough freedom to
mix those slices coherently. This slice-mixing mechanism, and the
structural conditions that prevent or allow it, will be explored in
depth in the next subsection.

\subsection{A sufficient condition for terminal-valued minimizers}

We now record a simple condition under which all the minimizers are terminal-valued, and thus in bijection with the set $\mathfrak M$ of minimal multiway cuts. Recall that $n=2$, $G$ is connected and that all edge weights are strictly positive.  Let us introduce the set of oriented graph edges
\begin{equation}
  \overrightarrow E
  :=
  \{(u,v):(u,v)\in E\}.
\end{equation}
For $\lambda\in\mathfrak{M}$ and distinct terminal labels $i\ne j$, define the oriented
$i \to j$ cut edges of the minimum cut labeling $\lambda$ by
\begin{equation}\label{eq:oriented-crossing-pattern}
  E_{ij}(\lambda)
  :=
  \{(u,v)\in\overrightarrow E:
  \lambda(u)=i,\ \lambda(v)=j\}.
\end{equation}
Thus $E_{ij}(\lambda)$ remembers not only which edges cross between the ``$i$'' and ``$j$'' regions of the cut labeling, but also on which side of the edge the two terminal labels lie.

We say that a minimizer $g:V\to \Sym(X)$ is \emph{terminal-valued} if there
exists some $\lambda\in\mathfrak{M}$ such that
\begin{equation}
  g(v)=g_{\lambda(v)},\quad v\in V.
\end{equation}

\begin{proposition}\label{prop:oriented-edge-pattern-rigidity}
Assume that, for every $i=1,\ldots,q-1$, the map
\begin{equation}\label{eq:star-oriented-pattern-injectivity}
  \mathfrak{M}\to 2^{\overrightarrow E},
  \qquad
  \lambda\mapsto E_{iq}(\lambda)
\end{equation}
is \emph{injective}.  Then every minimizer of $F$ is terminal-valued.
\end{proposition}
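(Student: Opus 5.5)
The plan is to reduce the statement to Theorem~\ref{thm:compatible-cut-geodesic-families} and show that injectivity of the oriented crossing patterns forces the compatible family $\Lambda$ to be constant in $x$.

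\textbf{Step 1: reduce to a compatible family.} Let $g\in\operatorname{Min}(F)$. By Theorem~\ref{thm:compatible-cut-geodesic-families}, there is a family $\Lambda=(\lambda_x)_{x\in X}\in\mathfrak{M}^X$ with $g=g^\Lambda$, that is, $g(v)(x)=x+e_{\lambda_x(v)}$, and $\Lambda$ satisfies \ref{item:compatible-vertex} and \ref{item:compatible-edge}. It then suffices to prove that $\lambda_x$ does not depend on $x$. Indeed, if $\lambda_x=\lambda$ for all $x$, then $g(v)(x)=g_{\lambda(v)}(x)$ for every $x$, so $g(v)=g_{\lambda(v)}$ with $\lambda\in\mathfrak{M}$, which is exactly the definition of terminal-valued.

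\textbf{Step 2: condition (E) preserves the $iq$ pattern under translation by $e_i$.} Fix $i\in\{1,\ldots,q-1\}$ and $x\in X$, and let $(u,v)\in E_{iq}(\lambda_x)$, so $\lambda_x(u)=i$ and $\lambda_x(v)=q$. Since $e_q=0$, condition \ref{item:compatible-edge} applied to this edge with $j=q$ gives
\begin{equation*}
  \lambda_{x+e_i}(u)=i,\qquad \lambda_{x+e_i}(v)=q .
\end{equation*}
Hence $(u,v)\in E_{iq}(\lambda_{x+e_i})$, and so $E_{iq}(\lambda_x)\subseteq E_{iq}(\lambda_{x+e_i})$. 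One point needs care here: condition \ref{item:compatible-edge} is stated for every edge $(u,v)$, and I will read this as applying to either ordering of the endpoints, which matches the oriented set $\overrightarrow E$. Applying the same inclusion at the point $x+e_i$, and using $2e_i=0$ in $\mathbb{Z}_2^{q-1}$, gives the reverse inclusion. Therefore
\begin{equation*}
  E_{iq}(\lambda_x)=E_{iq}(\lambda_{x+e_i}).
\end{equation*}

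\textbf{Step 3: use injectivity and generate $X$.} By the injectivity hypothesis \eqref{eq:star-oriented-pattern-injectivity}, the equality of patterns in Step 2 gives $\lambda_{x+e_i}=\lambda_x$ for every $x\in X$ and every $i<q$. The vectors $e_1,\ldots,e_{q-1}$ generate $X=\mathbb{Z}_2^{q-1}$, so iterating this gives $\lambda_x=\lambda_0$ for all $x$. This completes the argument via Step 1.

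The only delicate point is the orientation convention in Step 2. The proof uses the pair $(u,v)$ in exactly the orientation in which it appears in $E_{iq}$, so no extra symmetry beyond the statement of \ref{item:compatible-edge} is needed. Condition \ref{item:compatible-vertex} is not used, except implicitly through Theorem~\ref{thm:compatible-cut-geodesic-families}.
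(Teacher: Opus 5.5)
Your proposal is correct and follows essentially the same route as the paper's proof. You reduce to a compatible family via Theorem~\ref{thm:compatible-cut-geodesic-families}, then use condition \ref{item:compatible-edge} with $e_q=0$ and $2e_i=0$ to get $E_{iq}(\lambda_x)=E_{iq}(\lambda_{x+e_i})$, and conclude by injectivity and the fact that $e_1,\ldots,e_{q-1}$ generate $X$. The paper differs only cosmetically: it first proves $E_{ij}(\lambda_x)=E_{ij}(\lambda_{x+e_i+e_j})$ for arbitrary pairs before setting $j=q$. It also observes explicitly that \ref{item:compatible-edge} is invariant under reversing the edge orientation, since this swaps $i$ and $j$ but leaves $x+e_i+e_j$ unchanged, which justifies your reading of the condition.
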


\begin{proof}
Let $g\in\operatorname{Min}(F)$ be a minimizer.  By Theorem~\ref{thm:compatible-cut-geodesic-families}, there is a compatible family $\Lambda=(\lambda_x)_{x\in X}\in\mathfrak{M}^X$ such that, for all $v \in V$ and $x \in X$,
\begin{equation}\label{eq:rigidity-compatible-family-representation}
  g(v)(x)=g_{\lambda_x(v)}(x)=x+e_{\lambda_x(v)}.
\end{equation}
Our goal is to show that $\lambda_x$ is independent of $x$.

First fix distinct $i,j\in\{1,\ldots,q\}$.  We claim that the compatibility equations force
\begin{equation}\label{eq:oriented-pattern-slice-translation}
  E_{ij}(\lambda_x)=E_{ij}(\lambda_{x+e_i+e_j})
  \qquad (x\in X).
\end{equation}
Condition~\ref{item:compatible-edge} may be applied to either
orientation of an undirected edge: reversing the endpoints only swaps
$i$ and $j$, while the translation $x+e_i+e_j$ is unchanged.  Now let
$y=x+e_i+e_j$.  If $(u,v)\in E_{ij}(\lambda_x)$, then $\lambda_x(u)=i$ and $\lambda_x(v)=j$. Applying condition~\ref{item:compatible-edge} to the edge $(u,v)$ gives
\begin{equation}
  \lambda_y(u)=i,\qquad \lambda_y(v)=j,
\end{equation}
so $(u,v)\in E_{ij}(\lambda_y)$.  This proves one inclusion in
\eqref{eq:oriented-pattern-slice-translation}.  The reverse inclusion
follows by the same argument with $x$ replaced by $y$, since over $\mathbb{Z}_2$ one has $y+e_i+e_j=x$; this proves the claim in \eqref{eq:oriented-pattern-slice-translation}.

Taking $j=q$ and using $e_q=0$, we obtain, for all $i=1,\ldots,q-1$ and $x\in X$,
\begin{equation}
  E_{iq}(\lambda_x)=E_{iq}(\lambda_{x+e_i}).
\end{equation}
By the injectivity assumption \eqref{eq:star-oriented-pattern-injectivity}, this implies that $\lambda_x=\lambda_{x+e_i}$. Since the vectors $e_1,\ldots,e_{q-1}$ generate $X=\mathbb{Z}_2^{q-1}$, these equalities force all slices to use the same minimum cut labeling: there is a fixed $\lambda\in\mathfrak{M}$ with $\lambda_x=\lambda$ for every $x\in X$. Substituting this into
\eqref{eq:rigidity-compatible-family-representation} gives
\begin{equation}
  g(v)(x)=g_{\lambda(v)}(x)
  \qquad  \forall v\in V,\ x\in X,
\end{equation}
and therefore $g(v)=g_{\lambda(v)}$ for every $v\in V$.  Thus $g$ is
terminal-valued.
\end{proof}

\begin{remark}
A careful analysis of the proof shows that the conclusion still holds
if one replaces the assumption by the following
condition.  Let
\begin{equation}
  \mathcal P\subseteq\{(i,j):1\le i<j\le q\}
\end{equation}
be a set of terminal pairs such that
\begin{equation}
  \{e_i+e_j:\ (i,j)\in\mathcal P\}
\end{equation}
spans $X=\mathbb Z_2^{q-1}$.  Assume that, for every
$(i,j)\in\mathcal P$, the coordinate map
\begin{equation}
  \mathfrak{M}\to
  2^{\overrightarrow E},
  \qquad
  \lambda\mapsto E_{ij}(\lambda)
\end{equation}
is injective.  Then all minimizers are terminal-valued.
\end{remark}

We consider now two examples, the symmetric triangle network from Figure~\ref{fig:triangle-network-mincuts} and the three-terminal tree from Figure~\ref{fig:three-terminal-tree-mincuts}, to illustrate the criterion in Proposition~\ref{prop:oriented-edge-pattern-rigidity}.

\begin{example}
Consider the triangle network of Figure~\ref{fig:triangle-network-mincuts}.
For $q=3$, the criterion in
Proposition~\ref{prop:oriented-edge-pattern-rigidity} requires
injectivity of the two maps
\begin{equation}
  \lambda\mapsto E_{13}(\lambda),
  \qquad
  \lambda\mapsto E_{23}(\lambda).
\end{equation}
It is enough to check this at the critical ratio $a=\frac32b$, where
the set of minimum cut labelings is largest:
\begin{equation}
  \mathfrak{M}=\{C_1,C_2,C_3,I\}.
\end{equation}
Here $C_k$ labels all three bulk vertices by $k$, while
$I(u_i)=i$.  A direct computation from
\eqref{eq:oriented-crossing-pattern} gives
\begin{equation}
\begin{array}{c|c|c}
\lambda & E_{13}(\lambda) & E_{23}(\lambda)\\
\hline
C_1 & \{(u_3,t_3)\} & \varnothing\\
C_2 & \varnothing & \{(u_3,t_3)\}\\
C_3 & \{(t_1,u_1)\} & \{(t_2,u_2)\}\\
I   & \{(u_1,u_3)\} & \{(u_2,u_3)\}.
\end{array}
\end{equation}
The entries in the $E_{13}$ column are pairwise distinct, and the
entries in the $E_{23}$ column are pairwise distinct.  Hence both maps
$\mathfrak{M}\to 2^{\overrightarrow E}$ are injective at the critical
ratio.  In the regimes $0<a<\frac32b$ and $a>\frac32b$ the set
$\mathfrak{M}$ is respectively the subset $\{C_1,C_2,C_3\}$ and the
singleton $\{I\}$, so injectivity follows by restriction.  Therefore
Proposition~\ref{prop:oriented-edge-pattern-rigidity} applies in every
regime of the symmetric triangle network, giving another proof that all
its minimizers are terminal-valued.
\end{example}

\begin{example}[the three-terminal tree]
Now consider the connected tree of Figure~\ref{fig:three-terminal-tree-mincuts},
with unit edge weights and edges
\begin{equation}
  \{t_3,u\},\qquad \{t_2,u\},\qquad \{u,v\},\qquad \{v,t_1\}.
\end{equation}
As in \eqref{eq:tree-example-five-mincuts}, the minimum cut labelings
are
\begin{equation}
  \mathfrak{M}
  =
  \{\lambda_{11},\lambda_{21},\lambda_{22},\lambda_{31},\lambda_{33}\},
\end{equation}
where $\lambda_{ab}(u)=a$ and $\lambda_{ab}(v)=b$, while
$\lambda_{ab}(t_i)=i$ on the terminals.  The two star-pair oriented
crossing patterns are
\begin{equation}
\begin{array}{c|c|c}
\lambda & E_{13}(\lambda) & E_{23}(\lambda)\\
\hline
\lambda_{11} & \{(u,t_3)\} & \varnothing\\
\lambda_{21} & \varnothing & \{(u,t_3)\}\\
\lambda_{22} & \varnothing & \{(u,t_3)\}\\
\lambda_{31} & \{(v,u)\} & \{(t_2,u)\}\\
\lambda_{33} & \{(t_1,v)\} & \{(t_2,u)\}.
\end{array}
\end{equation}
Thus the map $\lambda\mapsto E_{13}(\lambda)$ is not injective, since
\begin{equation}
  E_{13}(\lambda_{21})=E_{13}(\lambda_{22})=\varnothing
  \qquad\text{but}\qquad
  \lambda_{21}\ne\lambda_{22}.
\end{equation}
Likewise, $\lambda\mapsto E_{23}(\lambda)$ is not injective, for
example
\begin{equation}
  E_{23}(\lambda_{31})=E_{23}(\lambda_{33})=\{(t_2,u)\}
  \qquad\text{but}\qquad
  \lambda_{31}\ne\lambda_{33}.
\end{equation}
Hence, the oriented-edge-pattern rigidity criterion fails for this tree.
This failure is consistent with the explicit non-terminal-valued
minimizers found above: distinct minimum cut labelings can have the
same oriented crossing pattern in a terminal-pair direction, so the
compatibility equations do not force all slices $\lambda_x$ to agree.
\end{example}

\begin{remark}
The unique-minimum-partition case is the special case $|\mathfrak{M}|=1$, for which the injectivity condition is automatic. Proposition~\ref{prop:oriented-edge-pattern-rigidity} goes beyond the uniqueness of the minimum multiway cut: it allows several minimum cut labelings, provided their oriented crossing patterns against the terminal $q$ already distinguish them.  Under that condition, the edge compatibility relations synchronize the minimizing slices across $X$, so no non-terminal-valued permutation minimizer can be formed by mixing different minimum cuts at different slice points.
\end{remark}

\section{A conjecture for higher \texorpdfstring{$n$}{n}}\label{sec:n>2}

We now return to the R\'enyi index $n>2$; actually, most of the results in this section also hold for $n=2$, but we focus on the strict inequality case since the binary case has been dealt with in detail in the previous sections.  The terminal permutations
are still the twist operators from Definition~\ref{def:mutli_ent}, but
it is useful to spell out the notation in the form used below.  Put
$X=\mathbb{Z}_n^{q-1}$, let $e_1,\ldots,e_{q-1}$ be the standard
generators of $X$, and put $e_q=0$.  For $a\in X$, write
$\tau_a:X\to X$, $\tau_a(x)=x+a$, for translation by $a$.  The $q$
terminal permutations are $g_i=\tau_{e_i}$ for $1\le i\le q$, and
$g_q=\id$.  Equivalently, for $i<q$, $g_i$ translates the $i$-th
coordinate by one; in this way, $g_i$ is a product of $n^{q-2}$ disjoint cycles of length $n$.  For example, when $q=3$ and $n=5$, one has (see also Fig.~\ref{fig:Z52}, left panel)
$X=\mathbb Z_5^2$ and
\begin{equation}
  g_1(a,b)=(a+1,b),\qquad
  g_2(a,b)=(a,b+1),\qquad
  g_3(a,b)=(a,b),
\end{equation}
with all coordinates understood modulo $5$.  In cycle notation this is
\begin{equation}
\begin{aligned}
g_1
&=\prod_{b\in\mathbb Z_5}
  \bigl((0,b)\ (1,b)\ (2,b)\ (3,b)\ (4,b)\bigr),\\
g_2
&=\prod_{a\in\mathbb Z_5}
  \bigl((a,0)\ (a,1)\ (a,2)\ (a,3)\ (a,4)\bigr),\\
g_3&=\id .
\end{aligned}
\end{equation}

\begin{figure}[htb]
  \centering
  \includegraphics[width=0.4\linewidth]{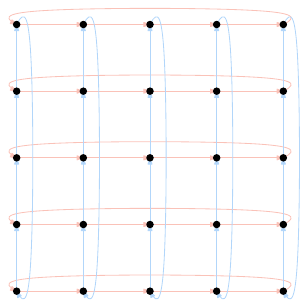}
  \hfill
  \includegraphics[width=0.4\linewidth]{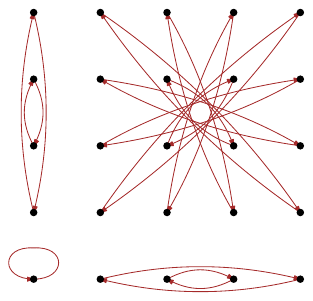}
  \caption{For the example $X=\mathbb Z_5^2$, the left panel shows the
  terminal permutations: $g_1$ translates each horizontal copy of
  $\mathbb Z_5$ by one, $g_2$ translates each vertical copy of
  $\mathbb Z_5$ by one, and $g_3=\id$ (not illustrated).  The right
  panel shows the conjectured minimizer for the single-tensor graph,
  namely the inversion map $\pi(x)=-x$.}
  \label{fig:Z52}
\end{figure}

For $i\ne j$, the relative terminal permutation
$g_i^{-1}g_j=\tau_{e_j-e_i}$ is a product of $n^{q-2}$ disjoint cycles
of length $n$.  Hence $d(g_i,g_j)=n^{q-2}(n-1)$ for $i\ne j$. As in the binary case, there is no common geodesic element, belonging to the intersection of all intervals $[g_i,g_j]$. Actually the intersection of two arbitrary geodesic intervals is as small as it could be.

\begin{proposition}
\label{prop:geodesic}
For distinct unordered pairs of terminals $\{i,j\}$ and $\{k,\ell\}$, one has
\begin{equation}
[g_i,g_j]\cap[g_k,g_\ell]
=
\begin{cases}
\{g_m\}, & \{i,j\}\cap\{k,\ell\}=\{m\},\\
\varnothing, & \{i,j\}\cap\{k,\ell\}=\varnothing.
\end{cases}
\label{eq:pairwise-geodesic-interval-intersection}
\end{equation}
Consequently the relative interiors
$[g_i,g_j]\setminus\{g_i,g_j\}$ are pairwise disjoint.
\end{proposition}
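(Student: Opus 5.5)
The plan is to turn the geodesic condition into a pointwise constraint on $h(x)-x$ and then reduce the statement to intersecting two affine lines in $X=\mathbb{Z}_n^{q-1}$.

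\textbf{Step 1: a refinement fact about geodesics.} For $\sigma,\tau\in\Sym(X)$, if $h\in[\sigma,\tau]$, then every cycle of $\sigma^{-1}h$ is contained in a single cycle of $\sigma^{-1}\tau$. To see this, write $|a|:=d(a,\id)$ and set $a=\sigma^{-1}h$ and $b=h^{-1}\tau$. Then $ab=\sigma^{-1}\tau$, and the geodesic condition reads $|a|+|b|=|ab|$. Write $a=(ab)\,b^{-1}$ and express $b^{-1}$ as a product of $|b|$ transpositions. Right-multiplying by a transposition changes the cycle count by exactly $\pm1$. Since $|a|=|ab|-|b|$, every one of these $|b|$ multiplications must raise the cycle count by one. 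A transposition raises the cycle count only when both its points lie in one cycle, and it then splits that cycle. So $a$ is obtained from $ab$ by successive splittings, and its orbits refine those of $ab$. This is the standard absolute-order/non-crossing fact (cf.\ Biane), and it is the only place where real care is needed. I would give this short argument rather than merely cite it.

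\textbf{Step 2: pointwise affine-line constraint.} Apply Step 1 with $\sigma=g_i$ and $\tau=g_j$. The cycles of $g_i^{-1}g_j=\tau_{e_j-e_i}$ are the cosets $x+\mathbb{Z}(e_j-e_i)$. Hence $h\in[g_i,g_j]$ forces $g_i^{-1}h(x)=h(x)-e_i\in x+\mathbb{Z}(e_j-e_i)$, that is,
\begin{equation*}
  h(x)-x\in L_{ij}:=\{e_i+s(e_j-e_i):s\in\mathbb{Z}_n\}\qquad\text{for all }x\in X.
\end{equation*}
So if $h\in[g_i,g_j]\cap[g_k,g_\ell]$, then $h(x)-x\in L_{ij}\cap L_{k\ell}$ for every $x$.

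\textbf{Step 3: intersecting the lines.} Compare coordinates, recalling that $e_q=0$.
\begin{itemize}
  \item If $\{i,j\}\cap\{k,\ell\}=\varnothing$, at least one of $i,j$ is $<q$; first take $i,j<q$. A common point $e_i+s(e_j-e_i)=e_k+t(e_\ell-e_k)$ has vanishing $i$-th and $j$-th coordinates on the right-hand side. This gives $1-s=0$ and $s=0$, a contradiction since $1\neq0$ in $\mathbb{Z}_n$. If instead $j=q$, so $e_j=0$ and $L_{ij}=\{(1-s)e_i:s\in\mathbb{Z}_n\}$, then the $i$-th coordinate gives $1-s=0$. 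The point is then $0$, while the right-hand side has a nonzero $k$- or $\ell$-coordinate unless $\{k,\ell\}\ni q$, which is excluded by disjointness. In all sub-cases the intersection is empty, so $[g_i,g_j]\cap[g_k,g_\ell]=\varnothing$.
  \item If $\{i,j\}\cap\{k,\ell\}=\{m\}$, both lines pass through $e_m$, with directions $e_j-e_m$ and $e_\ell-e_m$ for distinct $j,\ell\neq m$. A coordinate check on the $j$- and $\ell$-coordinates (those among them that are $<q$) shows the only common point is $e_m$. Hence $h(x)=x+e_m$ for every $x$, so $h=g_m$. Conversely, $g_m$ is an endpoint of both intervals, so the intersection equals $\{g_m\}$.
\end{itemize}

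\textbf{Step 4: relative interiors.} The consequence about relative interiors follows at once from Step 3. Two distinct intervals meet at most in a shared endpoint, which is removed when passing to relative interiors.

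I expect the main obstacle to be Step 1, and in particular stating the refinement property cleanly. The remaining steps are routine coordinate bookkeeping in $\mathbb{Z}_n^{q-1}$, where one has to be careful with the degenerate index $q$.
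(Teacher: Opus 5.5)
Your proposal is correct and follows the same route as the paper: the geodesic cycle-refinement property turns $h\in[g_i,g_j]$ into the pointwise constraint $h(x)-x\in e_i+\langle e_j-e_i\rangle$, and the statement then reduces to intersecting two affine lines through simplex vertices in $\mathbb{Z}_n^{q-1}$. Your Step 1 supplies a proof of the refinement fact that the paper invokes without stating, and your explicit coordinate checks, including the degenerate index $q$, are a concrete version of the paper's affine-independence argument.
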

\begin{proof}
  Fix $a\ne b$.  Recall that, for $v\in X$, $\tau_v$ denotes the
  translation permutation $\tau_v(x)=x+v$.  By left-invariance of the
  Cayley metric, if
  $\alpha\in[g_a,g_b]$, then
  \begin{equation}
    \beta:=g_a^{-1}\alpha\in[\id,g_a^{-1}g_b]
    =
    [\id,\tau_{e_b-e_a}].
  \end{equation}
  The cycles of $\tau_{e_b-e_a}$ are exactly the affine lines
  \begin{equation}
    x+\langle e_b-e_a\rangle,\qquad x\in X.
  \end{equation}
  The geodesic fact above therefore implies that $\beta$ preserves each
  such affine line.  Equivalently, for every $x\in X$,
  \begin{equation}
    \label{eq:geodesic-affine-line-constraint}
    \alpha(x)-x
    \in e_a+\langle e_b-e_a\rangle
    =
    \{(1-t)e_a+te_b:t\in\mathbb Z_n\}.
  \end{equation}

  Now suppose first that the two unordered pairs share exactly one
  terminal.  Relabeling the three indices, it is enough to consider
  $\{i,j\}$ and $\{i,k\}$, with $i,j,k$ distinct.  If
  $\alpha\in[g_i,g_j]\cap[g_i,g_k]$, then
  \eqref{eq:geodesic-affine-line-constraint} gives, for every $x$,
  \begin{equation}
    \alpha(x)-x\in
    \bigl(e_i+\langle e_j-e_i\rangle\bigr)
    \cap
    \bigl(e_i+\langle e_k-e_i\rangle\bigr).
  \end{equation}
  This intersection is the single point $\{e_i\}$.  To see this, write
  \begin{equation}
    (1-t)e_i+te_j=(1-s)e_i+se_k,\qquad s,t\in\mathbb Z_n.
  \end{equation}
  Since the vectors $e_1,\ldots,e_{q-1},e_q=0$ are the vertices of the
  standard simplex, comparing the coordinates indexed by
  $\{i,j,k\}\setminus\{q\}$ gives $s=t=0$.  Thus
  $\alpha(x)=x+e_i=g_i(x)$ for every $x$, and hence
  $\alpha=g_i$.  This proves
  $[g_i,g_j]\cap[g_i,g_k]=\{g_i\}$, and the same argument gives the
  stated singleton $\{g_m\}$ for any common terminal $m$.

  Finally suppose that $\{i,j\}\cap\{k,\ell\}=\varnothing$.  If
  $\alpha$ belonged to both intervals, then for every $x$ the
  displacement $\alpha(x)-x$ would lie in
  \begin{equation}
    \bigl(e_i+\langle e_j-e_i\rangle\bigr)
    \cap
    \bigl(e_k+\langle e_\ell-e_k\rangle\bigr).
  \end{equation}
  This intersection is empty.  Indeed an equality
  \begin{equation}
    (1-t)e_i+te_j=(1-s)e_k+se_\ell
  \end{equation}
  would give two affine combinations of the simplex vertices with total
  coefficient $1$.  Comparing the coordinates of
  $e_1,\ldots,e_{q-1}$, and then using the total coefficient, gives
  uniqueness of these affine coordinates.  This is impossible because
  the two unordered pairs of vertices are disjoint.  Hence the two
  intervals are disjoint.
\end{proof}

\subsection{The single tensor minimizer}
Recall that to find the saddle of a single random tensor, we need to optimize the following quantity
\begin{equation}
  \min_{g\in \Sym(X)} F(g) = \min_{g\in \Sym(X)} \sum_{i=1}^q d(g, g_i)
\end{equation}
where we set all edge weights to be equal to 1 and we consider $X=\mathbb{Z}_n^{q-1}$ for any $q$ and $n>2$ in this section.

If the multiway cut conjecture holds as in Section~\ref{sec:n=2}, then $\{g_1, \dots, g_q\}$ is expected to be the set of minimizers for the free energy and one can obtain the following values for the free energy
\begin{equation}
  F(g_1) =  \sum_{i=1}^qd(g_1, g_i)= (q-1)n^{q-2}(n-1)
\end{equation}
However, the above value cannot always be the minimum value of the free energy for $n>2$. We propose the following counterexample.

\begin{definition}
  Let the reflection permutation $\pi\in \Sym(X)$ (see Fig.~\ref{fig:Z52} right panel) be defined as
  \begin{equation}
    \pi(x) = -x \quad \forall x\in X
  \end{equation}
  The Cayley distances from $\pi$ to the terminal permutations depend on the parity of $n$. If $n$ is odd,
  \begin{equation}
    d(\pi,g_i)=\frac{n^{q-1}-1}{2},\qquad 1\leq i \leq q.
  \end{equation}
  If $n$ is even,
  \begin{equation}
    d(\pi,g_i)=\frac{n^{q-1}}{2}\quad \text{ for } 1\leq i< q, \qquad \text{ and }
    \qquad
    d(\pi,\id)=\frac{n^{q-1}-2^{q-1}}{2}.
  \end{equation}
  Therefore
  \begin{equation}
    F(\pi) = \sum_{i=1}^qd(\pi, g_i) = \left\{\begin{array}{ll}
      \dfrac{q(n^{q-1}-1)}{2} & \text{if $n$ is odd} \vspace{1em}\\
      \dfrac{qn^{q-1}-2^{q-1}}{2} & \text{if $n$ is even}.
    \end{array}\right.
  \end{equation}
\end{definition}
\begin{remark}
  We emphasize that $F(\pi)\leq F(g_1)$ with the equality holding only when $q=2$ (the bipartite case, corresponding to the R\'enyi entropy, discussed extensively in the literature), $n=2$ (discussed in the previous sections), or $(q, n)=(3, 3)$ (can be shown by enumerating all possible permutations). Hence, the multiway cut conjecture is false for the other choices of $(q, n)$.
\end{remark}

In fact, one can obtain the reflection permutation by solving for a set of conditions that force the bound suggested in Lemma \ref{lem:m(g)} to be saturated.
\begin{lemma}
  We have $d(g, g_i) = \frac{1}{2}m(gg_i^{-1})$ for all $i=1, \dots, q$ if and only if
  \begin{equation}
    g=\left(\prod_{i=1}^{q-1}g_i^{k_i}\right)\pi \quad \text{for some integers $k_i\in \mathbb{Z}_n$}.
  \end{equation}
\end{lemma}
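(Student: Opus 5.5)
The plan is to reduce the equality condition to an involution condition using Remark~\ref{rem:m(g)}, and then solve that condition with the translation structure of $X=\mathbb{Z}_n^{q-1}$, much as in the proof of the corollary to Theorem~\ref{thm:1TN}.

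\emph{Reduction.} By right-invariance of the Cayley distance, $d(g,g_i)=d(gg_i^{-1},\id)$. By the proof of Lemma~\ref{lem:m(g)} and Remark~\ref{rem:m(g)}, $d(h,\id)=\frac12 m(h)$ holds if and only if every nontrivial cycle of $h$ has length $2$, that is, $h^2=\id$. So the condition in the statement is equivalent to
\begin{equation}
  (gg_i^{-1})^2=\id\qquad\text{for all } i=1,\dots,q.
\end{equation}
For the ``if'' direction, write $\prod_i g_i^{k_i}=\tau_a$ with $a=\sum_i k_ie_i$, so that $g=\tau_a\pi$. Then $gg_i^{-1}=\tau_a\pi\tau_{-e_i}$ acts as $x\mapsto a+e_i-x$. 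This map is visibly an involution, which gives the equality for every $i$.

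\emph{``Only if''.} Taking $i=q$ (where $g_q=\id$) gives $g^2=\id$, so $g=g^{-1}$. For $i<q$, the relation $gg_i^{-1}gg_i^{-1}=\id$ together with $g=g^{-1}$ rearranges to
\begin{equation}
  g\,\tau_{-e_i}\,g^{-1}=\tau_{e_i}.
\end{equation}
The $\tau_{e_i}$ generate all translations, so $g\tau_vg^{-1}=\tau_{-v}$ for every $v\in X$. The reflection $\pi$ satisfies the same relation, $\pi\tau_{-v}\pi^{-1}=\tau_v$. Hence $\pi g$ commutes with every translation:
\begin{equation}
  \pi g\tau_v=\pi\tau_{-v}g=\tau_v\pi g .
\end{equation}

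\emph{Main step.} The key point, and the only one needing care, is that a permutation of $X$ commuting with all translations is itself a translation. Setting $c=(\pi g)(0)$, we get $(\pi g)(x)=(\pi g)(\tau_x 0)=\tau_x(\pi g)(0)=x+c$, which is the same argument as in the corollary to Theorem~\ref{thm:1TN}. Therefore
\begin{equation}
  g=\pi\tau_c=\tau_{-c}\pi .
\end{equation}
Since $-c$ can be written as $\sum_i k_ie_i$ with $k_i\in\mathbb{Z}_n$, this is exactly the claimed form $g=\bigl(\prod_{i=1}^{q-1}g_i^{k_i}\bigr)\pi$. As a consistency check, such $g$ automatically satisfies $g^2=\id$.
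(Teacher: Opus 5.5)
Your proof is correct, but for the ``only if'' direction it takes a different route from the paper. The paper makes the same reduction to $g^2=\id$ and $(gg_i^{-1})^2=\id$, then rewrites the latter as $g(x-e_i)=g(x)+e_i$ and reads it coordinatewise. Since translating $x$ by $e_i$ leaves the $j$-th coordinate of $g(x)$ unchanged for $j\neq i$, each coordinate of $g(x)$ depends only on the corresponding coordinate of $x$. Hence $g=\sigma_1\times\cdots\times\sigma_{q-1}$ with $\sigma_i\in\Sym_n$. The problem then collapses to the one-dimensional condition $\sigma_i^2=(\sigma_i\tau_n^{-1})^2=\id$, which is solved by $\sigma_i(x_i)=k_i-x_i$.

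You instead stay global. You observe that conjugation by $g$ inverts every translation, exactly as conjugation by $\pi$ does, so $\pi g$ commutes with all translations. It is then itself a translation, by the same argument as in the corollary to Theorem~\ref{thm:1TN}.

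Your route is shorter and coordinate-free: it uses only that $X$ is abelian and that $e_1,\dots,e_{q-1}$ generate it. You also verify the ``if'' direction directly, whereas the paper obtains it through its chain of equivalences. The paper's route is more hands-on and has the advantage of exhibiting the product structure of the solutions along the way.
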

\begin{proof}
  Using Lemma \ref{lem:m(g)}, we find that $d(g, g_i)=d(gg_i^{-1}, \id)=\frac{1}{2}m(gg_i^{-1})$ for all $i$ if and only if
  \begin{equation}\label{eq:reflection decomposition}
    (gg_1^{-1})^2 = \dots = (gg_{q-1}^{-1})^2 = g^2 = \id
  \end{equation}
  where we use $g_q=\id$. Picking a particular $i\in\{1, \dots, q-1\}$, then
  \begin{equation}
    \begin{split}
      (gg_i^{-1})^2=\id, \quad g^2=\id \iff & gg_i^{-1}g^{-1} = g_i, \quad g=g^{-1} \\
      \iff & g(x-e_i) = g(x)+e_i \quad \forall x\in X, \quad g=g^{-1}
    \end{split}
  \end{equation}
  Note that from above, the $j$-th ($j\neq i$) component of $g(x)$ is the same as that of $g(x-e_i)$. Hence, the $j$-th component of $g(x)$ is independent from the $i$-th component of $x$ and the permutation $g$ which satisfies Eq.(\ref{eq:reflection decomposition}) can be decomposed as follows:
  \begin{equation}
    g(x_1, \dots, x_{q-1}) = (\sigma_1(x_1), \dots, \sigma_{q-1}(x_{q-1})) \quad \text{where $\sigma_i\in \Sym_n$ for all $i=1, \dots, q-1$}
  \end{equation}
  Eq.(\ref{eq:reflection decomposition}) is then equivalent to
  \begin{equation}
    \sigma_i^{2} = (\sigma_i\tau_n^{-1})^2=\id\quad \forall i=1, \dots, q-1
  \end{equation}
  where $\tau_n=(1\dots n)$. Observe that
  \begin{equation}
    \begin{split}
      \sigma_i^{2} = (\sigma_i\tau_n^{-1})^2=\id \iff & \sigma_i = \sigma_i^{-1}, \quad\sigma_i\tau_n^{-1}\sigma_i^{-1}=\tau_n \\
      \iff & \exists k_i\in \mathbb{Z}_n: \sigma_i(x_i) = k_i - x_i \quad \forall x_i\in\mathbb{Z}_n \\
      \iff & \exists k_i\in \mathbb{Z}_n: \sigma_i(x_i) = \tau_n^{k_i}\pi(x_i) \\
      \iff & \sigma_i\in\{\tau_n^{k_i}\pi: k_i\in \mathbb{Z}_n\}
    \end{split}
  \end{equation}
  Hence, it follows that
  \begin{equation}
    g=\left(\prod_{i=1}^{q-1}g_i^{k_i}\right)\pi \quad \text{for some integers $k_i\in\mathbb{Z}_n$}
  \end{equation}
  since $g_i^{k_i}(x_1, \dots, x_i, \dots, x_{q-1}) = (x_1, \dots, \tau_n^{k_i}(x_i), \dots, x_{q-1})$.
\end{proof}

Among the set of solutions obtained above, the free energy may take different values but is always minimized when $g=\pi$. We have pointed out that the free energy $F(\pi)$ cannot be greater than $F(g_1)$, where the latter is the lower bound of the free energy suggested by the multiway cut conjecture, and, recalling Sec.~\ref{sec:n=2 single random tensor}, we have shown that a minimizer for the free energy when $n=2$ is obtained if and only if the lower bound in Lemma \ref{lem:m(g)} is always saturated. Hence, it seems reasonable to say that:

\begin{conjecture}\label{conjecture:higher n}
  The reflection permutation $\pi$ gives the true minimum for all $(q, n)$, i.e.
  \begin{equation}
    \min_{g\in \Sym(X)} F(g) = F(\pi)
  \end{equation}
\end{conjecture}

The structure of the minimizer from the conjecture above became apparent to us after running the \emph{SkyDiscover EvoX} optimization pipeline for algorithmic discovery \cite{skydiscover2026}.
As a consistency check, we have also numerically verified that Conjecture \ref{conjecture:higher n} is true for small $(q,n)$ up to $(q,n)=(3, 4)$.

\subsection{Implication for RTNs}
In this subsection, we touch on the implication of the existence of the replica-symmetry-breaking (RSB) element $\pi$ on more general RTNs, along the lines of Ref. \cite{Akella:2026bci}.
We consider RTNs defined on an isometric tiling on a two-dimensional Riemannian manifold $\mathcal{M}$ (with boundary). We will also assume that the tiling is dense enough so that one can approximate the area of minimal cuts by lengths of the corresponding geodesics in $\mathcal{M}$.
The most interesting case is perhaps when one takes $\mathcal{M}$ to be the hyperbolic Poincar\'e disk (e.g. Figure \ref{fig:TN_cuts_intro}), under which the RTN serves as a toy model for the standard AdS$_3$/CFT$_2$ dictionary in the vacuum setting.
However we stress that our result in this section is valid for any two dimensional manifold $\mathcal{M}$, and we expect that it can be readily generalized to higher dimensions.

At first glance, the failure of the multiway cut conjecture for a single random tensor does not necessarily imply the failure of the multiway cut conjecture in arbitrary RTNs.
However, we will see that the existence of the element $\pi$ allows one to construct solutions with lower energy than the minimal multiway cut configuration, provided that the R\'enyi index $n$ is large enough and the multiway cut has non-trivial intersections.

\begin{figure}
    \centering
    \includegraphics[width=0.75\linewidth]{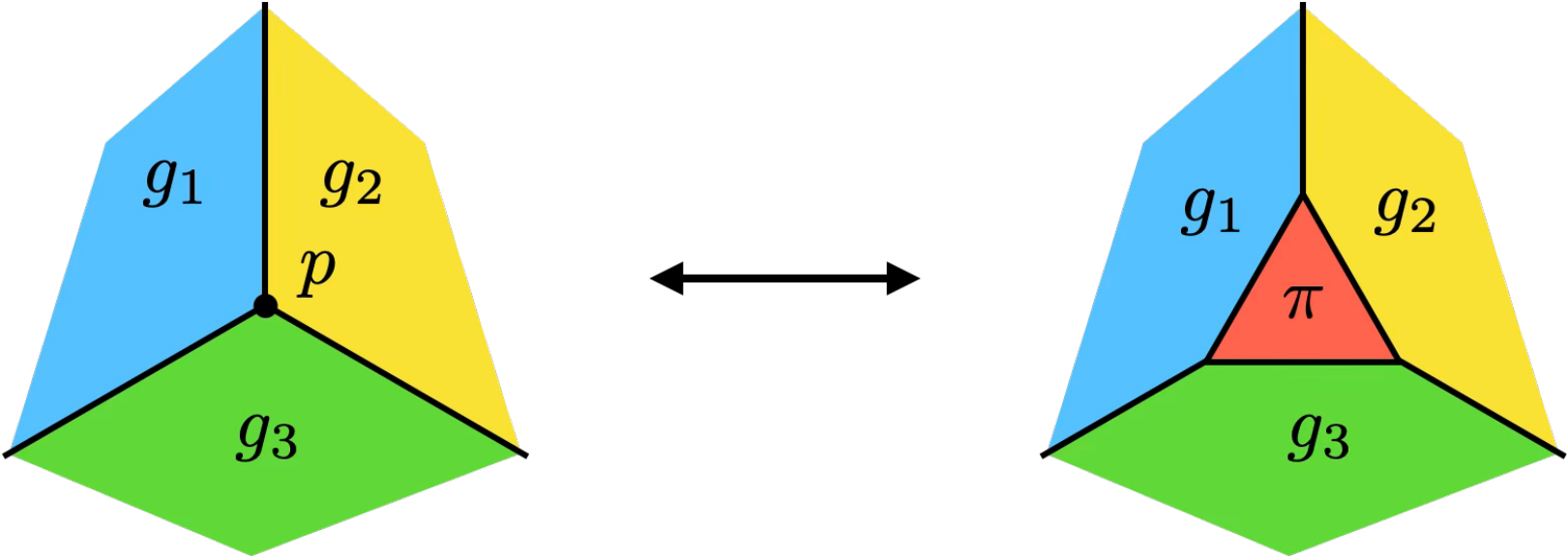}
    \caption{In the case where there exists a nontrivial trivalent junction $p$ for the minimal multiway cut, we consider two different configurations where they differ only near the vicinity of $p$: (left) The replica-symmetric saddle given by the minimal multiway cut. (right) The RSB saddle obtained by replacing a small triangular patch of elements near the junction to $\pi$.}
    \label{fig:RSB}
\end{figure}

To see how this works, consider first the tripartite case $q=3$. Suppose that we are given a minimal cut configuration $g$ of $\mathcal{M}$ such that the three domains meet at a Y junction for least one point $p$ in the bulk.
The interface between different domains must intersect at an equal angle of $2\pi/3$, see Figure \ref{fig:RSB} (left).\footnote{To see why, note that the multiway problem can be thought equally as the problem of finding the lowest energy condition of a tensioned ``soap film'' supported on wireframe defined on the boundary partitions. The equal-angle condition then follows from the force balance condition of the film at point $p$.}
Now zoom in around $p$. Locally, one can always approximate the metric in the neighborhood of $p$ by the flat metric. We consider an alternative configuration of $g$ where it agrees with the minimal cut configuration except for the region near $p$, where it contains a small triangular patch of the RSB element $\pi$, as shown in Figure \ref{fig:RSB} (right). The shape of the triangle depends on whether $n$ is even or odd which we discuss separately below.

Suppose that $n$ is odd. Then $d(\pi,g_i)=(n^2-1)/2$ for $i=1,2,3$. The $\pi$ patch is an equilateral triangle because $\pi$ is equidistant to all the terminal permutations. Suppose that the $\pi$ triangle has side lengths of $1$, the local contribution to the free energy for the Y junction and the $\pi$ solution can be worked out using simple trigonometry:
\begin{equation}
    F^{(3)}_Y = \sqrt{3}(n^2-n), \quad F^{(3)}_\pi = \frac{3}{2}(n^2-1).
\end{equation}
One can easily check that $F^{(3)}_Y>F^{(3)}_\pi$ for any odd integer $n\ge 7$.

Now let $n$ be even. Then $d(\pi,g_i)=n^2/2$ for $i=1,2$ and $d(\pi,g_3)=d(\pi,\id)=(n^2-4)/2$. Since the distances are not equal in this case, the shape of the $\pi$ patch is an isosceles triangle instead. Suppose that the base of the $\pi$ triangle has length $1$ and the legs have length $a$. Then
\begin{equation}
    F^{(3)}_Y(a) = \left(\sqrt{a^2-\frac{1}{4}}+\frac{\sqrt{3}}{2}\right)(n^2-n),\quad F^{(3)}_\pi(a)=\frac{1}{2}\left((1+2a)n^2-4\right).
\end{equation}
Maximizing the difference over $a$ one finds that $a_{\max} = \frac{n}{2\sqrt{2n-1}}$ and
\begin{equation}
    F^{(3)}_Y = \left(\frac{n-1}{2}\sqrt{\frac{1}{2n-1}}+\frac{\sqrt{3}}{2}\right)(n^2-n),\quad F^{(3)}_\pi=\frac{1}{2}\left(\left(1+\frac{n}{\sqrt{2n-1}}\right)n^2-4\right).
\end{equation}
Similarly, one can check that $F^{(3)}_Y>F^{(3)}_\pi$ for any even integer $n\ge 6$. Therefore we conclude that for $q=3$, the RSB saddle dominates for all integer $n\ge 6$.

\begin{figure}[th]
  \centering
  \includegraphics[width=.6\linewidth]{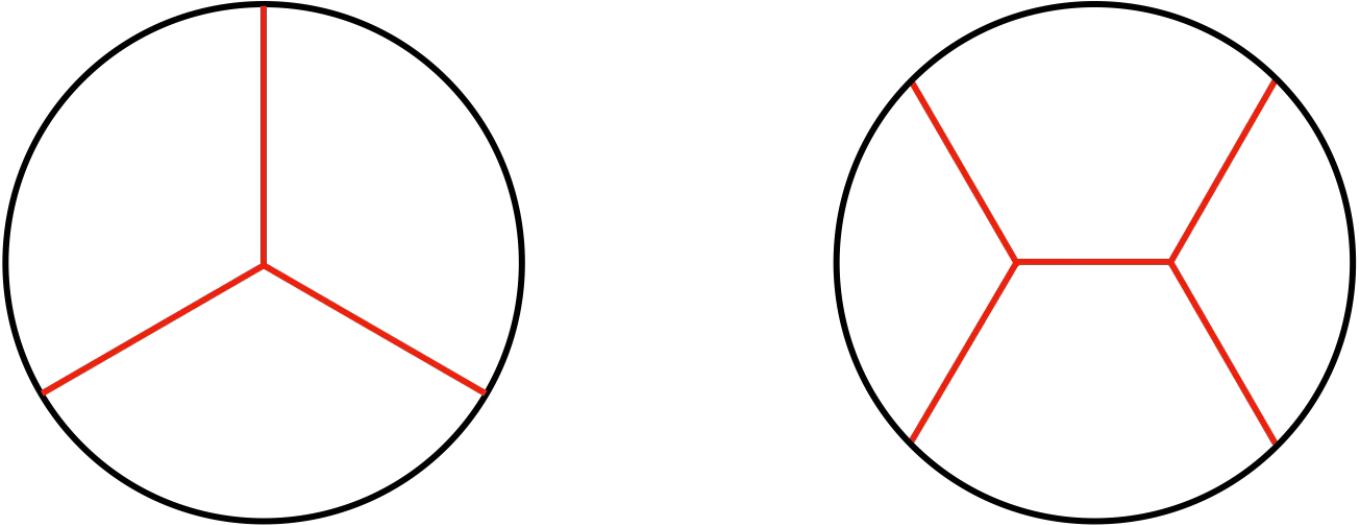}
  \caption{Examples of minimal multiway cuts for three and four boundary regions on a two-dimensional manifold. Note that the intersections of the cuts are always equal-angular trivalent vertices. Configurations with higher-valent vertices are never minimal.}
  \label{fig:Steiner}
\end{figure}

We now deal with the higher-partite cases $q>3$. Consider a configuration $g$ associated to a multiway cut of $\mathcal{M}$. The minimal multiway cut for higher partite settings are given in terms of \emph{Steiner trees} \cite{5a25f51b-2e25-3c6f-9c0a-0a63d8e9f411}, networks of geodesics that meet at equi-angular trivalent vertices, see Figure \ref{fig:Steiner}.
Consider any such trivalent vertex $p$ in the tree (provided that $p$ exists).
As before, we replace a small patch of permutations near $p$ by $\pi$ and denote the three neighboring permutations $g_1,g_2$ and $g_3$. Without loss of generality we can assume $g_3=\id$.\footnote{This is equivalent to translating $\pi'\to \pi' g_3^{-1}$ in the computation of the free energy.} Consider the permutation element $\pi'$ where
\begin{equation}
    \pi'(x_1,x_2,x_3,\cdots,x_{q-1})=(-x_1,-x_2,x_3,\cdots,x_{q-1}).
\end{equation}
In other words, $\pi'$ only reflects the first two coordinates of $x$ and keeps all the others invariant. Then
\begin{equation}
    d(\pi',g_i)=n^{q-3}\frac{n^2-1}{2}, \quad{i=1,2,3}
\end{equation}
for odd $n$, and
\begin{align}
    d(\pi',g_{1,2})=\frac{n^{q-1}}{2},
    \quad d(\pi',g_3)=d(\pi',\id)=n^{q-3}\frac{n^2-4}{2}
\end{align}
for even $n$. Note that $\pi'$ gives lower energy than the full reflector $\pi$. Repeating the same analysis as $q=3$ case, one finds
\begin{align}
    F^{(q)}_Y = n^{q-3}F^{(3)}_Y, \quad F^{(q)}_{\pi'} = n^{q-3}F^{(3)}_\pi, \quad q>3.
\end{align}
One immediately sees that $F^{(q)}_Y>F^{(q)}_{\pi'}$ for integer $n\ge 6$.
Therefore we arrive at the following conclusion:
\begin{proposition}[informal]
  Let $\mathcal{M}$ be two-dimensional Riemannian manifold and $\psi$ be an RTN state associated to a (dense enough) isometric tiling on $\mathcal{M}$.
  If $\partial \mathcal{M} = R_1 \sqcup \cdots \sqcup R_q$ is a $q$-partite ($q>2$) partition of the boundary such that the minimal multiway cut has at least one trivalent junction, then the R\'enyi multi-entropy $S^{(q)}_n(R_1:\cdots:R_q)_\psi$ does \emph{not} equal to the area of the minimal multiway cut for any integer $n\ge 6$.
\end{proposition}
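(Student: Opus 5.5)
The plan is to exhibit, for every integer $n\ge 6$, an admissible configuration whose free energy is strictly lower than that of the replica-symmetric configuration $g_*$ built from the minimal multiway partition. Since $S^{(q)}_n$ is governed by $\min_g F(g)$ through \eqref{eq:Z(g)} and \eqref{eq:Sq_RTN}, a strict decrease in this minimum shows that the multi-entropy does not equal (the appropriate multiple of) the minimal multiway cut area.

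\emph{Step 1: local setup.} Fix a trivalent junction $p$ of the minimal multiway cut, which exists by hypothesis. Let the three adjacent domains carry terminal permutations $g_a,g_b,g_c$, and use left-invariance of the Cayley distance to translate so that $g_c=\id$. Relabeling coordinates of $X=\mathbb{Z}_n^{q-1}$, we may assume $g_a=g_1$ and $g_b=g_2$. The three interfaces at $p$ meet at angles $2\pi/3$, by the balance condition for minimal Steiner networks. Take a ball of radius $\epsilon$ around $p$. Since the tiling is dense and isometric, edge counts crossing curves approximate geodesic lengths times the local density, and the metric is flat up to $O(\epsilon^2)$ relative corrections.

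\emph{Step 2: the competitor.} Inside the ball, replace $g_*$ on a small triangle $T$ (side scale $\delta\ll\epsilon$) by $\pi'$ if $q>3$, or by $\pi$ if $q=3$. The triangle is equilateral for odd $n$ and isosceles with leg ratio $a=n/(2\sqrt{2n-1})$ for even $n$. Outside $T$, keep $g_*$. The only change in energy comes from the interfaces inside $T$'s neighbourhood. The removed Y-segments cost $\delta F_Y^{(q)}$ and the new triangle sides cost $\delta F_{\pi'}^{(q)}$, up to $o(\delta)$ discretization and curvature errors. These costs use the distances computed above: $d(\pi',g_i)=n^{q-3}(n^2-1)/2$ for odd $n$, and $d(\pi',g_{1,2})=n^{q-1}/2$, $d(\pi',\id)=n^{q-3}(n^2-4)/2$ for even $n$. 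They also use $d(g_i,g_j)=n^{q-2}(n-1)$. Each of these distances is a direct cycle count: the map $x\mapsto -x$ composed with a translation has fixed points exactly at solutions of $2x_k=c$.

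\emph{Step 3: the inequality.} I must show $F_Y^{(3)}>F_\pi^{(3)}$ for $n\ge 6$; the general $q$ case follows from the overall factor $n^{q-3}$. For odd $n$ this is $\sqrt3(n^2-n)>\tfrac32(n^2-1)$, which is equivalent to $(2\sqrt3-3)n^2-2\sqrt3 n+3>0$. This holds for $n\ge 7$, since the larger root is about $6.46$. For even $n$, compare the optimized expressions: the leading terms are $\tfrac{\sqrt3}{2}n^2$ versus $\tfrac12 n^2$, and the subleading terms scale as $n^{5/2}/\sqrt{2}$ on both sides. I will check the difference directly at $n=6$ (numerically, $F_Y\approx 26.0+ \dots$) and prove monotonicity for $n\ge 6$ by elementary calculus on the difference as a function of real $n$. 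I expect this even case to be the fiddliest computational point.

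\emph{Main obstacle.} The hardest part is making the continuum patch argument rigorous on a discrete tiling. The first-order energy change must really be $\delta(F_\pi-F_Y)$ plus an error $o(\delta)$, and the multiway-cut value must be exactly $F(g_*)$. I would handle this by first taking the tiling dense, so that cut weights converge to $\mathcal M$-lengths, and then choosing $\delta$ small but fixed. The strict gap $F_Y-F_\pi>0$ absorbs the errors. Since $\min F\le F(\text{competitor})<F(g_*)$, which in turn equals the scaled cut area $n^{q-2}(n-1)\mathcal A$, the multi-entropy differs from the area prescription.
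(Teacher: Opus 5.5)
Your proposal is correct and follows the paper's argument essentially step for step: the same $\pi$ (resp.\ $\pi'$ for $q>3$) patch at a trivalent junction, equilateral for odd $n$ and isosceles with $a=n/(2\sqrt{2n-1})$ for even $n$, the same comparison of $F_Y$ with $F_\pi$, and the overall factor $n^{q-3}$; your order-of-limits remarks on discretization are extra care that the paper omits. One correction to your even-$n$ heuristic: the $n^{5/2}$ terms are the leading ones and they cancel, leaving $F_Y-F_\pi=\tfrac{\sqrt3-1}{2}n^2-\tfrac{\sqrt3}{2}n-\tfrac{n}{2}\sqrt{2n-1}+2$, which at $n=6$ equals $15\sqrt3-16-3\sqrt{11}\approx 0.03$, so the exact check at $n=6$ plus monotonicity that you plan is genuinely needed and not a formality.
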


In the case where the underlying metric space is hyperbolic, the size of the $\pi$ patch is finite and is controlled by the local curvature around point $p$. The appearance of the element $\pi$ is thus local in nature, whose effect can be thought of as replacing the plain vertices in the minimal geodesic network by corresponding ``dressed''  vertices, each contributing an negative energy that depends on the local curvature.

We emphasis that the our results in this subsection only guarantees the existence for a RSB saddle for $n\ge 6$. For $2<n< 6$, our construction does not give a more favorable solution, but this does not exclude the possibility for such saddles. For example, if Conjecture \ref{conjecture:higher n} happens to be false and there exists a lower energy minimizer (say $\sigma$) for the single random tensor, then one can replace $\pi'\to \sigma$ in our construction and we expect the bound can be further lowered. We also expect the existence of ``global '' RSB solutions not considered here. The existence of such saddles depends on the underlying geometry and requires an individual analysis for each separate case.

\section{Discussion}
\label{sec:discussion}

In this paper we started with the simple goal of understanding multi-entropy in random tensor networks. What we have found is nonetheless surprising: the nature of the solution is drastically different depending on the R\'enyi index $n$: For $n=2$ the multi-entropy is given by the minimum multiway cut. For $n>2$ this is not true --- the existence of the special element $\pi$ spoils the replica symmetry and renders the multiway cut saddle subdominant. Below we comment on several aspects related to our work.

\subsection*{The specialness of $n=2$}
The key property that makes our proof work for the $n=2$ case is that the terminal permutations are all products of two-element transpositions. This allows the one to simultaneously saturate the estimates given by the number of moved points (Lemma \ref{lem:m(g)}) and the incompatibility condition (i.e. $g(x)=g_i(x)$ for at most one $i$).
For higher $n$ it is not possible to simultaneously saturate both estimates, but both our Conjecture \ref{conjecture:higher n} and numerics seem to suggest that the saturation of Lemma \ref{lem:m(g)} is more important.

It is also interesting to compare our results to the full gravitational counterpart \cite{Gadde:2024taa}: In both RTN and gravity calculations, we see explicit breakdown of replica symmetry for higher R\'enyi indices. In both cases $n=2$ seems to be special in the sense that it is the only R\'enyi index that admits full replica symmetric saddles (in the multipartite case $q \geq 3$). In RTN this saddle allows us to directly recast the $n=2$ R\'enyi multi-entropy in terms of the minimal multiway cut area. In gravity the situation is more complicated, since the bulk solution involves cosmic branes \cite{Dong:2016fnf,Dong:2023bfy}, and one must account for the backreaction to the background geometry.
However it seems that this is as close as one can get to a geometric dual in the same flavor of the RT formula.

Our result provides concrete evidence for the existence of genuine multi-partite entanglement for (holographic) RTN states. In Refs. \cite{Iizuka:2025bcc,Iizuka:2025ioc,Iizuka:2025caq}, it was argued that most holographic states necessarily contain large ($O(1/G_N)$) amounts of genuine multi-partite entanglement. The argument typically involves a bulk evaluation of the $q$-partite \emph{genuine multi-entropy} \cite{Iizuka:2025ioc}, a family of probes built from linear combinations of (R\'enyi) multi-entropies that are able to detect the existence of genuine multi-partite entanglement. The evaluation features extensive use of the multiway cut conjecture for the multi-entropy. In gravity, this requires the questionable $n\to1$ analytic continuation. One can instead work with $n=2$, but the calculation becomes difficult because of the back-reaction from the cosmic branes. However for RTN states, as the multiway cut conjecture has been proven for $n=2$, the argument simply goes through and it is possible to rigorously demonstrate that RTN states possess large amounts of genuine multipartite entanglement.

\subsection*{Other entanglement measures with multiway cut property}

The multi-entropy quantity is formulated as a multi-invariant of a tensor defined via the terminal permutations $g_1, \ldots, g_q$ from Definition~\ref{def:mutli_ent}. The permutations $g_i$ are equidistant and satisfy a special property where for every $i\ne j$, the relative permutation $g^{-1}_ig_j$ is a \emph{fixed-point-free involution} on $X$.
In fact, it is possible to work with different families of twist operators $\sigma_1,\cdots,\sigma_p$. As long as the fixed-point-free property is still satisfied by the new family\footnote{One can verify that the equidistant condition follows if one requires $\sigma_i^{-1}\sigma_j$ to be a fixed-point-free involution for all pairs $i\ne j$.}, the conclusion of our main result, namely Theorem~\ref{thm:RTN}, will still be valid. Stated more precisely:
\begin{proposition}
  Let $p> 2$ be an integer and $\Sigma=(\sigma_1,\cdots,\sigma_p): \sigma_i\in \Sym(X)$ be a family of permutations. Consider the problem of determining the multi-invariant $\mathcal{E}(\sigma_1,\cdots,\sigma_p)$ on an RTN state defined on a graph $G=(V,E)$ (with weight $w:V\to \mathbb{R}_+$) with boundary vertices $R_i,\cdots,R_p\subset V$:
  \begin{align}
    &\min_{g:V\to \Sym(X)} F_\Sigma(g), \quad  F_\Sigma(g):=\sum_{e=(u,v)\in E} w(e)d(g(u),g(v)),
  \end{align}
  subject to
  \begin{equation}
    g(v) = \sigma_i,\quad v\in R_i.
  \end{equation}
  If for every pair $i\ne j$, the relative permutation $\sigma_i^{-1}\sigma_j$ is a \emph{fixed-point-free involution}, i.e.
  \begin{equation}
    \sigma_i^{-1}\sigma_j(x) \ne x ~\forall x\in X\quad \text{and} \quad (\sigma_i^{-1}\sigma_j)^2 = \id.
  \end{equation}
  Then the minimal value of $F_\Sigma$ is given by the multiway cut across $G$:
  \begin{align}
    \min_g F_\Sigma(g) = \frac{|X|}{2} \mathcal{A}(R_1:\cdots:R_p)
  \end{align}
  Moreover, a minimizer is obtained by taking any minimum multiway partition $V=\Gamma_1\sqcup\cdots \sqcup \Gamma_p$ and setting
  \begin{equation}
    g(v)=\sigma_i, \quad v\in \Gamma_i.
  \end{equation}
\end{proposition}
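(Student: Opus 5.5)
The plan is to repeat the proof of Theorem~\ref{thm:RTN} essentially verbatim, with the twist operators $g_i$ replaced by the $\sigma_i$. I would check that the only two properties of the $g_i$ used there are (a) for each $x\in X$ the values $\sigma_1(x),\dots,\sigma_p(x)$ are pairwise distinct, and (b) $d(\sigma_i,\sigma_j)=|X|/2$ for $i\ne j$. Both follow from the fixed-point-free involution hypothesis.

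For (a): if $\sigma_i(x)=\sigma_j(x)$ with $i\ne j$, then $\sigma_i^{-1}\sigma_j(x)=x$, contradicting fixed-point-freeness. For (b): since $\sigma_i^{-1}\sigma_j$ is a fixed-point-free involution, all of its cycles are transpositions and it moves every point. So it consists of $|X|/2$ disjoint transpositions; in particular $|X|$ is even. Hence $d(\sigma_i,\sigma_j)=d(\sigma_i^{-1}\sigma_j,\id)=|X|-|X|/2=|X|/2$. This equals $\tfrac12 m(\sigma_i^{-1}\sigma_j)$, so the bound in Lemma~\ref{lem:m(g)} is tight, consistent with Remark~\ref{rem:m(g)}.

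The lower bound is then as follows. For any admissible $g$, I apply Lemma~\ref{lem:m(g)} edgewise exactly as in Lemma~\ref{lem:slice}; that lemma's proof never used the specific form of the boundary permutations. This gives
\begin{equation*}
F_\Sigma(g)\ge \tfrac12\sum_{x\in X}\sum_{e=(u,v)\in E}w(e)\mathbf 1_{g(u)(x)\ne g(v)(x)} .
\end{equation*}
For fixed $x$, the slice $s_x(v)=g(v)(x)$ takes the boundary values $\sigma_i(x)$ on $R_i$, and these are pairwise distinct by (a). Lemma~\ref{lem:labeling} with $Y=X$ and $y_i=\sigma_i(x)$ then bounds each inner sum below by $\mathcal A(R_1:\cdots:R_p)$. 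Summing over the $|X|$ slices gives $F_\Sigma(g)\ge \frac{|X|}{2}\mathcal A$.

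For the upper bound, I take a minimum multiway partition and set $g_*(v)=\sigma_i$ on $\Gamma_i$. This is admissible. The only nonzero terms of $F_\Sigma(g_*)$ come from cut edges joining $\Gamma_i$ to $\Gamma_j$ with $i\ne j$, each contributing $w(e)\,d(\sigma_i,\sigma_j)=w(e)|X|/2$ by (b). Summing gives $F_\Sigma(g_*)=\frac{|X|}{2}\mathcal A$, which matches the lower bound and proves both the value of the minimum and the claim about the minimizer.

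No step is a serious obstacle. The only point needing care is confirming that Lemmas~\ref{lem:slice} and \ref{lem:labeling} are stated generally enough, that is, for arbitrary boundary permutations and arbitrary distinct colours. They are, since their proofs only use that each $g(v)$ is a permutation of $X$ and that the colours are distinct. The equidistance in (b), noted in the footnote, is the one genuinely new ingredient, and it is what makes the terminal-valued configuration saturate the bound uniformly across all pairs of parts.
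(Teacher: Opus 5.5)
Your proposal is correct and is the argument the paper intends; the paper leaves this proof as an exercise, and it amounts to rerunning the proof of Theorem~\ref{thm:RTN}. As you do, fixed-point-freeness makes the slice boundary values $\sigma_i(x)$ pairwise distinct, so Lemma~\ref{lem:labeling} applies, and the fixed-point-free involution property gives $d(\sigma_i,\sigma_j)=|X|/2$, so the terminal-valued configuration saturates the lower bound.
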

The proof of the proposition is left as an exercise for the reader.

As an illustration, one can consider the whole family of translations on $X = \mathbb Z_2^{q-1}$: for all $z \in \mathbb Z_2^{q-1}$, define the permutation $\sigma_z(x) = x + z$. This family of $2^{q-1}$ permutations satisfies the property above: for all $z \neq w$, $\sigma_z^{-1}\sigma_w = \sigma_{z+w}$, which is a fixed-point-free involution on $X$. The multi-entropy case corresponds to a subfamily of size $q$.

Lastly, we stress that the equidistant condition alone is not sufficient to guarantee the result. For example, the twist operators for the $n=3$ R\'enyi negativity
\begin{equation}
  X=\{1,2,3\},\quad g_1 = (123),\quad g_2 = (321),\quad g_3 = \id
\end{equation}
are equidistant since $d(g_1,g_2)=d(g_2,g_3)=d(g_3,g_1) = 2$. However, the relative permutations $g_i^{-1}g_j$ are all 3-cycles so the involutive condition fails. Indeed, the value of the R\'enyi negativity is not given by a multiway cut, see e.g. \eqref{eq:negativity_sol} and Figure \ref{fig:TN_cuts2}.

\section*{Acknowledgments}
We thank Yuan-Tang Chou for help on the numerical analysis. S.L.~would like to thank Universit\'e de Toulouse, CERN and National Tsing Hua University for hospitality where part of this work was completed.
M.H.~and I.N.~were supported by the ANR project \href{https://www.ceremade.dauphine.fr/dokuwiki/anr-tagada:start}{TAGADA} grant number ANR-25-CE40-5672.

\small
\bibliographystyle{ourbst}
 \bibliography{post.bib}

\end{document}